\newtheorem{defn}{Definition}
\newtheorem{fact}{Fact}
\newtheorem{thm}{Theorem}[section]
\newtheorem{cor}[thm]{Corollary}
\newtheorem{prop}{Proposition}
\newtheorem{lem}[thm]{Lemma}
\newtheorem{conj}[thm]{Conjecture}
\newtheorem{constr}[thm]{Construction}
\newtheorem{note}{Remark}
\newtheorem{example}{Example}
\newcommand{\bit}{\begin{itemize}}
\newcommand{\eit}{\end{itemize}}
\newcommand{\bcor}{\begin{cor}}
\newcommand{\ecor}{\end{cor}}
\newcommand{\beq}{\begin{equation}}
\newcommand{\eeq}{\end{equation}}
\newcommand{\beqn}{\begin{equation*}}
\newcommand{\eeqn}{\end{equation*}}
\newcommand{\bea}{\begin{eqnarray}}
\newcommand{\eea}{\end{eqnarray}}
\newcommand{\bean}{\begin{eqnarray*}}
\newcommand{\eean}{\end{eqnarray*}}
\newcommand{\ben}{\begin{enumerate}}
\newcommand{\een}{\end{enumerate}}
\newcommand{\bdefn}{\begin{defn}}
\newcommand{\edefn}{\end{defn}}
\newcommand{\bnote}{\begin{note}}
\newcommand{\enote}{\end{note}}
\newcommand{\bprop}{\begin{prop}}
\newcommand{\eprop}{\end{prop}}
\newcommand{\blem}{\begin{lem}}
\newcommand{\elem}{\end{lem}}
\newcommand{\bthm}{\begin{thm}}
\newcommand{\ethm}{\end{thm}}
\newcommand{\bconj}{\begin{conj}}
\newcommand{\econj}{\end{conj}}
\newcommand{\bconstr}{\begin{constr}}
\newcommand{\econstr}{\end{constr}}
\newcommand{\bpf}{\begin{proof}}
\newcommand{\epf}{\end{proof}}
\begin{document}

\title{Communication Cost for Updating Linear Functions when Message Updates are Sparse: Connections to Maximally Recoverable Codes}
\author{N. Prakash and Muriel M{\'{e}}dard
\thanks{N. Prakash and Muriel M{\'{e}}dard are with the Research Laboratory of Electronics, Massachusetts Institute of Technology, USA (email: \{prakashn,  medard\}@mit.edu).}
\thanks{The results in this article were presented in part as an invited talk at 53rd Annual Allerton Conference on Communication, Control, and Computing, Sept 29-Oct 2, 2015, Allerton Park and Retreat Center, Monticello, IL, USA. The work is in part supported by the AFOSR award  FA9550-14-1-0403.}
}
\maketitle

\begin{abstract}
We consider a communication problem in which an update of the source message needs to be conveyed to one or more distant receivers that are interested in maintaining specific linear functions of the source message. The setting is one in which the updates are sparse in nature, and where neither the source nor the receiver(s) is aware of the exact {\em difference vector}, but only know the amount of sparsity that is present in the difference-vector. Under this setting, we are interested in devising linear encoding and decoding schemes that  minimize the communication cost involved. We show that the optimal solution to this problem is closely related to the notion of maximally recoverable codes (MRCs), which were originally introduced in the context of coding for storage systems. In the context of storage, MRCs guarantee optimal erasure protection when the system is partially constrained to have local parity relations among the storage nodes. In our problem, we show that optimal solutions exist if and only if MRCs of certain kind (identified by the desired linear functions) exist.  We consider point-to-point and broadcast versions of the problem, and identify connections to MRCs under both these settings. For the point-to-point setting, we show that our linear-encoder based achievable scheme is optimal even when non-linear encoding is permitted. The theory is illustrated in the context of updating erasure coded storage nodes. We present examples based on modern storage codes such as the minimum bandwidth regenerating codes.
\end{abstract}

\section{Introduction}\label{sec:sys_model}

We consider a communication problem in which an update of the source message needs to be communicated to one or more distant receivers that are interested in maintaining specific functions of the actual source message. The setting is one in which the updates are sparse in nature, and where neither the source nor receivers are aware of the \textit{difference-vector}. Under this setting, we are interested in devising encoding and decoding schemes that allow the receiver to update itself to the desired function of the new source message, such that the communication cost is minimized.  

\subsection{Pont-to-Point Setting}

The system for the point-to-point case is shown in Fig. \ref{fig:sys_model}. Here, the $n$-length column-vector $X^n \in \mathbb{F}_q^n$ denotes the initial source message, where $\mathbb{F}_q$ denotes the finite field of $q$ elements. The receiver maintains the linear function $AX^n$, where $A$ is an $m \times n$ matrix, $m \leq n$, over $\mathbb{F}_q$. Let the updated source message be given by $X^n+E^n$, where $E^n$ denotes the difference-vector, and is such that $\text{Hamming wt.}(E^n) \leq \epsilon$, $0 \leq \epsilon \leq n$. We say that the vector $E^n$ is $\epsilon$-sparse. We consider linear encoding at the source using the $\ell \times n$ matrix $H$.  We assume that the source is aware of the function $A$ and the parameter $\epsilon$, but does not know the vector $E^n$. The goal is to encode $X^n + E^n$ at the source such that the receiver can update itself to $A(X^n  +E^n)$, given the source encoding and $AX^n$. Assuming that the parameters $n, q, \epsilon$ and $A$ are fixed, we define the communication cost as the parameter $\ell$, which is the number of symbols generated by the linear encoder $H$. The goal is to design the encoder and decoder so to minimize $\ell$. We assume a zero-probability-of-error, worst-case-scenario model; in orther words, we are interested in schemes which allow error-free decoding for every $X^n \in \mathbb{F}_q^n$ and $\epsilon$-sparse $E^n \in \mathbb{F}_q^n$.

\begin{figure*}[h]
	\centering
	\includegraphics[width=5in]{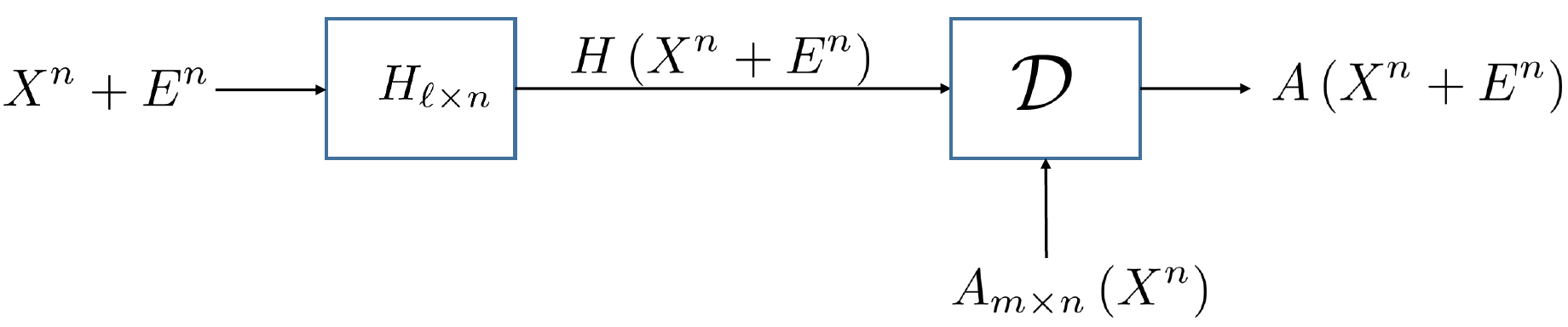}
	\caption{The system model for function update in the point-to-point setting.}
	\label{fig:sys_model}
\end{figure*}

The problem is in part motivated by the setting of distributed storage systems (DSSs)  that use linear erasure codes for data storage, and where the underlying uncoded file is subject to frequent but tiny updates. Modern day DSSs support multiple users to access and update any single file. Users often do not communicate with each other. In this scenario, it is the case that a user applies his updates to a local copy of the file that is possibly different from the one present in the system (since the file might have been updated by other users). Consequently, the user is unaware of the precise difference vector between the file which the user wants to upload and the file present in the DSS. One possible solution is for the user to first learn the most recent file version that is present in the system, and then apply the update on that file. Such a solution has two limitations. Firstly, the solution is blocking in the sense that any time, at most one user can update the file, and any other user must wait until either previous users complete or time out. This is not a preferred approach when a large number of users need to be supported. Secondly, for the user to learn the most recent file, the user often needs to decode an erasure coded file (or a part of it) that can be substantially larger than the size of the update, and this can be computationally expensive. This is because for cost-effective erasure coding, DSSs demand a certain minimum size for the data file (or part of it) that is getting erasure coded. For instance, Microsoft Giza~\cite{giza}, which is a geo-distributed file storage system, considers erasure coding only those data stripes that are larger than 4 MB. In this case, the user at least needs to decode a 4 MB file chunk even if the update is of the order of a few kBs. 

In this work, we propose a solution that does not have either of the above two limitations. Our solution works based on the assumption that the user, who wants to update the coded file, has knowledge of a bound on size of the update. We consider updates as substitutions in the file, which is a reasonable assumption in scenarios when the file-size is much larger than the quantum of updates. With respect to Fig. \ref{fig:sys_model}, $X^n$ represents the version of the file that is stored in the DSS. The user  has access to a locally obtained version $X^n + E_1^n$ that is not stored in the DSS, and updates it to $X^n + E_1^n + E_2^n = X^n + E^n$, where $E^n = E_1^n + E_2^n$ represents the overall update that the file has undergone. Further, our interest is in DSSs like peer-to-peer DSS where the user  establishes multiple connections with various nodes in order to update their coded data.  Figure \ref{fig:sys_model} illustrates a scenario in which the user updates the coded data corresponding to one of the storage nodes that uses matrix $A$ as its erasure coding coefficients. We illustrate the usage of the matrix $A$ via the following example.

\vspace{0.1in}

\begin{example} \label{ex:moti}
Consider a DSS that uses an $[N, K]$ linear code for storing data across $N$ storage nodes. The data file $X^n$ is striped before encoding, the various stripes are individually encoded and stacked against each other to get the overall coded file. Let ${\bf a} = [a_1 \ a_2 \ \cdots a_K ], a_i \in \mathbb{F}_q$ denote the coding coefficients corresponding to one of the storage nodes. Assuming that the first $K$ symbols of the vector $X^n$ correspond to the first stripe, symbols $K+1$ to $2K$ correspond to the second stripe and so on, the overall coding matrix $A$ (taking into account all stripes) is given by
\begin{eqnarray}
A & = & \left[ \begin{array}{cccc}
{\bf a} & & & \\ & {\bf a} & &  \\ & & \ddots & \\ & & & {\bf a}
\end{array} \right].
\end{eqnarray} 
Here we assume that $K|n$, and thus there are $m = \frac{n}{K}$ stripes in the file. The parameter $m$ also represents the number of coded symbols stored by any one of the $N$ nodes. The matrix $A$ is of size $m \times mK$. The scenario of interest is one in which the number of coded symbols $m$ is much larger than the sparsity parameter $\epsilon$, which is the number of symbols that get updated in the uncoded file. The goal is to design the encoder $H$ and the decoder $\mathcal{D}$  that minimizes the communication cost. 
\end{example}

\subsection{Broadcast Setting}

The system model naturally extends to broadcast settings consisting of a single source and multiple destinations that are interested in possibly different functions of the source message. In our work, we study the broadcast setting for the case of two destinations (see  Fig. \ref{fig:sys_model_diff_function}), where the receivers initially hold functions $AX^n$ and $BX^n$, and get updated to $A(X^n + E^n)$ and $B(X^n + E^n)$, respectively. The broadcast setting is motivated by settings in which two of the storage nodes in a peer-to-peer DSS are nearly co-located, as far as a (distant) user is concerned. In this case, the question of interest is to identify whether it is possible to send one encoded packet $H(X^n + E^n)$ to simultaneously update the coded data of both the storage nodes. 
We will next present two examples of the broadcast setting. As we will see later in this document, the first example is an instance in which broadcasting does not help to reduce communication cost, i.e., it's optimal to individually update the two destinations. The second example is an instance in which it is beneficial to broadcast than individually update the two destinations.

\begin{figure*}[h]
	\centering
	\includegraphics[width=5in]{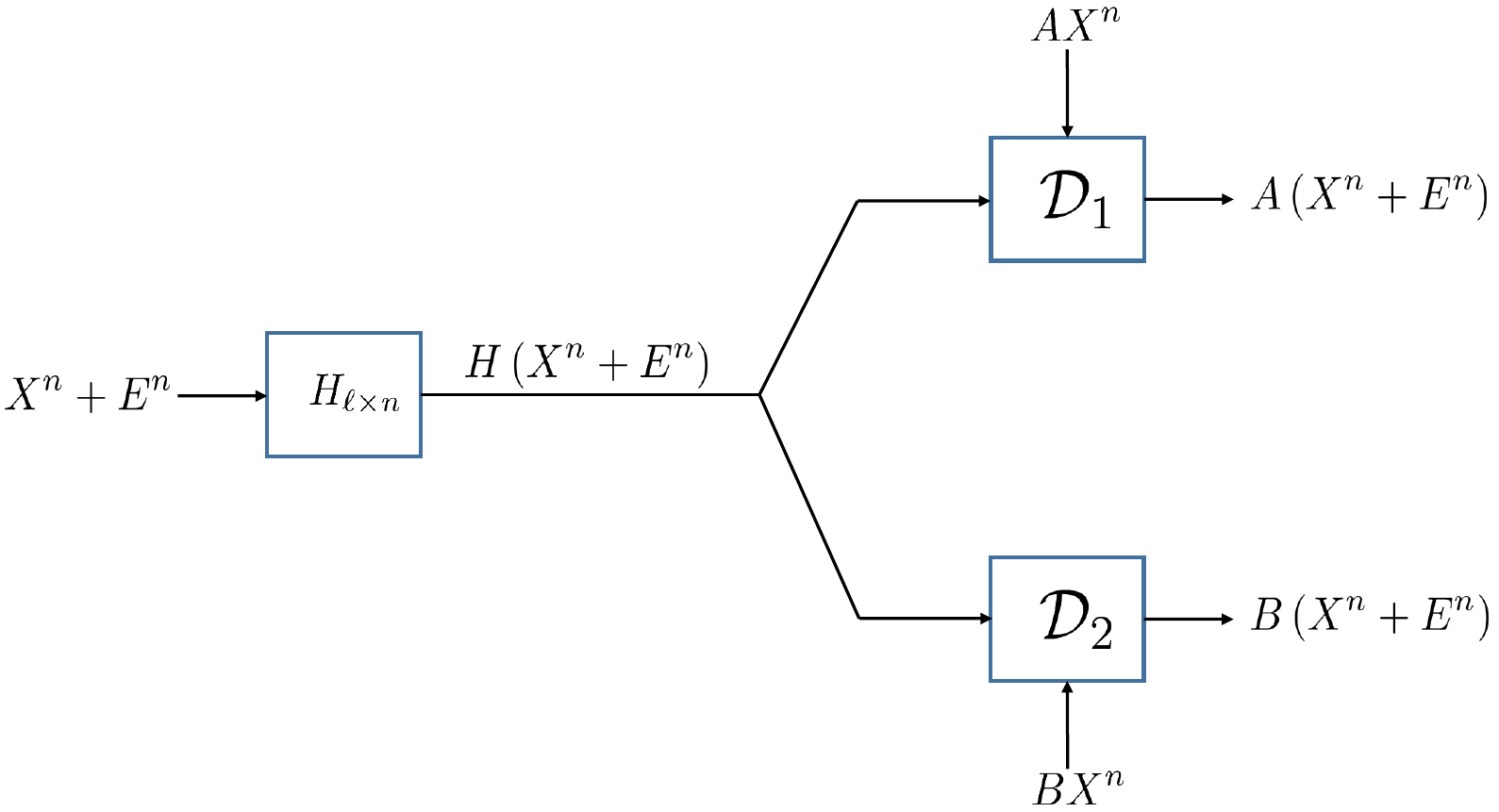}
	\caption{The system model for function updates in the broadcast setting involving two receivers. }
	\label{fig:sys_model_diff_function}
\end{figure*}

\vspace{0.1in}

\begin{example} \label{ex:bcast_nouse}
As in Example \ref{ex:moti}, we consider striping and encoding of a data file by an $[N, K]$ linear code over $\mathbb{F}_q$ with $K \geq 2$. We apply the broadcast model to simultaneously update the contents of $2$ out of the $N$ storage nodes. Let us assume that the $K$-length coding vectors associated with two of the storage nodes are given by ${\bf a} = [a_1 \ a_2 \ \ldots a_K], {\bf b} = [b_1 \ b_2 \ \ldots b_K], a_i, b_i \in \mathbb{F}_q, {\bf a} \neq {\bf b}$. The overall coding matrices $A$ and $B$ corresponding to the $m$ stripes are then given by 
	\begin{eqnarray} \label{eq:ABspecial}
	A \  = \ \left[ \begin{array}{cccc}
	{\bf a} & & & \\ & {\bf a} & &  \\ & & \ddots & \\ & & & {\bf a}
	\end{array} \right] \hspace{-0.1in} & \text{and} & \hspace{-0.1in}
	B \  = \ \left[ \begin{array}{cccc}
	{\bf b} & & & \\ & {\bf b} & &  \\ & & \ddots & \\ & & & {\bf b}
	\end{array} \right],
	\end{eqnarray} 
	where both $A$ and $B$ have size $m \times mK$. We will see later that for this example,  in order to achieve the optimal communication cost, the source must transmit as though it is individually updating the two storage nodes, i.e., there is no added benefit due to broadcasting.
\end{example}

\vspace{0.1in}

\begin{example} \label{ex:bcast_use}
	In this example, we consider striping and encoding of a data file by a \textit{minimum bandwidth regenerating} (MBR) code~\cite{dimakis} that is obtained via the product-matrix  construction described in \cite{prod_matrix}. Regenerating codes~\cite{dimakis} are codes specifically designed for data storage, and they allow the system designer to trade-off storage overhead against \textit{repair bandwidth} for a given level of fault tolerance. MBR codes have the best possible repair bandwidth (at the expense of storage overhead) for a given level of fault tolerance. We first give a quick introduction to a specific instance of an MBR code in \cite{prod_matrix}, and use that in our broadcast setting.
	
	\vspace{0.05in}
	
	{\em The MBR code:} Consider an $[N = 5, K = 3, D = 4] (\alpha = 4, \beta = 1)$ MBR code over $\mathbb{F}_q$ that encodes $9$ symbols into $20$ symbols and stores across $N = 5$ nodes such that each node holds $\alpha = 4$ symbols. The code has the property that the contents of any $K = 3$ nodes are sufficient to reconstruct the $9$ uncoded symbols. Also, the contents of any one of the $N = 5$ nodes can be recovered by connecting to any set of $D = 4$ other nodes\footnote{In our example, one connects to all the remaining $N -1 = 4$ nodes, since $D = 4$.} and downloading $\beta = 1$ symbols from each of them. The latter property is known as the node-repair property of the regenerating code. Let $[m_1, \ldots, m_9]$ denote the vector of $9$ message symbols, where $m_i \in \mathbb{F}_q$. The encoding is described by a product-matrix construction as follows:
	\begin{eqnarray}
		C_{N \times \alpha} & = & \Psi_{N \times D} M_{D \times \alpha} \\
		& = & \left[\begin{array}{c} \underline{\psi}_1 \\  \underline{\psi}_2 \\ \underline{\psi}_3 \\ \underline{\psi}_4 \\ \underline{\psi}_5 \end{array} \right] \left[ \begin{array}{cccc} m_1 & m_2 & m_3 & m_7 \\ m_2 & m_4 & m_5 & m_8 \\ m_3 & m_5 & m_6 & m_9 \\ m_7 & m_8 & m_9 & 0 \end{array} \right]. \label{eq:MBR_prod_mtx}
	\end{eqnarray}
	Here, $M$ denotes the $D \times \alpha$ message matrix whose entries are populated from among the $9$ message symbols, in a specific manner, as shown in \eqref{eq:MBR_prod_mtx}. Note that the matrix $M$ is symmetric. The $N \times D$ encoding matrix $\Psi$ can be chosen as a Vandermonde matrix under the product-matrix framework. The vector $\underline{\psi}_i, 1 \leq i \leq 5$ denotes the $i^{\text{th}}$ row of $\Psi$. In this example, we assume that $\Psi$ is given as follows:
	\begin{eqnarray}
		\Psi & = & \left[ \begin{array}{cccc} 1 & \gamma & \gamma^2 & \gamma^3 \\  1 & \gamma^2 & \gamma^4 & \gamma^6 \\ 1 & \gamma^3 & \gamma^6 & \gamma^9 \\ 1 & \gamma^4 & \gamma^8 & \gamma^{12} \\ 1 & \gamma^5 & \gamma^{10} & \gamma^{15} \end{array} \right],
	\end{eqnarray}
	where we pick $\gamma$ as a primitive element in $\mathbb{F}_q$. The $N \times \alpha$ matrix $C$ is the codeword matrix, with the $i^{\text{th}}$ row representing the contents that get stored in the $i^{\text{th}}$ node, $1 \leq i \leq 5$. 
	
	\vspace{0.05in}
		
	{\em Striping using the MBR code, and the associated coding matrices $A$ and $B$:}
	As before, we let $X^n$ to denote the uncoded data file, which is divided into $m$ stripes each of length $9$ symbols. Also, recall that in our model of striping, the first stripe (for the current example) consists of the first $9$ symbols of $X^n$, the second stripe consists of symbols ${X_{10}, \ldots, X_{18}}$, and so on. The length $n$ is given by $n=9m$. Consider the case where we use the broadcast model to update the contents of the first two nodes. The overall coding matrices $A$ and $B$, both of size $4m \times 9m$, corresponding to the first and the second node are respectively given by 
	
	\begin{eqnarray}
		A  =  \left[ \begin{array}{cccc} A_1 & & & \\ & A_1 & & \\ & & \ddots & \\ & & & A_1 \end{array} \right]\hspace{-0.1in}  & , \hspace{-0.1in} & B  =  \left[ \begin{array}{cccc} B_1 & & & \\ & B_1 & & \\ & & \ddots & \\ & & & B_1 \end{array} \right], \nonumber
	\end{eqnarray}
	where 
	
		\begin{equation}
			A_1 =  \left[ \begin{array}{ccccccccc} 1 & \gamma & \gamma^2 & & & & \gamma^3 & &  \\ 
				& 1 & & \gamma & \gamma^2 & & &\gamma^3 &   \\ 
				& & 1& & \gamma & \gamma^2 & & & \gamma^3 \\ 
				& & & & & & 1 & \gamma & \gamma^2 \end{array}  \right],
		\end{equation}
		and 
		\begin{equation}
			B_1 =  \left[ \begin{array}{ccccccccc} 1 & \gamma^2 & \gamma^4 & & & & \gamma^6 & &  \\ 
				& 1 & & \gamma^2 & \gamma^4 & & &\gamma^6 &   \\ 
				& & 1& & \gamma^2 & \gamma^4 & & & \gamma^6 \\ 
				& & & & & & 1 & \gamma^2 & \gamma^4 \end{array}  \right].
		\end{equation}
As we shall later, such broadcasting, rather than individually transmitting to the two destinations, can reduce the total communication cost for this example by nearly $12\%$. 
\end{example}
	
\subsection{Connection to Maximally Recoverable Codes}

In this paper, we identify necessary and sufficient conditions for solving the function update problems in Fig. \ref{fig:sys_model}, Fig. \ref{fig:sys_model_diff_function}, for general matrices $A$ and $B$. We show that the existence of optimal solutions under both these settings is closely related to the notion of {\em maximally recoverable codes}, which were originally studied in the context of coding for storage systems~\cite{mrc_first_paper}, \cite{blaum_mds}, \cite{Parikshit_mrc}.  In the context of storage, MRCs form a subclass of a broader class of codes known as locally repairable codes (LRC)~\cite{GopHuaSimYek}, \cite{PapDim}. An $[n, k]$ linear code $\mathcal{C}$ is called an LRC with locality $r$ if each of the $n$ code symbols is recoverable as a linear combination of at most $r$ other symbols of the code. Qualitatively, an LRC is said to be maximally recoverable (a formal definition appears in Section \ref{sec:mrc}) if it offers optimal {\em beyond-minimum-distance} correction capability. When restricted to the setting of LRCs, MRCs are known as  partial-MDS codes~\cite{blaum_mds} or maximally recoverable codes with locality~\cite{Parikshit_mrc}. MRCs with locality are used in practical DSSs like Windows Azure~~\cite{HuaSimXu_etal_azure}.

The usage of the concept of MRCs in our work is more general, and not necessarily restricted to the context of LRCs. Below, we define the notion of {\em maximally recoverable subcode} of a given code, say $\mathcal{C}_0$.

\vspace{0.1in}

\begin{defn}[Maximally Recoverable Subcode $\mathcal{C}$ of $\mathcal{C}_0$] \label{defn:mrsc}
Let $\mathcal{C}_0$ denote an $[n, t]$ linear code over $\mathbb{F}_q$ having a generator matrix $G_0$. Also, consider an $[n, k]$ subcode $\mathcal{C}$ of $\mathcal{C}_0$ for some $k \leq t$, and let $G$ denote a generator matrix of $\mathcal{C}$. The code $\mathcal{C}$ will be referred to as a maximally recoverable subcode (MRSC) of $\mathcal{C}_0$ if for any set $S \subseteq [n], |S| = k$ such that $\text{rank}\left(G_0|_S\right) = k$, we have $\text{rank}\left(G|_S\right) = k$. Here $G_0|_S$ and $G|_S$ respectively denote the submatrices obtained by restricting $G_0$ and $G$  to the columns specified by $S$.
\end{defn}

\vspace{0.1in}

\begin{example}
Consider an $[n = 9, t = 3]$ binary code $\mathcal{C}_0$ whose generator  matrix $G_0$ is given by 	
\begin{eqnarray}
G_0 & = & \left[ \begin{array}{ccccccccc}  1 & 1 & 1 & & & & & &  \\ & & & 1 & 1 & 1 & & & \\ & & & & &  & 1 & 1 & 1 \end{array} \right].
\end{eqnarray}
Next, consider the $[n = 9, k = 2]$ subcode $\mathcal{C}$ of $\mathcal{C}_0$ having a generator matrix $G$ given by 
\begin{eqnarray}
G & = & \left[ \begin{array}{ccccccccc}  1 & 1 & 1 & & & & 1 & 1 & 1 \\ & & & 1 & 1 & 1 & 1 & 1 & 1 \end{array} \right].
\end{eqnarray}
It is straightforward to verify that $\mathcal{C}$ is an MRSC of $\mathcal{C}_0$.
\end{example}	

\subsection{Summary of Results}
Following is a summary of the results obtained in this paper.
\begin{enumerate}
\item  For the point-to-point setting in Fig. \ref{fig:sys_model}, the communication cost is lower bounded by\footnote{The bound in  \eqref{eq:optimal_cost_p2p} was  already  proved in \cite{Shah_oblivious} for an average case setting.}
\begin{eqnarray} \label{eq:optimal_cost_p2p}
\ell & \geq & \min(2\epsilon, \text{rank}(A)).
\end{eqnarray}
If $C_A$ and $C_H$ denote $n$-length linear codes respectively generated by the rows of the matrices $A$ and $H$, then under optimality (i.e., achieving equality in \eqref{eq:optimal_cost_p2p}), we show that the code  $C_H$ must necessarily be a MRSC of $C_A$. An achievable scheme based on MRSCs is presented to establish the optimality of the bound in \eqref{eq:optimal_cost_p2p}. For a general matrix $A$,  our achievability is guaranteed only when the field size $q$ used in the model (see Fig. \ref{fig:sys_model}) is sufficiently large.
\item For the point-to-point case, we obtain a lower bound on the communication cost permitting non-linear encoding. When $\text{rank}(A) \geq 2 \epsilon$, we show that \eqref{eq:optimal_cost_p2p} is a valid lower bound even under non-linear encoding. This further proves that when the field size is sufficiently large, linear encoding suffices to achieve optimal communication cost for the problem. Our proof technique here is based on the analysis of the chromatic number of the characteristic graph associated with the function computation problem~\cite{korner1998zero}, \cite{doshi2010functional}. 
\item  An explicit low field size encoding matrix $H$ is provided for the setting considered in Example \ref{ex:moti}. Recall that in Example \ref{ex:moti}, we use an arbitrary $[N, K]$ linear code to stripe and store the data. The MRSC $C_H$ in this case  corresponds to MRCs with locality, where the local codes are {\em scaled repetition codes}. 
\item We identify necessary and sufficient conditions for solving the broadcast setting given in Fig. \ref{fig:sys_model_diff_function}. Let $C_A$ and $C_B$ respectively denote the linear block codes generated by the rows of the matrices $A$ and $B$. For the special case when the codes $C_A$ and $C_B$ intersect trivially, there is no benefit from broadcasting, i.e., the  encoder must transmit as though it is transmitting individually to the two receivers, and thus the optimal communication is given by 
\begin{eqnarray} \label{eq:optimal_cost_broadcast}
\ell & \geq & \min(2\epsilon, \text{rank}(A)) + \min(2\epsilon, \text{rank}(B)).
\end{eqnarray}
For the general case when $C_A$ and $C_B$ have a non-trivial intersection, the optimal communication cost can be less than what is given by \eqref{eq:optimal_cost_broadcast}. The expression for the optimal communication cost appears in Theorem \ref{thm:cost_broadcast}. Like in the point-to-point case, optimality is guaranteed only when the field size $q$ is sufficiently large.
\item Our achievability scheme for the general case of the broadcast setting involves finding answer to the following sub-problem : Given an $[n, t]$ code $\mathcal{C}_0$ and an $[n, s]$ subcode $\widehat{\mathcal{C}}$ of $\mathcal{C}_0$, can we find an $[n, k]$ MRSC $\mathcal{C}$ of $\mathcal{C}_0$ such that $\widehat{\mathcal{C}}$ is a subcode of $\mathcal{C}$. We refer to this as the problem of finding {\em sandwiched MRSCs}. Here the parameters $s, k, t$ satisfy the relation $ s \leq k \leq t$. We present two different techniques that show the existence of sandwiched MRSCs (under certain trivial necessary conditions on the code $\mathcal{C}_0$). The first technique is a parity-check matrix based approach, and relies on the existence of non-zero evaluations of a multivariate polynomial over a large enough finite field. The second technique is a generator matrix based approach, uses the concept of {\em linearized polynomials}, and yields an explicit construction. 
\end{enumerate}

Finally, we also discuss the possibility of extending the results to settings involving more than $2$ destinations. We identify lower bounds for the communication cost under special cases for the general problem. The achievability proof appears hard. We comment on the technical challenges that  need to be addressed to solve the achievability for the general case. 
	
\subsection{Related Work}

{\em Coding for File Synchronization :}  In \cite{Shah_oblivious}, the authors consider the problem of oblivious file synchronization, under the substitution model, in a distributed storage setting employing linear codes.   The model is one in which one of the coded storage nodes gets updated with the help of other coded storage nodes, who have already undergone updates. The authors considers the  problem of jointly designing the storage code, and also the update scheme. The updates, like in our setting, is carried out under the assumption that only the sparsity of updates is known, and not the actual updates themselves. Optimal solutions are presented when the storage code is assumed to an MDS code. A recent version of this work \footnote{This version was communicated to us by the authors of \cite{Shah_oblivious_recent}, after we uploaded a version of this paper in arxiv.} \cite{Shah_oblivious_recent} extends the work to the setting of regenerating codes, with the restriction that only one symbol ($\epsilon = 1$) is obliviously updated. The key difference between these works and ours is that we consider designing optimal update schemes for an arbitrary linear function, while the above works assume a specific structure of the storage code (e.g., MDS in \cite{Shah_oblivious}). Please also see Remark \ref{note:Shah} for a comparison of the converse statements appearing in the two papers. 

A second work which is related to ours appears in \cite{goparaju_edits}, where the authors consider the problem of designing protocols for simultaneously optimizing storage performance as well as communication cost for updates. Contrary to our setting, the updates are modeled as insertions/deletions in the file.  A similarity with our work (like the work in \cite{Shah_oblivious}) is that the destination node holds a coded form of the data (which can be considered as linear function of the uncoded data), and is interested in updating the coded data. The goal is to communicate and update the coded file in such a way that reconstruction/repair properties in the storage system are preserved. Their protocol permits modifications to the structure of the storage code itself for optimizing the communication cost. Note that in our model we optimize the communication cost under the assumption that the destination is interested in the {\em same} function $A$ of the updated data as well.

The problem of synchronizing files  under the insertion/deletion model has been previously studied, using information-theoretic techniques, in \cite{MaRamTse}, \cite{cadambe_edits}. Recall that our model uses substitutions, rather than insertions/deletions, for specifying updates. In \cite{MaRamTse}, a point-to-point setting is considered in which the decoder has access to a partially-deleted version of the input sequence, as side information. The authors calculate the minimum rate at which the source needs to compress the (undeleted) input sequence so as to permit lossless recovery (i.e., vanishing probability of error for large block-lengths) at the destination.  The  model of \cite{cadambe_edits} allows both insertions and deletions in the original file to get the updated file. Further, both the source and the destination have access to the original file. Under this setting, the authors study the minimum rate at which the updated file must be encoded at the source, so that lossless recovery is possible at the destination.  The problem of synchronizing edited sequences is also considered  in \cite{OrlVis}, where the authors allow insertions, deletions or substitutions, and assume a zero-error recovery model. An important difference between our work and all the works mentioned above is that while we are interested in a specific linear function of the input sequence, all of the above works assume recovery of the entire source sequence at the destination.

\vspace{0.1in}

{\em Maximally Recoverable Codes:} The notion of maximal recoverability in linear codes was originally introduced in \cite{mrc_first_paper}. Low field size constructions of partial MDS codes (another name for MRCs with locality) for specific parameter sets codes appear in \cite{blaum_mds}, \cite{pmds_twosectors_blaum}, \cite{shum_pmds}. In the terminology of partial MDS codes, these three works respectively provide low field size constructions for the settings up to one, two and three global parities, and any number of local parities. Explicit constructions based on linearized polynomials for the case of one local parity and any number of global parities appear in \cite{Parikshit_mrc}. Identifying low field size constructions of partial MDS codes for general parameter sets  remain an open problem. Various approximations of partial MDS codes like Sector Disk codes~\cite{blaum_mds}, \cite{pmds_twosectors_blaum}, \cite{shum_pmds}, STAIR codes~\cite{stair}, and partial-maximally-recoverable-codes~\cite{balaji_pvk} have been considered in literature, and these permit low field size constructions for larger parameter sets. Other known results on maximally recoverable codes include expressions for the weight enumerators of MRCs with locality~\cite{LalLok}, and the generalized Hamming weights (GHWs) of the MRSC $\mathcal{C}$ in terms of the GHWs of the code $\mathcal{C}_0$~\cite{PraLalKum}.

\vspace{0.1in}

{\em Zero-Error Function Computation : } Finally, we note that the problem considered in this paper can be considered as one of zero-error function computation. Even though the existing literature does not directly address the problem that we study here, we review some of the relevant works on zero-error function computation, in point-to-point as well as network settings. In \cite{KowKum2}, the authors examine the problem of computing a general function  with zero-error, under worst-case as well as average-case models in the context of sensor networks. One of the problems considered there involves a point-to-point setting, where a source having access to $x$ needs to encode and transmit to a destination which has access to side information $y$. The destination is interested in computing the function $f(x, y)$ with zero-error in the worst-case model. Though the point-to-point model in Fig. \ref{fig:sys_model} can be considered as a special case of the setting in \cite{KowKum2}, our assumptions regarding the specific nature of the decoder side-information, and  the use of linear encoding enable us to show deeper results, especially the connection to maximally recoverable codes. The problem of zero-error computation of symmetric functions in a network setting is studied in \cite{GirKum}, \cite{KowKum}. In \cite{GirKum}, the authors characterize the rate at which symmetric functions (e.g. mean, mode, max etc.) can be computed at a sink node, while in \cite{KowKum}, the authors provide optimal communication strategies for computing symmetric Boolean functions. The works of \cite{AppFraKarZeg}, \cite{RaiDey}, \cite{RamLan} study zero-error function computation in networks under the framework of network coding. In \cite{AppFraKarZeg}, the authors  study the significance of the min-cut of the network when a destination node is interested in computing a general function of a set of independent source nodes.  The works of \cite{RaiDey}, \cite{RamLan} consider sum networks, where a set of destination nodes in the network is interested in computing the sum of the observations corresponding to a set of source nodes. 

\vspace{0.1in}

The organization of the rest of the paper is as follows.  Sections \ref{sec:mrc} contains definitions and facts relating to maximally recoverable codes. Necessary and sufficient conditions for achieving optimal communication cost in the point-to-point case appears in Section \ref{sec:p2p}. In Section \ref{sec:p2p-example}, we present a low field size construction for the setting considered in Example \ref{ex:moti}. The broadcast setting and the associated problem of finding {\em sandwiched} MRSCs are discussed in Section \ref{sec:broadcast}. Finally, our conclusions appear in Section \ref{sec:conc}.
	
\vspace{0.2in}

{\em Notation} : Given a matrix $A \in \mathbb{F}_q^{m \times n}, m \leq n$ having rank $m$, we write   $\mathcal{C}_A$ to denote the $[n, m]$ linear block code over $\mathbb{F}_q$ generated by the rows of $A$. The  rank of matrix $A$ will be denoted as $\rho(A)$. We  write $\mathcal{C}_A^{\perp}$ to denote the dual code of $\mathcal{C}_A$. For any set $S \subseteq [n] = \{1, 2, \ldots, n\}$ and for any $n$-length linear code $\mathcal{C}$, we use $\mathcal{C}|_S$ to denote the restriction of the code $\mathcal{C}$ to the set of coordinates indexed by the set $S$. Also, we  write $\mathcal{C}^S$ to denote the subcode of $\mathcal{C}$ obtained by {\em shortening} $\mathcal{C}$ to the set of coordinates indexed by the set $S$. In other words, $\mathcal{C}^S$ denotes the set of all those codewords in $\mathcal{C}$ whose {\em support} is confined to $\mathcal{S}$. For any codeword ${\bf c} \in \mathcal{C}$ such that ${\bf c} = [c_1 \ c_2 \ \cdots \ c_n]$, the support of ${\bf c}$ is defined as $\text{supp}({\bf c}) = \{i \in [n], c_i \neq 0\}$.  We recall the well known fact that $\left(\mathcal{C}|_S\right)^{\perp} = \left(\mathcal{C}^{\perp}\right)^S$. The all-zero codeword shall be denoted by $\bf{0}$. Further, if matrix $G$ denotes a generator matrix of $\mathcal{C}$, we write $G|_S$ to denote the submatrix obtained by restricting $G$ to the columns specified by $S$. Thus $G|_S$ denotes a generator matrix for the code $\mathcal{C}|_S$. Unless otherwise specified, we only deal with linear codes in this manuscript.

\section{Background on Maximally Recoverable Codes} \label{sec:mrc}

In this section, we review relevant known facts regarding MRSCs, including equivalent definitions, existence of MRSCs and the notion of MRCs with locality.  

\vspace{0.1in}

\begin{defn}[$\ell$-cores~\cite{GopHuaSimYek}]
Consider  an $[n, k]$ code $\mathcal{C}$, and let $\mathcal{C}^{\perp}$ denote the dual of $\mathcal{C}$. Any set $S \subseteq [n], |S| = \ell$  is termed as an $\ell$-core of $\mathcal{C}^{\perp}$ if $\text{supp}\left(\bf{c}\right) \nsubseteq S, \forall \bf{c} \in \mathcal{C}^{\perp}, \bf{c} \neq \bf{0}$.
\end{defn}

\vspace{0.1in}

We next state certain equivalent definitions of MRSCs. These are mostly known from various existing works in the context of MRCs with locality, and are straightforward to verify. A proof is however included in Appendix \ref{app:MRC_eq} for the sake of completeness.

\vspace{0.1in}

\begin{lem}\label{lem:MRC_alt}
Consider an $[n, t]$ code $\mathcal{C}_0$, and let $\mathcal{C}$ denote an $[n, k]$ subcode of $\mathcal{C}$. Also let $G_0$ and $G$ denote generator matrices for $\mathcal{C}_0$ and $\mathcal{C}$, respectively. Then, the following statements are equivalent:
\begin{enumerate}[1.]
\item $\mathcal{C}$ is a maximally recoverable subcode of $\mathcal{C}_0$, as defined in Definition \ref{defn:mrsc}, i.e., for any set $S \subseteq [n], |S| = k$ such that $\rho\left(G_0|_S\right) = k$, we have $\rho\left(G|_S\right) = k$.
\item Any set $S \subseteq [n], |S| = k$ which is a $k$-core of $\mathcal{C}_0^{\perp}$ is also a $k$-core of $\mathcal{C}^{\perp}$.
\item For any set $S \subseteq [n], |S| = k$ which is a $k$-core of $\mathcal{C}_0^{\perp}$, we have $\rho(H|_{[n]\backslash S}) = n-k$. Here $H$ denotes a parity check matrix for the code $\mathcal{C}$.
\item For any set $S \subseteq [n], |S| \leq k$, $\rho\left(G_0|_S\right) =  \rho\left(G|_S\right)$.  Since the dual of a punctured code is  a shortened code of the dual code, this is equivalent to saying that $\mathcal{C}$ is an MRSC of $\mathcal{C}_0$ if and only if for any $S \subseteq [n], |S| \leq k$, $\left(\mathcal{C}^{\perp}\right)^S =  \left(\mathcal{C}_0^{\perp}\right)^S$. We note that this is further equivalent to saying that   any $k$-sparse vector ${\bf c}$ that is a codeword  of $\mathcal{C}^{\perp}$, is also a codeword of $\mathcal{C}_0^{\perp}$.
\end{enumerate}
\end{lem}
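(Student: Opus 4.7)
My plan is to establish the four-way equivalence through the chain $(1) \Leftrightarrow (2) \Leftrightarrow (3)$ together with $(1) \Leftrightarrow (4)$, and to dispose of the three reformulations inside $(4)$ by a short separate argument based on the standard duality $(\mathcal{C}|_S)^\perp = (\mathcal{C}^\perp)^S$ recalled in the paper. The single fact underlying everything is that, for any $[n,k']$ linear code $\mathcal{C}'$ with generator matrix $G'$ and any $S \subseteq [n]$, the punctured code $\mathcal{C}'|_S$ has dimension $\rho(G'|_S)$, and by the above duality this equals $|S|$ exactly when no nonzero codeword of $\mathcal{C}'^\perp$ has support contained in $S$, i.e.\ when $S$ is an $|S|$-core of $\mathcal{C}'^\perp$.

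For $(1) \Leftrightarrow (2)$, I would apply this observation with $|S| = k$ to both $\mathcal{C}_0$ and $\mathcal{C}$: the condition $\rho(G_0|_S) = k$ is exactly ``$S$ is a $k$-core of $\mathcal{C}_0^\perp$'', and similarly for $\mathcal{C}$, so $(1)$ just rephrases $(2)$. For $(2) \Leftrightarrow (3)$, I would spell out the $k$-core condition for $\mathcal{C}^\perp$ using its generator $H$: a nonzero row combination $uH$ has support contained in $S$ iff $u\, H|_{[n]\setminus S} = 0$, so absence of such $u$ is exactly $\rho(H|_{[n]\setminus S}) = n-k$. The direction $(4) \Rightarrow (1)$ is immediate by specializing to $|S| = k$.

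The only step that requires any real work is $(1) \Rightarrow (4)$. Fix $S$ with $|S| \leq k$ and write $r = \rho(G_0|_S)$. I would pick a maximal independent subset $T \subseteq S$ of columns of $G_0$ (so $|T| = r$), and, using that $\dim \mathcal{C}_0 = t \geq k$ forces $G_0$ to have at least $k$ linearly independent columns overall, extend $T$ to a set $S' \subseteq [n]$ with $|S'| = k$ and $\rho(G_0|_{S'}) = k$. Statement $(1)$ then forces $\rho(G|_{S'}) = k$, so the $r$ columns of $G$ indexed by $T \subseteq S'$ are independent, giving $\rho(G|_S) \geq \rho(G|_T) = r$; the reverse inequality $\rho(G|_S) \leq \rho(G_0|_S) = r$ is automatic from $\mathcal{C} \subseteq \mathcal{C}_0$. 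Finally, the internal equivalence of the three reformulations of $(4)$ follows by combining $\dim (\mathcal{C}'^\perp)^S = |S| - \rho(G'|_S)$ with the containment $(\mathcal{C}_0^\perp)^S \subseteq (\mathcal{C}^\perp)^S$ to pass between rank equality and shortened-dual equality, and by observing that any vector of Hamming weight at most $k$ is supported in some set of size at most $k$ to pass between the shortened-dual form and the ``$k$-sparse codeword'' form. I expect the matroid-style extension in $(1) \Rightarrow (4)$ to be the only place requiring care; everything else is a direct unwinding of the puncturing/shortening duality and the definition of a core.
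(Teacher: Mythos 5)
Your proof is correct and follows essentially the same route as the paper's (same $(1)\Leftrightarrow(2)\Leftrightarrow(3)$ chain via the core/rank dictionary, same extend-an-independent-set argument for $(1)\Rightarrow(4)$). In fact your write-up is slightly more complete than the paper's: you explicitly reduce the case $\rho(G_0|_S)<|S|$ in $(1)\Rightarrow(4)$ to the full-rank case via a maximal independent subset $T$, and you actually verify the three internal reformulations of Part $4$, both of which the paper leaves implicit.
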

\begin{proof}
See Appendix \ref{app:MRC_eq}.
\end{proof}

\vspace{0.1in}

The following lemma, which is restatement of Lemma $14$ in \cite{GopHuaSimYek}, guarantees the existence of MRSCs under sufficiently large field size. 

\vspace{0.1in}

\begin{lem}[\cite{GopHuaSimYek}] \label{lem:mrc_exist}
Given any $[n, t]$ code $\mathcal{C}_0$ over $\mathbb{F}_q$, and any $k$ such that $k < t$, there exists an $[n, k]$ maximally recoverable subcode $\mathcal{C}$ of $\mathcal{C}_0$ , whenever $q > kn^k$.
\end{lem}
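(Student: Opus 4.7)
The plan is a Schwartz--Zippel-style argument, choosing a generic subcode of $\mathcal{C}_0$ via an indeterminate change of basis applied to a fixed generator matrix. Fix any generator matrix $G_0 \in \mathbb{F}_q^{t \times n}$ of $\mathcal{C}_0$. Every $[n,k]$ subcode of $\mathcal{C}_0$ admits a generator matrix of the form $G = M G_0$ for some $M \in \mathbb{F}_q^{k \times t}$ of rank $k$. I will treat the $kt$ entries of $M$ as indeterminates and exhibit a nonzero polynomial in these variables of total degree at most $k n^k$, such that any $M$ avoiding its zero set yields the desired MRSC.

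By the first characterization in Lemma~\ref{lem:MRC_alt}, $\mathcal{C}$ is an MRSC of $\mathcal{C}_0$ if and only if for every $S \subseteq [n]$ with $|S| = k$ and $\rho(G_0|_S) = k$, the $k \times k$ matrix $M G_0|_S$ is nonsingular, i.e., $f_S(M) := \det(M G_0|_S) \neq 0$. Each $f_S$ is a polynomial of total degree $k$ in the entries of $M$. To show it is not identically zero, note that since $G_0|_S \in \mathbb{F}_q^{t \times k}$ has rank $k$, there exists an index set $T \subseteq [t]$ with $|T| = k$ such that the row-submatrix $(G_0|_S)|_T$ is invertible; choosing $M$ to be the $k \times t$ matrix that picks out these rows gives $M G_0|_S = (G_0|_S)|_T$, whose determinant is nonzero. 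Hence each $f_S$ is a nonzero polynomial.

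I would then form the product $P(M) := \prod_S f_S(M)$ over all $S \subseteq [n]$ with $|S| = k$ and $\rho(G_0|_S) = k$. As a product of nonzero polynomials, $P$ is itself nonzero, with total degree at most $k \binom{n}{k} \leq k n^k$. By the Schwartz--Zippel lemma, whenever $q > k n^k \geq \deg P$, there exists $M^\star \in \mathbb{F}_q^{k \times t}$ with $P(M^\star) \neq 0$, meaning $f_S(M^\star) \neq 0$ for every valid $S$ simultaneously. In particular $M^\star$ has rank $k$ (fix any one such $S$, which exists because $\rho(G_0) = t \geq k+1$), so the code $\mathcal{C}$ generated by $M^\star G_0$ is a genuine $[n,k]$ subcode of $\mathcal{C}_0$, and by construction it is maximally recoverable.

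The main obstacle, such as it is, is verifying the nonvanishing of each individual factor $f_S$ from the rank hypothesis on $G_0|_S$; once that is in hand, the degree accounting and the appeal to Schwartz--Zippel are routine. A small sanity check worth doing is the inequality $\binom{n}{k} \leq n^k$, which confirms that the stated field-size bound $q > kn^k$ comfortably dominates the polynomial degree.
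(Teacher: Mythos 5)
Your argument is correct. For the record, the paper does not reprove this lemma; it simply cites it from \cite{GopHuaSimYek} (their Lemma~14), and your generator-matrix/Schwartz--Zippel argument is essentially the one that source uses, so there is no genuine divergence to flag. Your degree accounting is sound: each $f_S=\det(MG_0|_S)$ is multilinear in the rows of $M$, hence total degree $k$, and each is nonzero because a $\{0,1\}$ row-selection matrix $M$ picking out a rank-$k$ row subset of $G_0|_S$ witnesses a nonzero value; the product over at most $\binom{n}{k}\le n^k$ valid sets $S$ then has total degree at most $kn^k$, and the full-rank observation for $M^\star$ at the end closes the loop.

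Worth noting, since the paper does prove a strictly more general statement: Lemma~\ref{lem:MRSC_subcodes} (sandwiched MRSCs, proved in Appendix~D) specializes to this lemma by taking $\widehat{\mathcal{C}}=\{\mathbf{0}\}$. That proof works on the parity-check side --- it augments $H_0$ by $t-k$ rows of indeterminates and shows each $(n-k)\times(n-k)$ determinant $\det(H|_{[n]\setminus S})$ is a nonzero polynomial --- and it invokes the Combinatorial Nullstellensatz rather than Schwartz--Zippel, bounding the degree in each \emph{individual} variable by $\binom{n}{k}$ rather than the total degree. That buys the sharper field-size threshold $q>\binom{n}{k}$, whereas your total-degree bound only gives $q>k\binom{n}{k}\le kn^k$. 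Since the statement as quoted only requires $q>kn^k$, your bound suffices; but if you wanted to match the paper's tighter constant, the fix is small: observe that $P=\prod_S f_S$ has degree at most $\binom{n}{k}$ in each entry of $M$ (each $f_S$ is degree $\le 1$ in each variable), and apply the Combinatorial Nullstellensatz with $|S_i|=q>\binom{n}{k}$ for every variable, exactly as in Appendix~D.
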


\vspace{0.1in}

We next present an explicit construction of MRSCs based on matrix representations corresponding to {\em linearized polynomials}. Linearized polynomial based constructions for MRCs with locality, from a parity-check matrix point of view, appear in \cite{Parikshit_mrc}. In the same paper~\cite{Parikshit_mrc}, the authors show, again using linearized polynomials and parity-check matrix ideas, how to obtain an explicit construction of an MRSC $\mathcal{C}$ of $\mathcal{C}_0$, when $G_0$ is a binary matrix.  The construction described below\footnote{We do not claim novelty for Construction \ref{constr:mrc_linearized}, since the idea used here follows directly from works like \cite{SilRawVis}, \cite{RawKoySilVis}, \cite{IIScUTA_MBR}} uses a generator matrix view point, and the technique is similar to the usage of linearized polynomials appearing in works of \cite{SilRawVis}, \cite{RawKoySilVis}, \cite{IIScUTA_MBR}. These works used linearized polynomials for constructing  locally repairable codes. As we will see, the construction of the {\em sandwiched } MRSCs that we present in the context of the broadcast setting (see Fig. \ref{fig:sys_model_diff_function}) is an adaptation of the following construction.

\vspace{0.1in}

\begin{constr} \label{constr:mrc_linearized}
Consider the $[n, t]$ code $\mathcal{C}_0$ over $\mathbb{F}_q$, having a generator matrix $G_0$. Let $\mathbb{F}_Q$ denote an extension field of $\mathbb{F}_q$, where $Q = q^t$. Let $\mathcal{C}_0^{(Q)}$ denote the $n$-length (linear) code over  $\mathbb{F}_Q$ that is also generated by $G_0$. Note that $\text{rank}_{\mathbb{F}_Q}(G_0) = \text{rank}_{\mathbb{F}_q}(G_0)$, and thus  $\text{dim}\left(\mathcal{C}_0^{(Q)} \right) = t$. In this construction, we identify an $[n, k], k \leq t$ MRSC of the $[n, t]$ code $\mathcal{C}_0^{(Q)}$. Toward this, let $\{\alpha_i \in \mathbb{F}_Q, 1 \leq i \leq t\}$ denote a basis of $\mathbb{F}_Q$ over $\mathbb{F}_q$. Define the elements $\{\beta_i \in \mathbb{F}_Q, 1 \leq i \leq n\}$ as follows:
\begin{eqnarray} \label{eq:revise_1}
[\beta_1 \ \beta_2 \ \cdots \ \beta_n] & = & [\alpha_1 \ \alpha_2 \ \cdots \ \alpha_t]G_0.
\end{eqnarray}
Next, consider the $n$-length code $\mathcal{C}$ over $\mathbb{F}_Q$ having a generator matrix $G$ given by 
\begin{eqnarray}
G & = & \left[  \begin{array}{cccc} \beta_1 & \beta_2 & \cdots & \beta_n   \\
\beta_1^q & \beta_2^q & \cdots & \beta_n^q \\
& & \vdots &  \\
\beta_1^{q^{k-1}} & \beta_2^{q^{k-1}} & \cdots & \beta_n^{q^{k-1}} \end{array} \right].
\end{eqnarray}
The code $\mathcal{C}$ is our candidate code, and this completes the description of the construction. In the following theorem, we prove that $\mathcal{C}$ is indeed an $[n, k]$ MRSC of $\mathcal{C}_0^{(Q)}$.
\end{constr}	

\vspace{0.1in}

\begin{thm} \label{thm:mrc_linearized}
Consider an $[n, t]$ code $\mathcal{C}_0$ over $\mathbb{F}_q$, having a generator matrix $G_0$, where $G_0 \in \mathbb{F}_q^{t \times n}$. Let $\mathcal{C}_0^{(Q)}$ denote the $n$-length (linear) code  over  $\mathbb{F}_Q$ that is also generated by $G_0$, where $Q = q^t$. Then the code $\mathcal{C}$ over $\mathbb{F}_Q$ obtained in Construction \ref{constr:mrc_linearized} is an $[n, k]$ maximally recoverable subcode of $\mathcal{C}_0^{(Q)}$.	
\end{thm}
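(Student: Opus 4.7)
The plan is to verify the MRSC property using item~1 of Lemma~\ref{lem:MRC_alt}: for every $S\subseteq[n]$ with $|S|=k$ and $\rho(G_0|_S)=k$, I must show $\rho(G|_S)=k$. Fix such an $S$. Since $G|_S$ is a $k\times k$ matrix with entries $\beta_s^{q^j}$, its rank drops below $k$ if and only if there exist $(c_0,\ldots,c_{k-1})\in\mathbb{F}_Q^k$, not all zero, such that the linearized polynomial
\[
L(x)\;=\;\sum_{j=0}^{k-1} c_j\, x^{q^j}
\]
satisfies $L(\beta_s)=0$ for every $s\in S$. The proof will boil down to ruling this out.

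The first step is to translate the hypothesis $\rho(G_0|_S)=k$ into a statement about the $\beta_s$. The map $\phi:\mathbb{F}_q^t\to\mathbb{F}_Q$ defined by $\phi(v)=\sum_{i=1}^t v_i\alpha_i$ is an $\mathbb{F}_q$-linear isomorphism because $\{\alpha_i\}$ is an $\mathbb{F}_q$-basis of $\mathbb{F}_Q$, and by the definition of the $\beta_i$ we have $\beta_s=\phi(g_s)$, where $g_s$ is the $s$th column of $G_0$. Consequently $\{\beta_s:s\in S\}$ is $\mathbb{F}_q$-linearly independent in $\mathbb{F}_Q$, so its $\mathbb{F}_q$-span $V\subseteq\mathbb{F}_Q$ has cardinality exactly $q^k$.

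The second step invokes the standard properties of linearized polynomials. Because $x\mapsto x^q$ is the Frobenius automorphism of $\mathbb{F}_Q/\mathbb{F}_q$, $L$ is $\mathbb{F}_q$-linear as a map $\mathbb{F}_Q\to\mathbb{F}_Q$. Hence if $L$ vanishes on each $\beta_s$, it vanishes on all of $V$, producing $q^k$ roots. On the other hand, as a polynomial $L$ has degree at most $q^{k-1}$, so it has at most $q^{k-1}$ roots in $\mathbb{F}_Q$ unless $L\equiv 0$. Since $q^k>q^{k-1}$, the only possibility is $c_0=\cdots=c_{k-1}=0$, contradicting the non-triviality of $(c_0,\ldots,c_{k-1})$. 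Therefore $\rho(G|_S)=k$, and the MRSC property follows from Lemma~\ref{lem:MRC_alt}.

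I do not anticipate any serious obstacle; the only subtlety is the bookkeeping between ranks over $\mathbb{F}_q$ and over $\mathbb{F}_Q$. The $\mathbb{F}_q$-rank of $G_0|_S$ equals the $\mathbb{F}_Q$-rank of $G_0|_S$ (since $G_0$ has entries in $\mathbb{F}_q$), which legitimizes applying Definition~\ref{defn:mrsc} to $\mathcal{C}_0^{(Q)}$ while using $\rho(G_0|_S)=k$ as an $\mathbb{F}_q$-statement to lift the $\beta_s$ to an $\mathbb{F}_q$-linearly independent set. The root-counting bound for linearized polynomials carries the rest of the argument.
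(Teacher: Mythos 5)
Your proof is correct and follows essentially the same route as the paper's: both reduce to showing that the Moore matrix $G|_S$ is nonsingular once the $\beta_s$ ($s\in S$) are verified to be $\mathbb{F}_q$-linearly independent, using the $\mathbb{F}_q$-linear isomorphism induced by the basis $\{\alpha_i\}$. The only cosmetic difference is that the paper invokes the invertibility of Moore matrices as a cited fact (Fact~\ref{fact:roth}), whereas you rederive it inline via the root-counting bound for linearized polynomials --- an equivalence the paper itself points out immediately after stating that fact.
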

\begin{proof}
See Appendix \ref{app:mrc_linearized}.
\end{proof}	

\subsection{Maximally Recoverable Codes with Locality (Partial MDS Codes)}

Below we give the definition of MRCs with locality, using the notion of MRSCs presented in Definition \ref{defn:mrsc}.

\vspace{0.1in}

\begin{defn}[MRCs with Locality~\cite{Parikshit_mrc},~\cite{GopHuaSimYek}] \label{defn:partial_mds}
Assume that the parity matrix $H_0$ of $\mathcal{C}_0$ has the following form:
\begin{eqnarray} \label{eq:H0}
H_0 & = & \left[ \begin{array}{cccc} H_{L,1}&&& \\& H_{L,2} && \\ && \ddots & \\ &&&H_{L,\ell}  \end{array} \right],
\end{eqnarray}
where each $H_{L, i}, 1 \leq i \leq \ell$ generates an $[r+\delta-1, \delta-1]$ MDS code, for some fixed $r, \delta$. The code $\mathcal{C}_0^{\perp}$ has parameters $[n = \ell(r+\delta-1), n-t = \ell(\delta-1)]$. Then an $[n, k]$ MRSC $\mathcal{C}$ of $\mathcal{C}_0$ will be referred to as an $[n, k]$ $(r, \delta)$ maximally recoverable code with locality.
\end{defn}

\vspace{0.1in}

We would like to point out that the precise form of $H_{L,i}, 1  \leq i \leq \ell$ is not imposed by the above definition. In other words, while designing an  $[n, k]$ $(r, \delta)$ MRC with locality, we are free to choose any $H_{L,i}, 1  \leq i \leq \ell$ that generates an $[r+\delta-1, r, \delta]$ MDS code. MRCs with locality are also known in literature as partial MDS codes~\cite{blaum_mds}. An $[n, k]$ $(r, \delta)$ MRC with locality corresponds to an $[m', n'] (r',s')$ partial MDS code~\footnote{Notation used for partial MDS codes correspond to the one used in \cite{blaum_mds}.}, where 
\begin{eqnarray}
m' & = & \frac{n}{r+\delta-1} \\
n' & = & r+\delta - 1 \\
r' & = & \delta - 1\\
s' & = &  \frac{n}{r+\delta-1}r - k.
\end{eqnarray}

The idea here is that we see the code as a two dimensional array code having $m'$ rows and $n'$ columns. Each row of the array is an $[n' = r+\delta - 1, n' - r' = r]$ MDS code having $r' = \delta-1$ local parities. The parameter $s' = \text{dim}(\mathcal{C}_0) - \text{dim}(\mathcal{C})$ refers to the number of global parity symbols.  An $[m', n'] (r',s')$ partial MDS code can tolerate any combination of $r'$ erasures  in each row, and an additional $s'$ erasures among the remaining elements in the array. In this write-up we will follow the terminology of MRCs with locality.

\section{Communication Cost for the Point-to-Point Case}\label{sec:p2p}

In this section, we obtain necessary and sufficient conditions for solving the point-to-point problem setting in Fig. \ref{fig:sys_model}. An encoder-decoder pair $(H, \mathcal{D})$ will be referred to as a {\em valid scheme} for the problem in Fig. \ref{fig:sys_model}, if the decoder's estimate of $A(X^n  + E^n)$ is correct for all $X^n, E^n \in \mathbb{F}_q^n$, such that $\text{Hamming wt}.(E^n) \leq \epsilon$. Recall our assumption that we  deal with a zero probability of error, worst-case scenario model. Also, recall that the communication cost associated with the encoder $H$ is given by $\ell = \rho(H)$, where the parameters $n, q, \epsilon$ and $A$ of the system are assumed to be fixed a priori. Without loss of generality, we assume that the $m \times n$ matrix $A$ has rank $m$.

\subsection{Necessary Conditions}

\begin{thm}\label{thm:necessary}
Consider any valid scheme $(H, \mathcal{D})$ for the communication problem described in Fig. \ref{fig:sys_model}. Let $\mathcal{C}_H$ and $\mathcal{C}_A$ denote the linear codes generated by the rows of the matrices $H$ and $A$ respectively. Also, consider the code $\mathcal{C} = \mathcal{C}_H \cap \mathcal{C}_A$, and let $P$ denote a generator matrix for $\mathcal{C}$. Then, the following conditions must necessarily be satisfied:
\begin{enumerate}[1.]  
\item If $Y^n$ is any $2\epsilon$-sparse vector such that $AY^n \neq {\bf 0}$, then it must also be true that $PY^n \neq {\bf 0}$.
\item $\text{dim}\left( \mathcal{C} \right) \geq \min(m, 2\epsilon)$.  
\end{enumerate}
Combining the two parts it then follows that if $\text{dim}\left( \mathcal{C} \right) = \min(m, 2\epsilon)$, then $\mathcal{C}$ must be a maximally recoverable subcode of $\mathcal{C}_A$.
\end{thm}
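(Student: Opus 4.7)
The plan is to establish Part 1 by an indistinguishability argument against the putative valid scheme, and then bootstrap both Part 2 and the MRSC statement from Part 1 via a single dimension-counting idea, using the factorization $B = MA$.

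For Part 1, I would proceed by contradiction: assume there exists a $2\epsilon$-sparse $Y^n$ with $AY^n \neq 0$ and $BY^n = 0$. Since the rows of $B$ generate $\mathcal{C} = \mathcal{C}_H \cap \mathcal{C}_A$, the standard dual identity $(\mathcal{C}_H \cap \mathcal{C}_A)^{\perp} = \mathcal{C}_H^{\perp} + \mathcal{C}_A^{\perp}$ lets me decompose $Y^n = W + U'$ with $HW = 0$ and $AU' = 0$. Since $Y^n$ is $2\epsilon$-sparse, I split it as $Y^n = E_1^n - E_2^n$ with each $E_i^n$ having Hamming weight at most $\epsilon$, by partitioning the support of $Y^n$. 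I then feed the scheme the two scenarios $(X_1^n, E_1^n) = (-U', E_1^n)$ and $(X_2^n, E_2^n) = (0, E_2^n)$. A short check shows $AX_1^n = AX_2^n = 0$ and $H(X_1^n + E_1^n) = H(X_2^n + E_2^n) = HE_2^n$, where the second identity uses $HY^n = HW + HU' = HU'$ so that $H(-U' + E_1^n) = -HU' + HE_1^n = -HY^n + HE_1^n = HE_2^n$. Yet the required decoder outputs differ, since $A(X_1^n + E_1^n) - A(X_2^n + E_2^n) = AY^n \neq 0$, contradicting that the scheme is valid on both inputs.

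For Part 2, since the rows of $B$ lie in the row-space of $A$, I write $B = MA$ for a matrix $M \in \mathbb{F}_q^{\dim(\mathcal{C}) \times m}$ of full row rank. Part 1 then translates to: for every $2\epsilon$-sparse $Y^n$, $M(AY^n) = 0$ forces $AY^n = 0$; equivalently, $\ker(M)$ meets the set $V_{2\epsilon} := \{AY^n : Y^n \text{ is } 2\epsilon\text{-sparse}\}$ only at the origin. Now suppose for contradiction $\dim(\mathcal{C}) < \min(m, 2\epsilon)$. Since $A$ has rank $m$, I can choose $S \subseteq [n]$ with $|S| = \min(m, 2\epsilon)$ and $A|_S$ of full column rank; its column span is a $\min(m, 2\epsilon)$-dimensional subspace of $\mathbb{F}_q^m$ sitting inside $V_{2\epsilon}$. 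By rank-nullity, $\dim \ker(M) = m - \dim(\mathcal{C})$, and under our assumption $\dim \ker(M) + \min(m, 2\epsilon) > m$, which forces a nonzero intersection of these two subspaces in $\mathbb{F}_q^m$ and contradicts Part 1.

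For the MRSC conclusion, set $k := \min(m, 2\epsilon) = \dim(\mathcal{C})$ and pick any $S \subseteq [n]$ with $|S| = k$ and $\rho(A|_S) = k$. If $\rho(B|_S) < k$, some nonzero $w \in \mathbb{F}_q^k$ satisfies $B|_S w = 0$; lifting $w$ to a vector $Y^n \in \mathbb{F}_q^n$ supported on $S$ gives $\text{wt}(Y^n) \leq k \leq 2\epsilon$, $BY^n = 0$, and $AY^n = A|_S w \neq 0$ since $A|_S$ is injective on $\mathbb{F}_q^k$. This contradicts Part 1, so $\rho(B|_S) = k$ for every such $S$, which is exactly the MRSC condition from Definition \ref{defn:mrsc}. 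The main obstacle lies in Part 1: one has to simultaneously match the side information $AX^n$ across the two synthetic scenarios and match the transmitted packet $H(X^n + E^n)$, while still producing different required outputs. The choice $X_1^n = -U'$ is not immediate and rests essentially on the dual-sum identity for $\mathcal{C}_H \cap \mathcal{C}_A$; once Part 1 is in hand, Part 2 and the MRSC statement are clean consequences of dimension counting applied to $B = MA$.
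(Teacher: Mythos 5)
Your Part 1 is essentially identical to the paper's: both decompose the sparse vector $Y^n$ via the dual identity $(\mathcal{C}_H\cap\mathcal{C}_A)^\perp=\mathcal{C}_H^\perp+\mathcal{C}_A^\perp$, split $Y^n=E_1^n-E_2^n$, set $X_1^n$ to be the negative of the $\mathcal{C}_A^\perp$-component and $X_2^n=0$, and exhibit two scheme inputs with matched side information and matched transmissions but unequal required outputs. Your Part 3 (the MRSC consequence) is also the same contrapositive argument as the paper's, though you handle the degenerate case $m\le 2\epsilon$ uniformly by taking $k=\min(m,2\epsilon)$, which is a minor but clean improvement.

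Where you diverge is Part 2. The paper argues on the ``dual'' side in $\mathbb{F}_q^n$: from $\dim\mathcal{C}^\perp>n-\min(m,2\epsilon)$ it extracts (via reduced row echelon form) a basis of $\mathcal{C}^\perp$ made entirely of $2\epsilon$-sparse vectors, then observes that since $\dim\mathcal{C}^\perp>\dim\mathcal{C}_A^\perp$ at least one basis vector lies outside $\mathcal{C}_A^\perp$, reducing to Part 1. You instead work on the ``primal'' side in $\mathbb{F}_q^m$: factor $B=MA$ with $M$ of full row rank $\dim\mathcal{C}$, note $\dim\ker M=m-\dim\mathcal{C}$, pick a single set $S$ with $|S|=\min(m,2\epsilon)$ and $A|_S$ injective, and use the pigeonhole inequality $\dim\ker M+\dim\operatorname{colspan}(A|_S)>m$ to force a nonzero $v=A|_Sw\in\ker M$, which lifts to the forbidden $Y^n$. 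Both arguments are elementary dimension counts that reduce to Part 1; the paper's buys you the sparse-basis fact (a standard coding-theory observation that is itself reusable, e.g.\ in the proof of Lemma~\ref{lem:MRC_alt}), while yours avoids any statement about sparse bases and makes the role of the matrix $M$ explicit, which then directly sets up the MRSC criterion $\rho(B|_S)=k$ whenever $\rho(A|_S)=k$ as a constraint on $\ker M$. Either route is sound; yours is arguably a hair more self-contained, while the paper's sits more naturally alongside the dual-code language used throughout the rest of the manuscript.
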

\begin{proof}
$1.$ We will prove the first part by contradiction. Let us suppose that there exists a non-zero $2\epsilon$-sparse vector $Y^n$ such that $Y^n \in \mathcal{C}^{\perp}$, but $Y^n \notin \mathcal{C}_A^{\perp}$. In this case, we will show that there exists two distinct pairs $(X_1^n, E_1^n)$ and $(X_2^n, E_2^n)$, with $E_1^n$ and $E_2^n$ being $\epsilon$-sparse, such that $H(X_1^n + E_1^n) = H(X_2^n + E_2^n)$ and $AX_1^n = AX_2^n$, but $AE_1^n \neq AE_2^n$. Thus, no decoder $\mathcal{D}$ can resolve between the two pairs successfully, and we would have contradicted the validity of our scheme. Towards constructing the desired $(X^n, E^n)$ pairs, note that $Y^n$ can be expressed as $Y^n = E_1^n - E_2^n$ for some two distinct $\epsilon$-sparse vectors $E_1^n$ and $E_2^n$.
Also, since $\mathcal{C}^{\perp} = \mathcal{C}_H^{\perp} + \mathcal{C}_A^{\perp}$, there exists $U^n \in \mathcal{C}_A^{\perp}$ and $V^n \in \mathcal{C}_H^{\perp}$ such that $E_1^n - E_2^n = U^n + V^n$. We now choose $X_1^n = -U^n$ and $X_2^n = {\bf 0}$. In this case, we see that $H(X_1^n + E_1^n) = H(X_2^n + E_2^n) = HE_2^n $ and $AX_1^n = AX_2^n  = {\bf 0}$, but $AE_1^n \neq AE_2^n$, which contradicts the validity of our scheme. This completes the proof of the first part.

$2.$ Assume that $\text{dim}(\mathcal{C}) < \min(m, 2\epsilon)$ which implies that $\text{dim}\left(\mathcal{C}^{\perp}\right) > n - \min(m, 2\epsilon)$. In this case,  we note that there exists a basis $\mathcal{L}$ of $\mathcal{C}^{\perp}$ consisting entirely of vectors that are $2\epsilon$-sparse. To see why this is true, take any basis of $\mathcal{C}^{\perp}$ and row-reduce it to the standard form (up to permutation of columns) given by  $[I\ | \ K]$, where $I$ denotes the identity matrix of size $\text{dim}(\mathcal{C}^{\perp})$. Since $\text{dim}(\mathcal{C}^{\perp}) > n - 2\epsilon$, the number of columns in the matrix $K$ is at most $2\epsilon - 1$, and thus every row in matrix $[I\ | \ K]$ is $2\epsilon$-sparse, which proves the existence of the basis $\mathcal{L}$ as stated above. Next, observe that  $\text{dim}\left(\mathcal{C}\right) < \min(m, 2\epsilon) \leq \text{dim}\left(\mathcal{C}_A\right)$, which implies that $\text{dim}\left(\mathcal{C}^{\perp}\right) >  \text{dim}\left(\mathcal{C}_A^{\perp}\right)$. Thus at least one of the elements of the set $\mathcal{L}$ is not contained in $\mathcal{C}_A^{\perp}$. In other words,  there exists a non-zero $2\epsilon$-sparse vector $Y^n$ such that $Y^n \in \mathcal{C}^{\perp}$, but $Y^n \notin \mathcal{C}_A^{\perp}$. However, this would contradict Part $1.$ of the theorem that we just proved above. We thus conclude that $\text{dim}(C) \geq \min(m, 2\epsilon)$.

Finally, suppose that $\text{dim}\left( \mathcal{C} \right) = \min(m, 2\epsilon)$. If $m \leq 2\epsilon$, then $\mathcal{C} = \mathcal{C}_A \cap \mathcal{C}_H = \mathcal{C}_A$, since by assumption $\text{dim}\left( \mathcal{C}_A \right) = m$. In this case, it is trivially true that $\mathcal{C}$ is a MRSC of $\mathcal{C}_A$. Now if $m > 2\epsilon$, we note that Part $1$ implies that if $S \subseteq [n], |S| = 2\epsilon$ is such that $\rho(A|_S) = 2\epsilon$, then $\rho(P|_S) = 2\epsilon$. It then follows from Definition \ref{defn:mrsc} that $\mathcal{C}$ must be a MRSC of $\mathcal{C}_A$ in this case as well, if $\text{dim}\left( \mathcal{C} \right) = 2\epsilon$.
\end{proof}

\vspace{0.1in}

\begin{cor}\label{cor:necessary}
\begin{enumerate}[a)]
\item The communication cost $\ell$ associated with any valid scheme for the problem given in Fig. \ref{fig:sys_model} is lower bounded by $\ell  \geq  \min(m, 2\epsilon)$.
\item If $m \leq 2\epsilon$ and $\ell = m$ (optimal communication cost), then the encoding matrix $H$ is necessarily given by $H = A$.
\item If $m > 2\epsilon$ and $\ell = 2\epsilon$ (once again optimal communication cost), then it must necessarily be true that the code $\mathcal{C}_H$ is a MRSC of $\mathcal{C}_A$. 
\end{enumerate}
\end{cor}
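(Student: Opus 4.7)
The plan is to derive all three parts of the corollary as fairly direct consequences of Theorem \ref{thm:necessary}, since the key structural work (showing $\dim(\mathcal{C}) \geq \min(m, 2\epsilon)$ and that the intersection is an MRSC when the lower bound is met) is already done there. The only additional ingredient needed is the obvious inequality $\dim(\mathcal{C}) \leq \dim(\mathcal{C}_H) = \ell$, which comes from the fact that $\mathcal{C} = \mathcal{C}_H \cap \mathcal{C}_A$ is a subcode of $\mathcal{C}_H$.

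For part (a), I would simply chain the two inequalities: $\ell = \dim(\mathcal{C}_H) \geq \dim(\mathcal{C}_H \cap \mathcal{C}_A) = \dim(\mathcal{C}) \geq \min(m, 2\epsilon)$, where the last step invokes Part 2 of Theorem \ref{thm:necessary}. For part (b), when $m \leq 2\epsilon$ and $\ell = m$, the chain above forces $\dim(\mathcal{C}) = m$; since $\mathcal{C} \subseteq \mathcal{C}_A$ and $\dim(\mathcal{C}_A) = m$, this gives $\mathcal{C} = \mathcal{C}_A$, and similarly $\mathcal{C} \subseteq \mathcal{C}_H$ with $\dim(\mathcal{C}_H) = m$ forces $\mathcal{C}_H = \mathcal{C}_A$. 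I would note (to be precise) that this means $H$ and $A$ generate identical row spaces, so $H = PA$ for some invertible $P \in \mathbb{F}_q^{m \times m}$; since $P$ can be absorbed into the decoder without changing the communication cost, one may take $H = A$ without loss of generality.

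For part (c), when $m > 2\epsilon$ and $\ell = 2\epsilon$, the same chain gives $\dim(\mathcal{C}) = 2\epsilon = \ell = \dim(\mathcal{C}_H)$, and since $\mathcal{C} \subseteq \mathcal{C}_H$, we conclude $\mathcal{C} = \mathcal{C}_H$; in particular $\mathcal{C}_H \subseteq \mathcal{C}_A$. Now invoking the final clause of Theorem \ref{thm:necessary} (which asserts that the intersection must be an MRSC of $\mathcal{C}_A$ whenever $m > 2\epsilon$ and $\dim(\mathcal{C}) = 2\epsilon$), we conclude that $\mathcal{C}_H = \mathcal{C}$ is a maximally recoverable subcode of $\mathcal{C}_A$.

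There isn't really a hard step here; the corollary is essentially a bookkeeping consequence of Theorem \ref{thm:necessary}. The only thing worth being careful about is the phrasing of (b): the statement ``$H = A$ necessarily'' must be read up to left multiplication by an invertible matrix, since the communication cost $\ell$ and the decoder's job depend only on the row space of $H$. Flagging this interpretation explicitly in the write-up is the one point that deserves a short comment rather than raw calculation.
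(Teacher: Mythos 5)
Your derivation is correct and matches the paper's intent: the corollary is stated without a separate proof precisely because it is the bookkeeping consequence of Theorem \ref{thm:necessary} that you spell out, via the chain $\ell = \dim(\mathcal{C}_H) \geq \dim(\mathcal{C}_H \cap \mathcal{C}_A) \geq \min(m, 2\epsilon)$ and the observation that equality collapses $\mathcal{C}_H$ onto $\mathcal{C}_H \cap \mathcal{C}_A$. Your caveat on part (b) — that ``$H=A$'' is only forced up to left multiplication by an invertible matrix, since the theorem pins down the row space $\mathcal{C}_H = \mathcal{C}_A$ rather than the literal matrix — is an accurate and worthwhile clarification of the paper's slightly loose phrasing, not a departure from its argument.
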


\vspace{0.1in}

\begin{note} \label{note:Shah}
	The bound $\ell  \geq  \min(m, 2\epsilon)$ in Part $a)$ of the above corollary  was already proved in \cite{Shah_oblivious} for an average case setting, under the assumption that the vectors $X^n$ and $X^n + E^n$ are uniformly picked at random from the set of all vectors satisfying  $\text{Hamming wt}.(E^n) \leq \epsilon$. This result in \cite{Shah_oblivious} can be directly used to argue the correctness of Part $a)$ of the above corollary, for the worse case setting that we consider here. However, the connection to maximally recoverable codes, especially the necessity of MRSCs for achieving optimality (Part $c)$ of  the above corollary) is a novelty of this paper, and as we will see next forms the basis of the our achievability scheme for an arbitray matrix $A$ chosen for a sufficiently large field size.
\end{note}

\subsection{Optimal Achievable Scheme based on Maximally Recoverable Subcodes, when $m > 2\epsilon$} \label{sec:achievability}

We now present a valid scheme for the case $m > 2\epsilon$, having optimal communication cost $\ell = 2\epsilon$.  From Corollary \ref{cor:necessary} , we know that the $[n, 2\epsilon]$ code $\mathcal{C}_H$  must necessarily be an MRSC of $\mathcal{C}_A$. We also know from Lemma \ref{lem:mrc_exist} that such an MRSC always exists whenever the field size $q > 2\epsilon n^{2\epsilon}$.  Since $\mathcal{C}_H$ is a subcode of $\mathcal{C}_A$, there exists a $2\epsilon \times m$ matrix $S$ such that $H = SA$. Given the matrices $H$ and $S$, we now refer to Fig. \ref{fig:achievability} for a schematic of the decoder to be used. The various steps performed by the decoder to estimate $A(X^n + E^n)$ are as follows:

\begin{enumerate}[(1)]
\item Given the encoder output $H(X^n + E^n)$ and the side information $AX^n$, obtain $HE^n$ as 
\begin{eqnarray}
HE^n & = & H(X^n + E^n) - S\left(AX^n\right).
\end{eqnarray}
\item Determine any $\epsilon$-sparse vector $\widehat{E}^{n}$ such that $HE^n = H\widehat{E}^{n}$.  Since the vector $\left(E^n - \widehat{E}^{n}\right)$ is $2\epsilon$-sparse, and since $\mathcal{C}_H$ is a $2\epsilon$-dimensional MRSC of $\mathcal{C}_A$, we know from Part $4$. of Lemma \ref{lem:MRC_alt} that $HE^n = H\widehat{E}^{n} \implies AE^n = A\widehat{E}^{n}$.
\item Compute the desired estimate as $AX^n + A\widehat{E}^{n} = A(X^n + E^n)$. 
\end{enumerate}

\begin{figure*}[h]
  \centering
\includegraphics[width=6.5in]{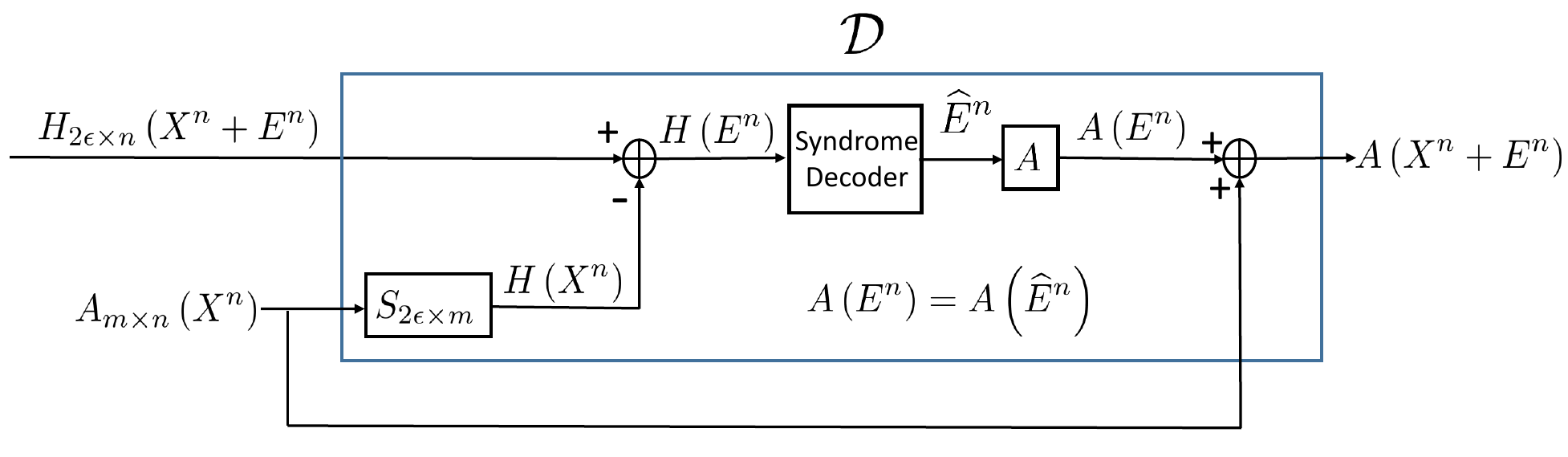}
  \caption{The decoder used in the achievability proof for the case $m > 2\epsilon$.}
  \label{fig:achievability}
\end{figure*}

\vspace{0.1in}

\begin{note} 
We note that it is possible to construct a valid scheme, not necessarily optimal, using any matrix $H$ which satisfies the conditions of Theorem \ref{thm:necessary}. This also implies that if $(H, \mathcal{D})$ is any valid scheme for the problem, and if $\mathcal{C}_H \not \subset \mathcal{C}_A$, then a  second valid scheme for the problem having lower communication cost can be constructed using the matrix $P$ for encoding, where $P$ is a generator matrix for  the code $\mathcal{C}_A \cap \mathcal{C}_H$. This follows because the matrix $P$ will also satisfy the conditions of the Theorem \ref{thm:necessary}. 
\end{note}

\vspace{0.1in}

\begin{note} [Decoding Complexity]\label{rem:decode_complex}
	We overlook the issue of decoding complexity in the above achievable scheme. The main purpose of the above scheme is two fold: $1)$ To establish the optimality of the lower bound on communication cost, provided in Corollary \ref{cor:necessary}, and $2)$ to show how MRSCs are useful (and necessary) in constructing an achievable scheme. The above scheme works for any arbitrary matrix $A$, as long as the field size is sufficiently large. In Fig. \ref{fig:achievability}, we assume syndrome decoding to determine the $\epsilon$-sparse vector $E^n$. For small values of $\epsilon$, which is the case if the updates are tiny, syndrome decoding can be efficiently implemented via look-up table. For large $\epsilon$, which occurs for instance, when the updates are a constant fraction of the size of the file, the decoding complexity is indeed high for the above scheme. 
\end{note}

\vspace{0.1in} 

\begin{note} \label{rem:using_lin}
In this remark, we comment about the usability of the explicit construction \ref{constr:mrc_linearized} for our achievable scheme. The above achievability result is based on MRSCs obtained via Lemma \ref{lem:mrc_exist}. It is a natural question as to whether one could instead use the explicit MRSCs obtained via Construction \ref{constr:mrc_linearized}. The motivation for this question arises from the issue of decoding complexity discussed in Remark \ref{rem:decode_complex}. Recall that Construction \ref{constr:mrc_linearized} is based on linearized polynomials, and these are 
closely connected to rank-metric codes (see Problem $11.4$, \cite{roth} for a quick definition, and the connection). Decoding complexity of rank-metric codes is a well studied topic \cite{loidreau2006welch, richter2004error, gabidulin1991fast, roth1991maximum, gadouleau2008complexity}, and efficient algorithms are known. Given this existing literature,  it is of interest to find out if one can use Construction \ref{constr:mrc_linearized} in our achievability proof, so that when the matrix $A$ is also structured (like in Examples \ref{ex:moti}, \ref{ex:bcast_use}), one might hope to get efficient decoding algorithms.

The short answer is to the above question is that we cannot directly use Construction \ref{constr:mrc_linearized} without a change in the system model. To explain this, consider the Construction \ref{constr:mrc_linearized}, and note that given the matrix $A$ over finite field $\mathbb{F}_q$, the construction obtains the MRSC over the extension field $\mathbb{F}_Q$, where $Q = q^m$, where $m = rank(A)$. A large $q$ does not help (like in the existence proof); the construction extends the field irrespective of what $q$ is. If we use this construction directly in the achievability scheme, the overall cost gets multiplied by a factor of $t$ and the communication cost is $2t\epsilon$ instead of $2\epsilon$. 
	
One possible workaround, in order to avoid taking a hit on the communication cost by a factor of $t$, is to change the system model as follows. Instead of encoding one vector $X^n$, we encode $t$ vectors $X_1^n, \ldots X_t^n$, where we assume that the receiver is interested in the $t$ linear combinations $AX_1^n, \ldots, AX_t^n$. In other words, the input to the encoder is now a matrix over $\mathbb{F}_q$ instead of a vector. Now, each row of the matrix is replaced with a single element in the extended field $\mathbb{F}_Q$, and let $X_{(Q)}^n$ denote 	new encoding vector over $\mathbb{F}_Q$. The receiver is interested in decoding $AX_{(Q)}^n$, where $A$ is as before, but can be considered as matrix over $\mathbb{F}_Q$ itself. With this model change, one can use Construction \ref{constr:mrc_linearized} to find an MRSC of $A$, and use that to encode $X_{(Q)}^n$, decode $AX_{(Q)}^n$ and finally recover $AX_1^n, \ldots, AX_t^n$ from $AX_{(Q)}^n$. The last step works since the elements of $A$ come from $\mathbb{F}_q$. In this modified set up, one can explore the utility of efficient decoding algorithms for rank-metric codes for the decoding problem at hand. Finally, we note that in the modified scheme the parameter $\epsilon$ denotes the sparsity of $X_{(Q)}^n$.
\end{note}	

\subsection{Bound on Communication cost with non-linear codes and  Optimality of linear encoding} \label{sec:nonlin}

In this section, we present a lower bound on the communication cost for the point-to-point case allowing for non-linear encoding. Recall that the bound in Corollary \ref{cor:necessary} assumed linear encoding. We show the interesting result that even with non-linear encoding the communication cost is lower bounded by the $\ell \geq \min(m, 2\epsilon)$, which is the same bound that we have in Corollary \ref{cor:necessary}. This shows that whenever a linear encoder exists that achieves cost of $\ell = \min(m, 2\epsilon)$, the linear encoder is optimum among the class of all encoders. Recall from the achievability proof that optimal linear encoder exists for any arbitrary matrix $A$, as long as the field size is sufficiently large. We note that while the bound obtained in this section subsumes the bound of Corollary \ref{cor:necessary}, the analysis here does not establish the necessity of MRSCs while restricting to linear encoders. For this reason, we choose to keep the discussion on non-linear encoding separate.

Our proof technique is based on finding a lower bound on the chromatic number of the characteristic graph, also known as confusability graph, associated with the function computation problem. Characteristic graphs have been used in the past for studying 
zero-error source compression problems~\cite{korner1998zero, doshi2010functional}. In \cite{korner1998zero}, the authors show how chromatic number of the characteristic graph appears as part of a lower bound for the communication cost when communicating a source $X$ to a distant receiver that has access to side information $Y$ (correlated with $X$), and is interested in recovering $X$. In our work, we rely on a result from \cite{doshi2010functional} that extends the above result to the setting when the receiver is interested in computing, with zero-error, a function of $X$ and $Y$. For sake of completeness, we provide a self-contained description of the analysis. 

\vspace{0.1in}

\begin{defn}[Characteristic Graph]
Consider the function computation problem defined in Fig. \ref{fig:sys_model}. Let $V = \{\hat{X}^n\}$ denote the set of all possible inputs to the encoder. In other words, $V$ is simply the collection of all possible vectors in $\mathbb{F}_q^n$.  Consider the graph $\mathcal{G}$ having $V$ as the set of vertices. Let $Y^n = AX^n$ denote the side-information available at the decoder. If $\hat{X}^n$ is the input, we know that $\text{Hamming wt.}(\hat{X}^n - X^n) \leq \epsilon$. In this case, we say that the pair $(\hat{X}^n, Y^n)$ can jointly occur. An edge exists between vertices corresponding to $\hat{X}_1^n$ and $\hat{X}_2^n$, if $A\hat{X}_1^n \neq A\hat{X}_2^n$ and if there is a $Y^n = AX^n$, for some $X^n$ such that both the pairs $(\hat{X}_1^n, Y^n)$ and $(\hat{X}_2^n, Y^n)$ can jointly occur.  The graph $\mathcal{G}$ is defined as the characteristic graph associated with the function computation problem. 
\end{defn}

\vspace{0.1in}

The graph is defined such that an edge exists between vertices $\hat{X}_1^n$ and $\hat{X}_2^n$ if $\hat{X}_1^n$ can be \textit{confused} for $\hat{X}_2^n$. For this reason, the graph is also referred to as the confusability graph. A coloring of $\mathcal{G}$ assigns colors to the its vertices such that any pair of neighboring vertices are assigned distinct colors. Two vertices are neighbors if there is a edge between them. The chromatic number, $\chi(\mathcal{G})$, is the minimum number of colors present in any coloring of $\mathcal{G}$. 

Let $f$ denote an encoder for the function computation problem in Fig. \ref{fig:sys_model}, where we allow $f$ to be non-linear. We define the notion of valid scheme for the encoder-decoder pair $(f, \mathcal{D})$ like we did for the case of linear encoders. Thus, an encoder-decoder pair $(f, \mathcal{D})$ will be referred to as a {\em valid scheme} for the problem in Fig. \ref{fig:sys_model} (with the encoder $H$ replaced with $f$), if the decoder's estimate of $A(X^n  + E^n)$ is correct for all $X^n, E^n \in \mathbb{F}_q^n$, such that $\text{Hamming wt}.(E^n) \leq \epsilon$. The following lemma is a paraphrasing of Lemma $26$ of \cite{doshi2010functional}.

\vspace{0.1in}

\begin{lem}
Any valid scheme $(f, \mathcal{D})$ for the function computation problem in Fig. \ref{fig:sys_model}(with the encoder $H$ replaced with $f$) induces a coloring of the characteristic graph $\mathcal{G}$ such that the number of colors in the coloring is given by $|Range(f)|$. Here $|Range(f)|$ is the cardinality of the set of all possible encoder outputs.
\end{lem}
\begin{proof}
Lets us assign $f(\hat{X}^n)$ as the color of vertex $\hat{X}^n$. We claim that this assignment is a coloring of $\mathcal{G}$. To prove this, let $\hat{X}_1^n$ and $\hat{X}_2^n$ be neighbors in $\mathcal{G}$. From the definition of $\mathcal{G}$, this means that $A\hat{X}_1^n \neq A\hat{X}_2^n$ and there is a $Y^n = AX^n$, for some $X^n$ such that both the pairs $(\hat{X}_1^n, Y^n)$ and $(\hat{X}_2^n, Y^n)$ can jointly occur. In this case, since $(f, \mathcal{D})$ is a valid scheme, it must necessarily be true that $f(\hat{X}_1^n) \neq f(\hat{X}_1^n)$; otherwise it is impossible for the decoder $\mathcal{D}$ to distinguish between $\hat{X}_1^n$ and $\hat{X}_2^n$, when $Y^n$ appears as side-information. This proves that the assignment $f(\hat{X}^n)$ as the color of vertex $\hat{X}^n$ results in a coloring of $\mathcal{G}$.
\end{proof}	

\vspace{0.1in}
 
\begin{cor} \label{cor:chrom}
For any valid scheme $(f, \mathcal{D})$ for the function computation problem in Fig. \ref{fig:sys_model}(with the encoder $H$ replaced with $f$), we have 	$|Range(f)| \geq \chi(\mathcal{G})$, where $\mathcal{G}$ is the characteristic graph associated with the function computation problem.
\end{cor}

\vspace{0.1in}

In the following lemma, we provide a lower bound on the chromatic number $\chi(\mathcal{G})$. We assume that $m \geq 2\epsilon$. The case when $m < 2\epsilon$ can be similarly handled.

\begin{lem} \label{lem:chrom}
The chromatic number $\chi(\mathcal{G})$ of the characteristic graph $G$ associated with function computation problem in Fig. \ref{fig:sys_model} is lower bounded by $\chi(\mathcal{G}) \geq q^{2\epsilon}$, whenever $m = \rho(A) \geq 2\epsilon$.
\end{lem}	
\begin{proof}
Without loss of generality, assume that $\rho(A|_{[2\epsilon]}) = 2\epsilon$. Let $\hat{X}_1^n$ and $\hat{X}_2^n$ be such that $\text{supp}(\hat{X}_1^n) \subseteq [2\epsilon], \text{supp}(\hat{X}_2^n) \subseteq [2\epsilon]$.  Clearly, if $\hat{X}_1^n \neq \hat{X}_2^n$, it must be true that $A\hat{X}_1^n \neq A\hat{X}_2^n$, otherwise we have $A(\hat{X}_1^n - \hat{X}_2^n) = 0$, which violates the assumption that  $\rho(A|_{[2\epsilon]}) = 2\epsilon$. Further, since both $\hat{X}_1^n$ and $\hat{X}_2^n$ have their supports limited to the first $2\epsilon$ coordinates, one can find a vector $X^n$ such that $\text{Hamming wt.}(\hat{X}_1^n - X^n) \leq \epsilon$ and  $\text{Hamming wt.}(\hat{X}_2^n - X^n) \leq \epsilon$. Now consider the funciton computation probem where the side information $Y^n$ is given by $Y^n = AX^n$, where $X^n$ is as above. Thus the both pairs  $(\hat{X}_1^n, Y^n)$ and $(\hat{X}_2^n, Y^n)$ can jointly occur. 	Combining the above observations, we conclude that $\hat{X}_1^n$ and $\hat{X}_2^n$ must be neighbors in the characteristic graph $G$ . This then implies that if we consdier the subgraph of $\mathcal{G}$ obtained by restricting $\mathcal{G}$ to the set of $q^{2\epsilon}$ vertices corresponding to $q^{2\epsilon}$ vectors whose support is confined to $[2\epsilon]$, this subgraph is a fully-connected graph (clique). From this we conclude that $\chi(\mathcal{G}) \geq q^{2\epsilon}$, whenever $m \geq 2\epsilon$.
\end{proof}	

\vspace{0.1in}

\begin{thm}
The communication cost of associated with any valid scheme $(f, \mathcal{D})$ for the function computation problem in Fig. \ref{fig:sys_model}(with the encoder $H$ replaced with $f$), is lower bounded by $\ell \geq 2\epsilon$,  whenever $m = \rho(A) \geq 2\epsilon$.
\end{thm}
\begin{proof}
Communication cost is given by $\log_q(|Range(f)|)$. The proof now follows by combining Corollary \ref{cor:chrom} with Lemma \ref{lem:chrom}.
\end{proof}

\vspace{0.1in}

\begin{note}
The determination of chromatic number for a general graph is an NP-complete problem. In the above analysis, we  obtained lower bound on the chromatic number for the class of characteristic graphs associated with the function computation problem. It is interesting to note that our achievability proof based on MRSCs provide exact characterization of the chromatic number for the characteristic graph, whenever the field size is sufficiently large. It is unclear to us whether there is an alternate direct method to find chromatic number of the characteristic graph. If such a method exists, it might give further insights on the field size requirement of MRSCs.
\end{note}

\section{Updating Linearly Encoded Striped-Data-Files} \label{sec:p2p-example}

In this section, we present a low-field MRSC construction for the setting of Example \ref{ex:moti}, where we allow striping and encoding by an arbitrary $[N, K]$ linear code $\mathcal{C}_{st}$ over $\mathbb{F}_q$. Recall that in Example \ref{ex:moti}, we apply the point-to-point model in Fig. \ref{fig:sys_model} in order to update the coded data in any one of the $N$ storage nodes in the system. The $K$-length coding vector for one of the storage nodes is given by ${\bf a}  = [a_1 \ a_2 \ \cdots \ a_K], a_i \in \mathbb{F}_q$. The vector $X^n$ denotes the uncoded data file, which is divided into $m$ stripes each of length $K$ symbols. The first stripe consists of the first $K$ symbols of $X^n$, the second stripe consists of symbols ${X_{K+1}, \ldots, X_{2k}}$, and so on. We assume that the length $n = mK$. The overall coding matrix $A$, which corresponds to the desired function at the destination, is given by 
\begin{eqnarray} \label{eq:Aspecial}
A & = & \left[ \begin{array}{cccc}
{\bf a} & & & \\ & {\bf a} & &  \\ & & \ddots & \\ & & & {\bf a}
\end{array} \right].
\end{eqnarray} 
If we design the optimal encoder $H$ based on existence of MRSCs from Lemma \ref{lem:mrc_exist}, then the field size $q$ must be at least $2\epsilon n^{2\epsilon}$. 

We now present an alternate low field size explicit construction of MRSCs of $\mathcal{C}_A$, when the matrix $A$ takes on the special form given in \eqref{eq:Aspecial}. For the coding vector ${\bf a}  = [a_1 \ a_2 \ \cdots \ a_K]$, if assume that $a_i \neq 0, \forall i \in [K]$,  then note that an $[n, 2\epsilon]$ MRSC $\mathcal{C}_H$ of $\mathcal{C}_A$ is in fact an $[n, 2\epsilon] (K, 1)$ MRC with locality (see Definition \ref{defn:partial_mds}), i.e., a code with locality where all the local codes are scaled repetition codes. The problem of low field size constructions of MRCS with locality is in general an open problem. However, as we show here, for the special case when the local codes appear as repetition codes, low field size constructions are easily identified.

We divide the discussion into two parts. We will first show how MRSCs of any general code $\mathcal{C}_0$ can be obtained by first suitably {\em extending} $\mathcal{C}_0$, and then shortening the extended code. Given this result, we then show that the problem of constructing an MRSC of $\mathcal{C}_A$ where $A$ is as given by \eqref{eq:Aspecial}, is as simple as finding an $[m, m-2\epsilon]$ MDS code over $\mathbb{F}_q$. If we employ Reed-Solomon codes, a field size $q > m$ is sufficient for the construction.

\subsection{An Equivalent Definition of MRSCs via Code Extensions} \label{sec:mrsc_extension}

Consider an $[n, t]$ code $\mathcal{C}_0$ over $\mathbb{F}_q$ having generator matrix $G_0$. Define a new $[n + \Delta, t]$ code $\mathcal{C}_0^{(e)}$ over $\mathbb{F}_q$ whose generator matrix $G_0^{(e)}$ is given by
\begin{eqnarray}
G_0^{(e)} & = & [G_0 \ | \ Q],
\end{eqnarray}
for some $t \times \Delta$ matrix $Q$ over $\mathbb{F}_q$. Code $\mathcal{C}_0^{(e)}$ will be referred to as extension of $\mathcal{C}_0$. In the following lemma, we show that MRSCs of $\mathcal{C}_0$ exist if and only if 
certain extensions of $\mathcal{C}_0$ exist.

\vspace{0.1in}

\begin{lem} \label{lem:MRSC_via_shortening}
Consider an $[n + \Delta, t]$ extension $\mathcal{C}_0^{(e)}$ of the $[n, t]$ code $\mathcal{C}_0$ such that $(i)$ $\Delta < t$, and $(ii)$ for any $S \subseteq [n], |S| = t - \Delta$, we have 
\begin{eqnarray} \label{eq:extend_property}
\rho\left(G_0^{(e)}|_{S \cup \{n+1, \ldots, n+\Delta\}}\right) & = & \rho\left(G_0|_{S}\right) + \Delta.
\end{eqnarray}
Then the shortened code $\left(\mathcal{C}_0^{(e)}\right)^{[n]}$ is an $[n, k = t - \Delta]$ MRSC of $\mathcal{C}_0$. Conversely, suppose that there exists an $[n, k]$ MRSC of the $[n, t]$ code $\mathcal{C}_0$, where $k = t - \Delta$ for some $\Delta < t$. Then, there exists an $[n + \Delta, t]$ extension $\mathcal{C}_0^{(e)}$ of the $[n, t]$ code $\mathcal{C}_0$ that satisfies \eqref{eq:extend_property} for any $S \subseteq [n], |S| = t - \Delta$.
\end{lem}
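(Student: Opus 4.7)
The plan is to recast both directions in terms of the $\Delta$-dimensional subspace $W := \operatorname{col.span}(Q) \subseteq \mathbb{F}_q^t$ and its intersection with the column spans $V_S := \operatorname{col.span}(G_0|_S)$, reducing the extension property \eqref{eq:extend_property} and the MRSC property to essentially the same transversality condition $V_S \cap W = \{0\}$.

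For the forward direction, I would first observe that \eqref{eq:extend_property}, evaluated for any one $S$, forces $\rho(Q) = \Delta$, so the shortened code $(\mathcal{C}_0^{(e)})^{[n]}$ has dimension exactly $k = t - \Delta$. Pick a $(t-\Delta) \times t$ matrix $N$ whose rows form a basis of the left null space of $Q$; then $G := NG_0$ is a generator matrix for $(\mathcal{C}_0^{(e)})^{[n]}$, because the codewords of $\mathcal{C}_0^{(e)}$ that are zero on $\{n+1,\ldots,n+\Delta\}$ are precisely those of the form $\alpha^{\top}[G_0 \mid Q]$ with $\alpha^{\top}Q = 0$. For any $S \subseteq [n]$ I'd then use the identity
$$\rho(G|_S) \;=\; \rho(N G_0|_S) \;=\; \dim V_S - \dim(V_S \cap \ker N),$$
together with $\ker N = (\operatorname{row.span} N)^{\perp} = W$. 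Since \eqref{eq:extend_property} is equivalent to $V_S + W$ being a direct sum, i.e. $V_S \cap W = \{0\}$, specializing to $|S| = k$ with $\rho(G_0|_S) = k$ yields $\rho(G|_S) = k$, which is exactly the defining condition of an MRSC (Definition \ref{defn:mrsc}).

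For the converse, start from an $[n,k]$ MRSC $\mathcal{C}$ of $\mathcal{C}_0$. Since $\mathcal{C} \subseteq \mathcal{C}_0$, its generator matrix can be written as $G = N G_0$ for some $N \in \mathbb{F}_q^{(t-\Delta)\times t}$ of full row rank. Choose $Q \in \mathbb{F}_q^{t \times \Delta}$ so that its columns form a basis of $(\operatorname{row.span} N)^{\perp}$, and declare $\mathcal{C}_0^{(e)}$ the extension with generator $[G_0 \mid Q]$. Reversing the computation above, the MRSC hypothesis translates directly into $V_{S'} \cap W = \{0\}$ whenever $|S'| = k$ and $\rho(G_0|_{S'}) = k$, which is precisely \eqref{eq:extend_property} for such $S'$.

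The main obstacle is that \eqref{eq:extend_property} is demanded for every $S$ of size $k$, including those with $\rho(G_0|_S) < k$, whereas the MRSC hypothesis only covers the full-rank case. To close this gap I would, for such an $S$, pick a maximal linearly independent subcollection of columns of $G_0$ indexed by $S_0 \subseteq S$, so that $V_S = V_{S_0}$ and $|S_0| < k$; then, using $\rho(G_0) = t \geq k$, extend $S_0$ to a set $S' \subseteq [n]$ with $|S'| = k$ and $\rho(G_0|_{S'}) = k$ via the standard basis-extension argument for columns of $G_0$. Since $V_S = V_{S_0} \subseteq V_{S'}$, the inclusion $V_S \cap W \subseteq V_{S'} \cap W = \{0\}$ yields \eqref{eq:extend_property} for $S$ as well. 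This extension-of-an-independent-subset step is the only place where the argument goes beyond a direct duality calculation, and it is what makes the equivalence tight for all $S$ of size $k$.
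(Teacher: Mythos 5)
Your proof is correct and takes a genuinely different route from the paper. The paper proves both directions on the dual side: it invokes Part~$4$ of Lemma~\ref{lem:MRC_alt} to rephrase the MRSC property in terms of shortened dual codes, exhibits explicit parity-check matrices for $\mathcal{C}_0^{(e)}$ and the shortened code, and in each direction runs a proof by contradiction by lifting a ``bad'' dual codeword ${\bf c}$ of the shortened code to a dual codeword ${\bf c}^{(e)}$ of $\mathcal{C}_0^{(e)}$ that violates \eqref{eq:extend_property} (or, for the converse, the MRSC property). Your argument instead stays entirely on the primal side: by viewing $W = \operatorname{col.span}(Q)$ and $V_S = \operatorname{col.span}(G_0|_S)$ inside $\mathbb{F}_q^t$, you observe that both \eqref{eq:extend_property} and the defining condition $\rho(G|_S) = k$ collapse to the single transversality condition $V_S \cap W = \{0\}$, so the equivalence becomes nearly tautological once the generator matrix $G = N G_0$ of the shortened code (with $\ker N = W$) is identified. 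This buys a uniform treatment of both directions without any contradiction argument, and makes the role of $Q$ conceptually transparent: it selects a $\Delta$-dimensional complement whose avoidance by all full-rank $V_S$ is precisely maximal recoverability. The extra step you add at the end --- extending a rank-deficient $S$ to a full-rank $S'$ of size $k$ --- is the primal analogue of the paper's reduction to sets of cardinality $\leq k$ (the paper proves \eqref{eq:temp_appB_2} for full-rank sets of size $k' \leq k$ and notes, without spelling it out, that this suffices); you make that reduction explicit, which is a small but welcome clarification.
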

\begin{proof}
We only prove the forward part here, a proof of converse appears in Appendix \ref{app:converse_MRC_extension}. To prove the forward part, we assume the existence of extension $\mathcal{C}_0^{(e)}$ satisfying \eqref{eq:extend_property}, and show that the shortened code $\mathcal{C} \triangleq \left(\mathcal{C}_0^{(e)}\right)^{[n]}$ is an $[n, k = t - \Delta]$ MRSC of $\mathcal{C}_0$.  From Part $4$, Lemma \ref{lem:MRC_alt}, we know it is sufficient to prove that 
\begin{eqnarray}
\left(\mathcal{C}^{\perp}\right)^{S} & = & \left(\mathcal{C}_0^{\perp}\right)^{S} \ \forall \ S \subseteq [n],  |S| \leq k. \label{eq:temp_shortening}
\end{eqnarray}	
Let us suppose \eqref{eq:temp_shortening} is not true; i.e., there exist a set $S^* \subseteq [n],  |S^*| \leq k$ such that  $\left(\mathcal{C}_0^{\perp}\right)^{S^*} \subsetneq \left(\mathcal{C}^{\perp}\right)^{S^*}$. In other words, there exists a vector ${\bf c} \in \mathcal{C}^{\perp}$ such that ${\bf c} \notin \mathcal{C}_0^{\perp}$ and $\text{supp}({\bf c}) \subseteq S^*$. Next, note that if $H_0$ denotes a parity check matrix for $\mathcal{C}_0$, then parity check matrices $H_0^{(e)}$ and $H$ for the codes $\mathcal{C}_0^{(e)}$ and $\mathcal{C}$ can be respectively given by 
\begin{eqnarray}
H_0^{(e)} \ = \ \left[ \begin{array}{cc} H_0 & 0_{n-t \times \Delta} \\ H_e & I_{\Delta}  \end{array} \right] & \text{and} & H = \left[ \begin{array}{c} H_0 \\  H_e \end{array} \right],
\end{eqnarray}
where $H_e$ is some $\Delta \times n$ matrix over $\mathbb{F}_q$, and $I_{\Delta}$ denotes the $\Delta \times \Delta$ identity matrix. The fact that the parity-check matrix $H$ for the code $\mathcal{C}$ is as given above, follows from the fact the dual of a shorten code is simply the punctured code of the dual code. In this case, corresponding to the vector ${\bf c}$, there exists a vector ${\bf c}^{(e)} \in \left(\mathcal{C}_0^{(e)}\right)^{\perp}$ such that 
\begin{itemize}
	\item $\text{supp}\left({\bf c}^{(e)}\right) \subseteq S^{*} \cup \{n + 1, \cdots, n + \Delta\} $,
	\item ${\bf c} = \left({\bf c}^{(e)}\right)|_{S^{*}}$, and
	\item the set $\text{supp}\left({\bf c}^{(e)}\right) \cap \{n + 1, \cdots, n + \Delta\}$ is not empty. The last part follows from our assumption that ${\bf c} \notin \mathcal{C}_0^{\perp}$.
\end{itemize}
The presence of the vector ${\bf c}^{(e)} \in \left(\mathcal{C}_0^{(e)}\right)^{\perp}$ satisfying the above conditions means that 
\begin{eqnarray} 
\rho\left(G_0^{(e)}|_{\mathcal{S^*} \cup \{n+1, \ldots, n+\Delta\}}\right) & < & \rho\left(G_0|_{\mathcal{S^*}}\right) + \Delta,
\end{eqnarray}
which is a contradiction to our assumption that the extension $\mathcal{C}_0^{(e)}$ satisfies \eqref{eq:extend_property}. From this we conclude that \eqref{eq:temp_shortening} is indeed true, and this completes the proof of the forward part.
\end{proof}

\subsection{Low Field Size MRSCs for Updating Striped-Data-Files} 

\begin{constr} \label{constr:mrc_Aspecial}
Consider the matrix $A \in \mathbb{F}_q^{m \times n}$ as given in \eqref{eq:Aspecial}, and the associated $[n, m]$ linear code $\mathcal{C}_A$ over $\mathbb{F}_q$. Next consider the $[n + m - 2\epsilon, m]$ extended code $\mathcal{C}_A^{(e)}$ over $\mathbb{F}_q$, whose generator matrix $A^{(e)}$ is given by
\begin{eqnarray} \label{eq:Aextended}
A^{(e)} & = & \left[ A \ | \ Q\right],
\end{eqnarray}
where $Q \in \mathbb{F}_q^{m \times (m - 2\epsilon)}$, and is such that $Q^T$ generates an $[m, m - 2\epsilon]$ MDS code over $\mathbb{F}_q$. The candidate for the desired MRSC $\mathcal{C}_H$ is given by 
\begin{eqnarray}
\mathcal{C}_H & = & \left(\mathcal{C}_A^{(e)}\right)^{[n]}.
\end{eqnarray}
Note that $\mathcal{C}_H$ is an $[n, 2\epsilon]$ code. This completes the description of the construction. 
\end{constr}	

\vspace{0.1in} 

\begin{thm} \label{thm:p2p_example}
The $[n, 2\epsilon]$ code $\mathcal{C}_H$ obtained in Construction \ref{constr:mrc_Aspecial} is an MRSC of the $[n, m]$ code $\mathcal{C}_A$, where $A$ is given in \eqref{eq:Aspecial}.
\end{thm}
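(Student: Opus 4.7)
The plan is to apply the forward direction of Lemma \ref{lem:MRSC_via_shortening} with $t = m$, $\Delta = m - 2\epsilon$, and $k = 2\epsilon$. Since $\mathcal{C}_H$ is defined precisely as the shortening $\left(\mathcal{C}_A^{(e)}\right)^{[n]}$, establishing the rank identity \eqref{eq:extend_property} for every $S \subseteq [n]$ with $|S| = 2\epsilon$ will immediately give the theorem. Writing $A^{(e)} = [A \mid Q]$, the task reduces to showing
\[
\rho\bigl([A|_S \mid Q]\bigr) \;=\; \rho(A|_S) + (m - 2\epsilon).
\]

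I would first exploit the block-diagonal form of $A$. Let $J \subseteq [m]$ be the set of row indices $i$ for which the $i$-th row of $A|_S$ is nonzero, i.e., indices of blocks in which $S$ hits a column where the corresponding entry of ${\bf a}$ is nonzero. The nonzero rows of $A|_S$ have pairwise disjoint supports within the $2\epsilon$ chosen columns, so $\rho(A|_S) = |J|$. Moreover, this disjoint-support property makes the linear condition $v^T A|_S = 0$ decouple entrywise, yielding the clean characterization that $v \in \mathbb{F}_q^m$ satisfies $v^T A|_S = 0$ if and only if $v_i = 0$ for every $i \in J$.

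I would then compute $\rho([A|_S \mid Q])$ via its left null space, which is the set of $v$ simultaneously satisfying $v^T A|_S = 0$ and $v^T Q = 0$. The condition $v^T Q = 0$ places $v$ in the dual of the $[m, m - 2\epsilon]$ MDS code generated by $Q^T$; since the dual of an MDS code is MDS, this dual is an $[m, 2\epsilon]$ MDS code. Intersecting with the subspace $\{v : v_i = 0 \text{ for all } i \in J\}$, and using that any $|J| \leq 2\epsilon$ coordinates form part of an information set of an $[m, 2\epsilon]$ MDS code, this intersection has dimension exactly $2\epsilon - |J|$. Hence
\[
\rho\bigl([A|_S \mid Q]\bigr) \;=\; m - (2\epsilon - |J|) \;=\; |J| + (m - 2\epsilon) \;=\; \rho(A|_S) + (m - 2\epsilon),
\]
which verifies \eqref{eq:extend_property} and completes the proof by Lemma \ref{lem:MRSC_via_shortening}.

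I do not expect a genuine obstacle; the argument is essentially a bookkeeping exercise combining the disjoint-support structure of $A$ with the standard erasure-recovery property of the MDS dual. The only care-worthy step is the combined dimension count: one must recognize that the equations $v_i = 0$ for $i \in J$ together with membership in the $[m, 2\epsilon]$ MDS dual puncture exactly $|J|$ coordinates of an information set, yielding the \emph{exact} residual dimension $2\epsilon - |J|$ rather than merely an inequality, which is what promotes the easy upper bound on the rank into an equality.
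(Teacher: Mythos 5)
Your proposal is correct and follows exactly the same route as the paper: invoke the forward direction of Lemma \ref{lem:MRSC_via_shortening} and verify the rank identity \eqref{eq:extend_property}. The paper dismisses that verification as ``straightforward to check,'' whereas you supply the actual bookkeeping (the disjoint-support characterization of the left kernel of $A|_S$ and the shortening argument on the $[m,2\epsilon]$ MDS dual of $Q^T$), so yours is simply a fully spelled-out version of the paper's proof.
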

\begin{proof}
It is straightforward to check that the extended matrix $A^{(e)}$ in \eqref{eq:Aextended} satisfies the condition given in \eqref{eq:extend_property}, i.e., for any $S \subseteq [n], |S| = 2\epsilon$, we have 
\begin{eqnarray} 
\rho\left(A^{(e)}|_{S \cup \{n+1, \ldots, n + m - 2\epsilon\}}\right) & = & \rho\left(A|_{S}\right) + (m - 2\epsilon).
\end{eqnarray}
The proof now follows from the forward part of Lemma \ref{lem:MRSC_via_shortening}.
\end{proof}

\vspace{0.1in} 

\begin{note}
The converse part of Lemma \ref{lem:MRSC_via_shortening} can be used to show that given any $[n, 2\epsilon]$ MRSC $\mathcal{C}_H$ of $\mathcal{C}_A$, and if the vector $\bf{a}$ is such that  $a_i \neq 0 \  \forall i \in [K]$, then there exists an extended matrix $A^{(e)} = \left[ A \ | \ Q\right]$ such that $Q^T$ generates an $[m, m - 2\epsilon]$ MDS code over $\mathbb{F}_q$. Note that the assumption $a_i \neq 0$ was not imposed in Theorem \ref{thm:p2p_example}.
\end{note}

\vspace{0.1in} 

\begin{note} \label{rem:easy}
A generator matrix $H$ for the shortened code $\mathcal{C}_H$ as desired in Construction \ref{constr:mrc_Aspecial} can be obtained as follows. Let $W$ denote the $2\epsilon \times n$ generator matrix for the dual of the code generated by $Q^T$. Note that the $W$ generates an $[n, 2\epsilon]$ MDS code. The matrix $H$ is simply given by $H = WA^{(e)}$. The fact that $H$ generates the shortened code follows from the fact that $WQ = 0$. In fact, what this shows is that for the special case when the function $A$ is as given by \eqref{eq:Aspecial}, an optimal linear encoder can simply be constructed by pre-multiplying $A$ with the generator matrix of any $[n, 2\epsilon]$ MDS code. The reason why this works is as follows. If $E^n$ is any $\epsilon$ sparse vector, then $F^m = AE^n$ is also $\epsilon$-sparse, whenever $A$ takes on the form in \eqref{eq:Aspecial}. Thus, one can assume that the encoder has access to $Y^m + F^m$, the decoder has access to $Y^m = AX^n$, and is interested in computing $Y^m + F^m$. This is the special case of the function computation problem in Fig. \ref{fig:sys_model}, where the function $A$ is the identity matrix. It is well-known that for this special case, optimal encoding can be performed via the generator matrix of an MDS code. 

We note that above observation works with any scalar code (not necessarily Reed-Solomon code). However, the technique need not work while updating systems that use vector codes (like regenerating codes). In other words, Construction \ref{constr:mrc_Aspecial} (or the technique of pre-multiplying $A$ with an MDS matrix) is special to scalar codes. For vector codes, one needs to find out alternate ways of finding the extended code that satisfies Lemma \ref{lem:MRSC_via_shortening}  so that the shortened code generates the optimal encoder.  It is our hope that Lemma \ref{lem:MRSC_via_shortening} proves useful while constructing MRSCs while dealing with vector codes.
\end{note}

\section{Broadcasting to Receivers Interested in Different Functions} \label{sec:broadcast}
 
In the broadcast setting (see Fig. \ref{fig:sys_model_diff_function}), we consider two receivers that are interested in computing two separate linear functions of the message vector. The functions correspond to the matrices $A$ and $B$, respectively. Though the rest of the parameters are similar to those of the point-to-point case, we quickly describe them here again for sake of clarity.  The vector $X^n \in \mathbb{F}_q^n$ denotes the initial source message. The two receivers hold $AX^n$ and $BX^n$ as side information, respectively.  The matrices $A$ and $B$ have sizes $m_A \times n$ and $m_B \times n$, respectively, where $m_A, m_B \leq n$. Without loss of generality, we assume that $\rho(A) = m_A$ and $\rho(B) = m_B$. The updated source message  is given by vector $X^n + E^n$, where the difference-vector $E^n$ is $\epsilon$-sparse. Encoding is carried out via the $\ell \times n$ matrix $H$.
The goal is to recover the functions $A(X^n + E^n)$ and $B(X^n + E^n)$ at the respective receivers. 
The decoders used at the two receivers are denoted by $\mathcal{D}_1$ and  $\mathcal{D}_2$, respectively. Once again, we assume a zero-probability-of-error, worst-case-scenario model. We assume knowledge of the functions $A$, $B$ and the parameter $\epsilon$, while designing the encoder $H$. The communication cost for the model is given by $\ell$, assuming that the parameters $n, q, \epsilon, A$ and $B$ are fixed.  The triplet $(H, \mathcal{D}_1, \mathcal{D}_2)$ will be referred to as a {\em valid scheme} for the broadcast problem, if both the decoders' estimates are correct for all $X^n, E^n \in \mathbb{F}_q^n$ such that $\text{Hamming wt}.(E^n) \leq \epsilon$. 

Our goal in this section is to identify necessary and sufficient conditions on valid schemes for the broadcast problem. Specifically, we are interested in characterizing the minimum communication cost (among valid schemes) that can be achieved for the setting. We divide the discussion into three parts. We first consider the special case when the two linear codes $\mathcal{C}_A$ and $\mathcal{C}_B$ intersect trivially, i.e., $\mathcal{C}_A \cap \mathcal{C}_B = \{\bf{0}\}$. The optimal  communication cost for this case is straightforward to compute, given the observations from the point-to-point case. We then present necessary and sufficient conditions for the existence of sandwiched MRSCs. Recall that in the problem of sandwiched MRSCs, we begin with a code $\mathcal{C}_0$ and a subcode $\widehat{\mathcal{C}}$ of  $\mathcal{C}_0$. The goal is to identify a MRSC $\mathcal{C}$ of  $\mathcal{C}_0$ such that $\widehat{\mathcal{C}} \subset \mathcal{C}$.  Finally, we will show how the concept of sandwiched MRSCs can be used to identify valid schemes having optimal communication cost for the case of arbitrary matrices $A$ and $B$. Illustration of the results will be done by analyzing the settings in Examples  \ref{ex:bcast_nouse} and \ref{ex:bcast_use}.

\subsection{Optimal Communication Cost for  the Case $\mathcal{C}_A \cap \mathcal{C}_B = \{\bf{0}\}$}

Let $(H, \mathcal{D}_1, \mathcal{D}_2)$ denote a valid scheme for the case $\mathcal{C}_A \cap \mathcal{C}_B = \{\bf{0}\}$. For decoders $\mathcal{D}_1$ and $\mathcal{D}_2$ to be successful, we know from Theorem $\ref{thm:necessary}$ that $\text{dim}(\mathcal{C}_A \cap \mathcal{C}_H) \geq  \min(m_A, 2\epsilon)$ and $\text{dim}(\mathcal{C}_B \cap \mathcal{C}_H) \geq  \min(m_B, 2\epsilon)$, respectively. From this if follows that for the case when $\text{dim}(\mathcal{C}_A + \mathcal{C}_B) = \text{dim}(\mathcal{C}_A) + \text{dim}(\mathcal{C}_B)$, the communication cost $\ell = \text{dim}(\mathcal{C}_H)$ is lower bounded by 
\begin{eqnarray}
\ell & \geq & \min(m_A, 2\epsilon) + \min(m_B, 2\epsilon).
\end{eqnarray}
Given the achievability result from Section \ref{sec:achievability}, we see that the above bound is trivially achieved by encoding separately for the two receivers, where each of the two encodings is optimal for the respective receiver.  We formally state the above observations in the following theorem:

\vspace{0.1in}

\begin{thm} \label{thm:bcast_nouse}
The optimal communication cost for the broadcast setting in Fig. \ref{fig:sys_model_diff_function}  is given by 
\begin{eqnarray}
\ell & = & \min(m_A, 2\epsilon) + \min(m_B, 2\epsilon),
\end{eqnarray}
whenever the codes $\mathcal{C}_A $ and $\mathcal{C}_B$ intersect trivially. Achievability is guaranteed under the assumption that the field size $q > 2\epsilon n^{2\epsilon}$.
\end{thm}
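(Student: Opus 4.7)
The plan is to handle the converse and the achievability separately, with the converse being a direct consequence of Theorem~\ref{thm:necessary} applied to each receiver and the achievability being a trivial time-sharing of two point-to-point schemes from Section~\ref{sec:achievability}.

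For the converse, fix any valid scheme $(H, \mathcal{D}_1, \mathcal{D}_2)$. I would view this scheme from the perspective of each receiver in isolation. Receiver $1$ sees the encoder output $H(X^n+E^n)$ together with its side information $AX^n$, and must produce $A(X^n+E^n)$ with zero error for all $\epsilon$-sparse $E^n$; this is exactly the point-to-point setting of Fig.~\ref{fig:sys_model}. Hence Theorem~\ref{thm:necessary} (Part~2) forces $\dim(\mathcal{C}_A\cap \mathcal{C}_H)\geq \min(m_A,2\epsilon)$, and the analogous argument for receiver $2$ yields $\dim(\mathcal{C}_B\cap \mathcal{C}_H)\geq \min(m_B,2\epsilon)$. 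Now invoke the hypothesis $\mathcal{C}_A\cap \mathcal{C}_B=\{\mathbf{0}\}$: since $(\mathcal{C}_A\cap\mathcal{C}_H)\cap(\mathcal{C}_B\cap\mathcal{C}_H)\subseteq \mathcal{C}_A\cap \mathcal{C}_B=\{\mathbf{0}\}$, the two subcodes of $\mathcal{C}_H$ intersect trivially and
\begin{equation*}
\ell \;=\; \dim(\mathcal{C}_H)\;\geq\; \dim\bigl((\mathcal{C}_A\cap \mathcal{C}_H)+(\mathcal{C}_B\cap \mathcal{C}_H)\bigr)\;=\;\dim(\mathcal{C}_A\cap \mathcal{C}_H)+\dim(\mathcal{C}_B\cap \mathcal{C}_H),
\end{equation*}
which gives the desired lower bound $\ell\geq \min(m_A,2\epsilon)+\min(m_B,2\epsilon)$.

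For the achievability, I would construct the encoder by concatenation. Using the point-to-point construction from Section~\ref{sec:achievability}, pick an optimal encoder $H_A$ for $(A,\epsilon)$ of rank $\min(m_A,2\epsilon)$ and an optimal encoder $H_B$ for $(B,\epsilon)$ of rank $\min(m_B,2\epsilon)$, both of which exist once $q>2\epsilon n^{2\epsilon}$ by Lemma~\ref{lem:mrc_exist} (in the regime where the $2\epsilon$ bound is binding; the $H=A$ case in Corollary~\ref{cor:necessary}(b) needs no field-size assumption). Then set
\begin{equation*}
H \;=\; \begin{bmatrix} H_A \\ H_B\end{bmatrix}.
\end{equation*}
The decoder $\mathcal{D}_i$ simply reads the coordinates of $H(X^n+E^n)$ generated by $H_i$ and runs the corresponding point-to-point decoder from Section~\ref{sec:achievability}. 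This yields communication cost $\min(m_A,2\epsilon)+\min(m_B,2\epsilon)$, matching the lower bound.

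The only non-routine step is the converse, and within it the one substantive observation is that Theorem~\ref{thm:necessary} can indeed be invoked separately on each receiver: the encoder $H$ is shared, but each receiver's correctness requirement is a standalone point-to-point constraint and the theorem applies verbatim. Everything else---the dimension inequality from the trivial intersection, and the construction by stacking---is mechanical.
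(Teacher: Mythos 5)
Your proposal is correct and follows essentially the same route as the paper: invoke Theorem~\ref{thm:necessary} separately at each receiver to lower-bound $\dim(\mathcal{C}_A\cap\mathcal{C}_H)$ and $\dim(\mathcal{C}_B\cap\mathcal{C}_H)$, use trivial intersection of $\mathcal{C}_A$ and $\mathcal{C}_B$ to add the two dimensions, and achieve by stacking two point-to-point optimal encoders. The only difference is that you spell out the intermediate step $(\mathcal{C}_A\cap\mathcal{C}_H)\cap(\mathcal{C}_B\cap\mathcal{C}_H)=\{\mathbf{0}\}$ explicitly, which the paper leaves implicit.
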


\vspace{0.1in}

The following example illustrates the case under consideration. 

\vspace{0.1in}

\textit{Example \ref{ex:bcast_nouse} Revisited:}
Consider the setting in Example \ref{ex:bcast_nouse}, where  striping and encoding of a data file was done by an $[N, K]$ linear code over $\mathbb{F}_q$. Assume that the code is an $[N, K]$ MDS code with $K \geq 2$. We apply the broadcast model to simultaneously update the contents of two out of the $N$ storage nodes. Also, recall our assumption that the $K$-length coding vectors associated with the two storage nodes are given by ${\bf a} = [a_1 \ a_2 \ \ldots a_K], {\bf b} = [b_1 \ b_2 \ \ldots b_K], a_i, b_i \in \mathbb{F}_q, i \in [K]$, and the overall coding matrices $A$ and $B$ corresponding to the $m$ stripes are then given by 
\begin{eqnarray} \label{eq:ABspecial2}
A  = \left[ \begin{array}{cccc}
{\bf a} & & & \\ & {\bf a} & &  \\ & & \ddots & \\ & & & {\bf a}
\end{array} \right] &,  & 
B  =  \left[ \begin{array}{cccc}
{\bf b} & & & \\ & {\bf b} & &  \\ & & \ddots & \\ & & & {\bf b}
\end{array} \right],
\end{eqnarray} 
where both $A$ and $B$ have size $m \times mK$. Under the assumption that the $[N, K]$ code is MDS with $K \geq 2$, it is clear that $\bf{a} \neq \bf{b}$. In this case, it is straightforward to see that the codes $\mathcal{C}_A$ and $\mathcal{C}_B$ corresponding to the matrices $A$ and $B$ intersect trivially, i.e., $\mathcal{C}_A \cap \mathcal{C}_B = \{\bf{0}\}$. In this case, from Theorem \ref{thm:bcast_nouse}, we know that broadcasting does not help; the source must transmit as though it is individually updating the two storage nodes.

\subsection{Sandwiched Maximally Recoverable Subcodes}

We now take a slight detour, and identify necessary and sufficient conditions for the existence of sandwiched MRSCs. We assume that we are given an $[n, t]$ code $\mathcal{C}_0$ and an $[n, s]$ subcode $\widehat{\mathcal{C}}$ of $\mathcal{C}_0$. The question that we are interested is whether we can find an $[n, k]$ MRSC $\mathcal{C}$ of $\mathcal{C}_0$ such that $\widehat{\mathcal{C}}$ is a subcode of $\mathcal{C}$. We assume that the parameters $s, k, t$ satisfy the relation $ s \leq k \leq t$. Also, let  $G$, $G_0$ and $\widehat{G}$ denote generator matrices for the codes $\mathcal{C}$, $\mathcal{C}_0$ and $\widehat{\mathcal{C}}$, respectively. Recall from Definition \ref{defn:mrsc} that for $\mathcal{C}$ to be an MRSC of $\mathcal{C}_0$, it must be true that 
\begin{eqnarray}
\rho(G_0|_S) = k & \implies & \rho(G|_S) = k, \ S \subseteq [n], |S| = k.  \label{eq:mrc_def_rep}
\end{eqnarray}
Thus, a {\em necessary condition for the existence of the sandwiched MRSC} $\mathcal{C}$ can be given as follows:
\begin{eqnarray}
\rho(G_0|_S) = k & \implies & \rho\left(\widehat{G}|_S\right) = s, \ S \subseteq [n], |S| = k.  \label{eq:necessary_MRC_subcode}
\end{eqnarray}

Note that  the condition in \eqref{eq:necessary_MRC_subcode} is equivalent to saying that $\rho\left(\widehat{G}|_S\right) = s$ for any $S \subseteq [n], |S| = k$ which is an $k$-core of $\mathcal{C}_0^{\perp}$. In the following lemma, we show the sufficiency of the condition in \eqref{eq:necessary_MRC_subcode} for the existence of sandwiched MRSCs, under the assumption of a large enough finite field size $q$. 

\vspace{0.1in}

\begin{lem} \label{lem:MRSC_subcodes}
Suppose that we are given an $[n, t]$ code $\mathcal{C}_0$ over $\mathbb{F}_q$, and an $[n, s]$ subcode $\widehat{\mathcal{C}}$, whose generator matrices $G_0$ and $\widehat{G}$ satisfy \eqref{eq:necessary_MRC_subcode}, where $k$ is such that $s < k < t$. Then, there exists an $[n, k]$ code $\mathcal{C}$ such that 
\begin{itemize}
\item $\widehat{\mathcal{C}} \subseteq \mathcal{C}$ and 
\item $\mathcal{C}$ is a maximally recoverable subcode of $\mathcal{C}_0$,
\end{itemize}
whenever $q > {n \choose k}$.
\end{lem}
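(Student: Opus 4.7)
The plan is to construct $\mathcal{C}$ by iteratively enlarging $\widehat{\mathcal{C}}$ one dimension at a time. Concretely, I will maintain the invariant that for each $i$ with $s \leq i \leq k$, there is an $[n,i]$ code $\mathcal{C}_i$ satisfying $\widehat{\mathcal{C}} \subseteq \mathcal{C}_i \subseteq \mathcal{C}_0$ and $\rho(G_i|_S) = i$ for every $k$-core $S$ of $\mathcal{C}_0^{\perp}$, where $G_i$ denotes a generator matrix of $\mathcal{C}_i$. The base case $i = s$ is exactly the hypothesis \eqref{eq:necessary_MRC_subcode}, and once $i = k$ is reached I will have $\rho(G|_S) = k$ for every $k$-core $S$, which by Part $2$ of Lemma \ref{lem:MRC_alt} shows that $\mathcal{C} := \mathcal{C}_k$ is an MRSC of $\mathcal{C}_0$ containing $\widehat{\mathcal{C}}$.

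For the inductive step, I want to adjoin to $\mathcal{C}_i$ a single codeword $g \in \mathcal{C}_0$, parameterized as $g = v G_0$ with $v \in \mathbb{F}_q^t$, such that $g|_S \notin \text{rowspan}(G_i|_S)$ for every $k$-core $S$. Fix one such $k$-core $S$. Since $\rho(G_0|_S) = k$, the linear map $v \mapsto v G_0|_S$ from $\mathbb{F}_q^t$ to $\mathbb{F}_q^k$ is surjective with kernel of dimension $t - k$, so the preimage of the $i$-dimensional subspace $\text{rowspan}(G_i|_S)$ is a linear subspace of $\mathbb{F}_q^t$ of dimension exactly $i + t - k$. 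Hence at most $q^{i+t-k}$ vectors $v$ are bad for this particular $S$, and any non-bad $v$ automatically gives $g \notin \mathcal{C}_i$ so that $\mathcal{C}_{i+1} := \mathcal{C}_i + \langle g \rangle$ has dimension $i+1$.

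A union bound over the at most $\binom{n}{k}$ possible $k$-cores yields at most $\binom{n}{k}\, q^{i+t-k}$ bad $v$'s in total, which is strictly less than $q^t$ whenever $q^{k-i} > \binom{n}{k}$. The tightest constraint comes from the last extension step, $i = k-1$, where it demands exactly $q > \binom{n}{k}$, matching the hypothesis of the lemma. Iterating $k - s$ times therefore yields the required $\mathcal{C}$.

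The main subtlety is choosing the right granularity for the counting argument. A one-shot random choice of all $k - s$ added rows simultaneously would compose the $\det(G|_S)$ polynomial conditions (each of total degree $k-s$ in the newly introduced entries), and a Schwartz–Zippel style union bound would yield a weaker field-size requirement on the order of $(k-s)\binom{n}{k}$. Performing the counting one row at a time keeps each bad set a proper linear subspace of $\mathbb{F}_q^t$ rather than a general polynomial variety, which is precisely what buys the clean bound $q > \binom{n}{k}$.
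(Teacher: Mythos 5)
Your proof is correct, and it takes a genuinely different route from the paper's. The paper works on the parity-check side: it writes $H = \left[ \begin{smallmatrix} H_0 \\ \Delta H \end{smallmatrix} \right]$, constrains the rows of the $(t-k)\times n$ block $\Delta H$ to lie in $\widehat{\mathcal{C}}^{\perp}$ (so that $\widehat{\mathcal{C}} \subseteq \mathcal{C}$), forms the single polynomial $f = \prod_{S} \det\left(H|_{[n]\setminus S}\right)$ over all $k$-cores $S$, and invokes the Combinatorial Nullstellensatz after observing that each free indeterminate has degree at most one in each factor, hence degree at most $\binom{n}{k}$ in $f$. You instead work on the generator side and build $\mathcal{C}$ greedily from $\widehat{\mathcal{C}}$, adding one row of $G_0$-type at a time while preserving the invariant $\rho(G_i|_S)=i$ on all $k$-cores $S$; the key observation that the bad $v$'s for a fixed $S$ form a linear subspace of dimension $i+t-k$ in $\mathbb{F}_q^t$ reduces each step to a clean union bound over subspaces. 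Your approach is more elementary (no polynomial method) and yields an explicit greedy algorithm, while the paper's one-shot approach is shorter and dual in flavor; both land on $q > \binom{n}{k}$. One small correction to your closing remark: the paper's one-shot argument does \emph{not} suffer the $(k-s)\binom{n}{k}$ penalty you attribute to a na\"{\i}ve Schwartz--Zippel bound, because the Combinatorial Nullstellensatz is sensitive to the per-variable degree (which stays at $\binom{n}{k}$) rather than the total degree; so the two proofs actually achieve the identical bound, not just comparable ones.
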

\begin{proof}
See Appendix \ref{app:proof_MRSC_subcodes}.
\end{proof}
 
\vspace{0.1in}
 
We next present an alternate, explicit construction of sandwiched MRSCs appears using linearized polynomials.

\vspace{0.1in}

\begin{constr} \label{constr:mrc_linearized_sandwiched}
Consider the $[n, t]$ code $\mathcal{C}_0$ over $\mathbb{F}_q$, having a generator matrix $G_0$. Also, consider  the $[n, s]$ subcode $\widehat{\mathcal{C}}$ of $\mathcal{C}_0$, having a generator matrix $\widehat{G}$. Without loss of generality, assume that the generator matrix $G_0$ is given by 
\begin{eqnarray} \label{eq:matrixB}
	G_0 & = & \left[ \begin{array}{c} \widehat{G} \\ B  \end{array} \right],
\end{eqnarray}
for some $(t - s) \times n$ matrix $B$. Note that we have $\rho(B) = t - s$. Let $\mathbb{F}_Q$ denote an extension field of $\mathbb{F}_q$, where $Q = q^{t-s}$.  Let $\mathcal{C}_0^{(Q)}$ and $\widehat{\mathcal{C}}^{(Q)}$ denote the $n$-length (linear) codes over  $\mathbb{F}_Q$ that are generated by $G_0$ and $\widehat{G}$, respectively. We note that $\text{dim}\left(\mathcal{C}_0^{(Q)} \right) = t$, and $\text{dim}\left(\mathcal{C}_0^{(Q)} \right) = s$. In this construction, we identify an $[n, k], s \leq k \leq t$ MRSC $\mathcal{C}$ of the $[n, t]$ code $\mathcal{C}_0^{(Q)}$ such that $\widehat{\mathcal{C}}^{(Q)} \subseteq \mathcal{C}$, whenever the matrices $G_0$ and $\widehat{G}$ satisfy \eqref{eq:necessary_MRC_subcode}. Toward this, let $\{\alpha_i \in \mathbb{F}_Q, 1 \leq i \leq t-s\}$ denote a basis of $\mathbb{F}_Q$ over $\mathbb{F}_q$. Define the elements $\{\beta_i \in \mathbb{F}_Q, 1 \leq i \leq n\}$ as follows:
\begin{eqnarray} \label{eq:beta_alpha_sw}
		[\beta_1 \ \beta_2 \ \cdots \ \beta_n] & = & [\alpha_1 \ \alpha_2 \ \cdots \ \alpha_{t-s}]B.
\end{eqnarray}
Next, consider the code $\mathcal{C}$ over $\mathbb{F}_Q$ having a generator matrix $G$ given by 
\begin{eqnarray} \label{eq:matrixG}
	G & = & \left[  \begin{array}{cccc} & & \widehat{G} &  \\ \hline \\ \beta_1 & \beta_2 & \cdots & \beta_n   \\
			\beta_1^q & \beta_2^q & \cdots & \beta_n^q \\
			& & \vdots &  \\
			\beta_1^{q^{k-s-1}} & \beta_2^{q^{k-s-1}} & \cdots & \beta_n^{q^{k-s-1}} \end{array} \right].
\end{eqnarray}
The code $\mathcal{C}$ is our candidate code, and this completes the description of the construction. In the following theorem, we prove that $\mathcal{C}$ is indeed an $[n, k]$ MRSC of $\mathcal{C}_0^{(Q)}$ such that $\widehat{\mathcal{C}}^{(Q)} \subseteq \mathcal{C}$.
\end{constr}	

\vspace{0.1in}

\begin{thm} \label{thm:mrc_linearized_sandwiched}
	Consider an $[n, t]$ code $\mathcal{C}_0$ over $\mathbb{F}_q$, having a generator matrix $G_0 \in \mathbb{F}_q^{t \times n}$. Also, consider  the $[n, s]$ subcode $\widehat{\mathcal{C}}$ of $\mathcal{C}_0$, having a generator matrix $\widehat{G}$ such that the following condition is satisfied:
	\begin{eqnarray}
		\rho(G_0|_S) = k & \implies & \rho\left(\widehat{G}|_S\right) = s, \ S \subseteq [n], |S| = k,  \label{eq:necessary_MRC_subcode_rep}
	\end{eqnarray}
	where $k$ is such that $s \leq k \leq t$. 
	Next, let $\mathcal{C}_0^{(Q)}$ and $\widehat{\mathcal{C}}^{(Q)}$ denote the $n$-length (linear) codes over  $\mathbb{F}_Q$ that are generated by $G_0$ and $\widehat{G}$, respectively, where $Q = q^{t-s}$. Then the code $\mathcal{C}$ over $\mathbb{F}_Q$ obtained in Construction \ref{constr:mrc_linearized_sandwiched} is an $[n, k]$ maximally recoverable subcode of $\mathcal{C}_0^{(Q)}$ such that $\widehat{\mathcal{C}}^{(Q)} \subseteq \mathcal{C}$.	
\end{thm}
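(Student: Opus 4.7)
The plan is to establish both claims essentially as adaptations of the proof of Theorem~\ref{thm:mrc_linearized}. First I would observe the block factorization $G = M \cdot G_0$, where
\[
M \ = \ \begin{bmatrix} I_s & 0 \\ 0 & V' \end{bmatrix}, \qquad V'_{ij} \ = \ \alpha_j^{q^{i-1}}, \ i \in [k-s], \ j \in [t-s].
\]
This follows because $\beta_l = \sum_m \alpha_m B(m,l)$ together with the Frobenius identity yields $\beta_l^{q^{i-1}} = \sum_m \alpha_m^{q^{i-1}} B(m,l)$ (using that $B$ has entries in $\mathbb{F}_q$), so the bottom $k-s$ rows of $G$ are exactly $V' B$. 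The containment $\widehat{\mathcal{C}}^{(Q)} \subseteq \mathcal{C}$ is then immediate since $\widehat{G}$ sits as the top $s$ rows of $G$.

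For the MRSC property, fix $S \subseteq [n]$ with $|S|=k$ and $\rho(G_0|_S) = k$; the goal is $\rho(G|_S) = \rho(M \cdot G_0|_S) = k$. Let $W_q \subseteq \mathbb{F}_q^t$ be the $\mathbb{F}_q$-column span of $G_0|_S$ (of dimension $k$) and $W = W_q \otimes_{\mathbb{F}_q} \mathbb{F}_Q$ its $\mathbb{F}_Q$-extension. It suffices to show $M|_W$ is injective. Any kernel element $(v_1, v_2) \in W \subseteq \mathbb{F}_Q^s \oplus \mathbb{F}_Q^{t-s}$ satisfies $v_1 = 0$ and $V' v_2 = 0$. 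By the hypothesis~\eqref{eq:necessary_MRC_subcode_rep}, $\rho(\widehat{G}|_S) = s$, so the projection of $W_q$ onto its first $s$ coordinates is surjective and its kernel $U_q = \{v \in W_q : v_1 = 0\}$ has $\mathbb{F}_q$-dimension $k-s$. Picking an $\mathbb{F}_q$-basis $u^{(1)}, \ldots, u^{(k-s)} \in \mathbb{F}_q^{t-s}$ of $\{u : (0, u) \in U_q\}$, I would then invoke the classical Moore-determinant/linearized-polynomial lemma: if $y_1, \ldots, y_{k-s} \in \mathbb{F}_Q$ are $\mathbb{F}_q$-linearly independent, the columns $(y_j, y_j^q, \ldots, y_j^{q^{k-s-1}})^T$ are $\mathbb{F}_Q$-linearly independent. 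Applying this with $y_j := \sum_l \alpha_l u^{(j)}_l$ — which are $\mathbb{F}_q$-linearly independent because $\{\alpha_l\}_{l=1}^{t-s}$ is an $\mathbb{F}_q$-basis of $\mathbb{F}_Q$ and the $u^{(j)}$ are $\mathbb{F}_q$-linearly independent — and observing that $V' u = (y, y^q, \ldots, y^{q^{k-s-1}})^T$ whenever $u \in \mathbb{F}_q^{t-s}$ with $y = \sum_l \alpha_l u_l$, shows that $V'$ is injective on the $\mathbb{F}_Q$-span of the $u^{(j)}$, which equals $\{v_2 : (0,v_2) \in W\}$. Hence $v_2 = 0$ and $\rho(G|_S) = k$. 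Existence of at least one such $S$ (any $k$-core of $\mathcal{C}_0^{\perp}$, nonempty since $k \leq t$) also guarantees $\dim_{\mathbb{F}_Q} \mathcal{C} = k$.

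The main obstacle is conceptual rather than computational: one must recognize that injectivity of $M|_W$ reduces cleanly to the $\mathbb{F}_q$-linear independence of the ``lower parts'' of a basis of $U_q$, and that the hypothesis~\eqref{eq:necessary_MRC_subcode_rep} is exactly what supplies the surjectivity of the first-block projection that yields this dimension count. Once this reduction is in place, the remainder is a direct application of the Moore-determinant non-vanishing used in Theorem~\ref{thm:mrc_linearized}, with $(t, k)$ effectively replaced by $(t-s, k-s)$.
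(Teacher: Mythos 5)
Your argument is correct and is in substance the same as the paper's: in both proofs, the rank hypothesis \eqref{eq:necessary_MRC_subcode_rep} reduces the claim to the nonsingularity of a $(k-s)\times(k-s)$ Moore matrix built from $k-s$ elements of $\mathbb{F}_Q$ that are $\mathbb{F}_q$-linearly independent, which is exactly Fact~\ref{fact:roth}. The paper reaches that Moore matrix by normalizing $\widehat{G}|_S = [I_s \mid \widehat{P}]$ and row-reducing $G_0|_S$ and $G|_S$ to block upper-triangular form, identifying the lower-right block of $G|_S$ as a Moore matrix in $\gamma_i = \beta_{s+i} - \sum_j \beta_j\widehat{P}(j,i)$; your factorization $G = MG_0$ together with the dimension count on the kernel of the first-block projection of $W$ packages the same computation in a coordinate-free way without fixing a normal form for $\widehat{G}|_S$.
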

\begin{proof}
	See Appendix \ref{app:mrc_linearized_sandwiched}.
\end{proof}

\subsection{Optimal Communication Cost for Arbitrary Matrices $A$ and $B$ under the Broadcast Setting}

We now use the concept of sandwiched MRSCs and characterize the optimal communication cost for the  broadcast problem when the codes $\mathcal{C}_A$ and $\mathcal{C}_B$ have a non-trivial intersection. Below, we first give a qualitative description of our approach, before presenting technical details. Consider a valid scheme $(H, \mathcal{D}_1, \mathcal{D}_2)$ that we design for the problem. Let $H_A$ and $H_B$ denote the rows in the row-space of $H$ which help the decoders $\mathcal{D}_1$ and $\mathcal{D}_2$ recover $A(X^n + E^n)$ and $B(X^n + E^n)$, respectively. Also let $\mathcal{C}_{H_A}$ and $\mathcal{C}_{H_B}$ denote the linear codes generated by $H_A$ and $H_B$ respectively. Our approach to minimizing the communication cost is to pick the encoder $H$ such that
\begin{itemize}
	\item $\mathcal{C}_{H_A}$ is a $2\epsilon$-dimensional (assuming that $m_A > 2\epsilon$) MRSC of  $\mathcal{C}_{A}$, 
	\item $\mathcal{C}_{H_B}$ is a $2\epsilon$-dimensional (assuming that $m_B > 2\epsilon$) MRSC of  $\mathcal{C}_{B}$, and
	\item also ``maximize" the dimension of intersection between the subcodes $\mathcal{C}_{H_A}$ and $\mathcal{C}_{H_B}$. The extent to which $\text{dim}(\mathcal{C}_{H_A} \cap \mathcal{C}_{H_B})$ can be maximized depends on the dimension of the intersection between $\mathcal{C}_{A}$ and $\mathcal{C}_{B}$.
\end{itemize}

For ease of presentation, we give the expression and proofs for optimal communication cost under the assumption\footnote{This is the hardest of all the cases. The remaining cases when one or both the ranks $m_A$ and $m_B$ are less than $2\epsilon$ can be similarly handled.} that both $m_A$ and $m_B$ are greater than $2\epsilon$. Under this assumption, consider the code $\widetilde{\mathcal{C}} = \mathcal{C}_A \cap \mathcal{C}_B$ having a generator matrix $\widetilde{H}$. Further, consider the quantities $\theta_A$, $\theta_B$ and $\theta$ defined as follows:

\begin{eqnarray}
\theta_A & = & \min_{\substack{S \subseteq [n], |S| = 2\epsilon \\ S \text{ is a } 2\epsilon\text{-core of }\mathcal{C}_A^{\perp}}}
\rho(\widetilde{H}|_S),  \label{eq:alphaA}\\
\theta_B & = & \min_{\substack{S \subseteq [n], |S| = 2\epsilon \\ S \text{ is a } 2\epsilon\text{-core of }\mathcal{C}_B^{\perp}}}
\rho(\widetilde{H}|_S),  \ \text{and} \label{eq:alphaB}\\
\theta & = & \min(\theta_A, \theta_B). \label{eq:alpha}
\end{eqnarray}
Note that the parameters $\theta_A, \theta_B$ and $\theta$ are entirely determined given the matrices $A$ and $B$, and the sparsity parameter $\epsilon$.  The following theorem characterizes the optimal communication cost in terms of the parameter $\theta$.

\vspace{0.1in}

\begin{thm}\label{thm:cost_broadcast}
The optimal communication cost for the broadcast setting shown in Fig. \ref{fig:sys_model_diff_function} is given by 
\begin{eqnarray}
\ell & = & 4\epsilon - \theta, 
\end{eqnarray}
whenever $m_A$ and $m_B$ are both greater than $2\epsilon$, and where the parameter $\theta$ is as defined by \eqref{eq:alpha}. Achievability is guaranteed under the assumption that the field size $q > \max\left( {n \choose 2\epsilon}, {n \choose \theta} \right)$.
\end{thm}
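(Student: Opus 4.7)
The plan is to prove the converse $\ell \geq 4\epsilon - \theta$ and a matching achievability $\ell \leq 4\epsilon - \theta$ separately. For the converse, given any valid scheme $(H, \mathcal{D}_1, \mathcal{D}_2)$, set $\mathcal{C}' = \mathcal{C}_A \cap \mathcal{C}_H$, $\mathcal{C}'' = \mathcal{C}_B \cap \mathcal{C}_H$, and $\mathcal{D}' = \mathcal{C}' \cap \mathcal{C}'' \subseteq \widetilde{\mathcal{C}}$. The inclusion-exclusion inequality
\begin{equation*}
\dim(\mathcal{C}_H) \;\geq\; \dim(\mathcal{C}' + \mathcal{C}'') \;=\; \dim(\mathcal{C}') + \dim(\mathcal{C}'') - \dim(\mathcal{D}'),
\end{equation*}
combined with $\dim(\mathcal{C}'), \dim(\mathcal{C}'') \geq 2\epsilon$ from Theorem~\ref{thm:necessary}, reduces the converse to a suitable upper bound on $\dim(\mathcal{D}')$.

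To produce that bound, I would fix any $2\epsilon$-core $S$ of $\mathcal{C}_A^{\perp}$ and decompose $\dim(\mathcal{D}') = \dim(\mathcal{D}'|_S) + \dim(\mathcal{D}'^{[n]\setminus S})$. The necessary condition for $\mathcal{D}_1$ (Theorem~\ref{thm:necessary}, part~1, equivalently part~4 of Lemma~\ref{lem:MRC_alt}) together with $(\mathcal{C}_A^{\perp})^S = \{0\}$ and the duality $(\mathcal{C}'|_S)^{\perp} = (\mathcal{C}'^{\perp})^S$ forces $\mathcal{C}'|_S = \mathbb{F}_q^S$, so $\dim(\mathcal{C}'^{[n]\setminus S}) = \dim(\mathcal{C}') - 2\epsilon$. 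Since $\mathcal{D}' \subseteq \widetilde{\mathcal{C}}$ and $\mathcal{D}' \subseteq \mathcal{C}'$, this yields $\dim(\mathcal{D}') \leq \rho(\widetilde{H}|_S) + \dim(\mathcal{C}') - 2\epsilon$. Plugging back gives $\dim(\mathcal{C}_H) \geq \dim(\mathcal{C}'') + 2\epsilon - \rho(\widetilde{H}|_S) \geq 4\epsilon - \rho(\widetilde{H}|_S)$, and minimising over $S$ yields $\dim(\mathcal{C}_H) \geq 4\epsilon - \theta_A$. The symmetric argument over $2\epsilon$-cores of $\mathcal{C}_B^{\perp}$ gives $\dim(\mathcal{C}_H) \geq 4\epsilon - \theta_B$, closing the converse.

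For achievability I would construct $\mathcal{C}_H = \mathcal{F}_A + \mathcal{F}_B$ via a two-stage sandwiched-MRSC procedure. First, pick a $\theta$-dimensional subcode $\widehat{\mathcal{C}} \subseteq \widetilde{\mathcal{C}}$, realised through a generator $\widehat{G} = T\widetilde{H}$, such that $\rho(\widehat{G}|_S) = \theta$ for every $2\epsilon$-core $S$ of $\mathcal{C}_A^{\perp}$ \emph{or} $\mathcal{C}_B^{\perp}$; since $\rho(\widetilde{H}|_S) \geq \min(\theta_A, \theta_B) = \theta$ on every such $S$ by the very definition \eqref{eq:alpha}, a generic $T$ works, and a Schwartz--Zippel / union-bound estimate over the at most $2\binom{n}{2\epsilon}$ cores shows that $q > \binom{n}{\theta}$ is enough to avoid all bad loci. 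Second, invoke Lemma~\ref{lem:MRSC_subcodes} twice to extend $\widehat{\mathcal{C}}$: obtain a $2\epsilon$-dimensional MRSC $\mathcal{F}_A$ of $\mathcal{C}_A$ with $\widehat{\mathcal{C}} \subseteq \mathcal{F}_A$, and a $2\epsilon$-dimensional MRSC $\mathcal{F}_B$ of $\mathcal{C}_B$ with $\widehat{\mathcal{C}} \subseteq \mathcal{F}_B$; each invocation needs $q > \binom{n}{2\epsilon}$ and is applicable precisely because Step~1 guarantees the rank hypothesis \eqref{eq:necessary_MRC_subcode_rep}. Setting $\mathcal{C}_H = \mathcal{F}_A + \mathcal{F}_B$ and reapplying the shortening argument of the converse to the $2\epsilon$-dimensional MRSCs $\mathcal{F}_A, \mathcal{F}_B$ (for which $\mathcal{F}_A^{[n]\setminus S} = \{0\}$ on cores of $\mathcal{C}_A^\perp$, etc.) yields $\dim(\mathcal{F}_A \cap \mathcal{F}_B) \leq \theta$; combined with $\widehat{\mathcal{C}} \subseteq \mathcal{F}_A \cap \mathcal{F}_B$ this pins $\dim(\mathcal{F}_A \cap \mathcal{F}_B) = \theta$ and hence $\dim(\mathcal{C}_H) = 4\epsilon - \theta$. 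Each receiver's decoder extracts $F_A(X^n + E^n)$ (resp. $F_B(X^n + E^n)$) as a linear function of $H(X^n + E^n)$ and then runs the point-to-point achievability from Section~\ref{sec:achievability}.

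The main obstacle I anticipate is Step~1 of the achievability: producing a single $\theta$-dimensional ``common seed'' $\widehat{\mathcal{C}}$ that simultaneously serves as a sandwiched subcode for both $\mathcal{C}_A$ and $\mathcal{C}_B$, i.e., whose restriction to each $2\epsilon$-core of both $\mathcal{C}_A^{\perp}$ and $\mathcal{C}_B^{\perp}$ has full rank $\theta$. This is exactly where the definition $\theta = \min(\theta_A, \theta_B)$ is tight (a larger dimension fails the rank hypothesis on the tighter side, a smaller one inflates $\dim(\mathcal{C}_H)$), and it is what forces the two-term field-size condition $q > \max\bigl(\binom{n}{2\epsilon}, \binom{n}{\theta}\bigr)$: the first term controls the two sandwich extensions via Lemma~\ref{lem:MRSC_subcodes}, and the second controls the generic selection of $\widehat{\mathcal{C}}$.
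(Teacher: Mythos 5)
Your proposal is correct and follows essentially the same route as the paper: the converse bounds $\dim(\mathcal{C}_H)$ from below via the subcodes $\mathcal{C}_A \cap \mathcal{C}_H$, $\mathcal{C}_B \cap \mathcal{C}_H$ and their intersection (the paper phrases the key estimate as $\rho(H_A) - \rho(\widehat{H}) \geq \rho(H_A|_{S^*}) - \rho(\widehat{H}|_{S^*})$, which is the same shortening decomposition you make explicit), and the achievability builds $\mathcal{C}_H = \mathcal{C}_{H_A} + \mathcal{C}_{H_B}$ by first fixing a $\theta$-dimensional common seed inside $\widetilde{\mathcal{C}}$ and then invoking Lemma \ref{lem:MRSC_subcodes} twice to obtain the two $2\epsilon$-dimensional sandwiched MRSCs.

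The one place your argument is looser than the paper's is Step~1 of the achievability. You assert that a Schwartz--Zippel union bound over the at most $2\binom{n}{2\epsilon}$ relevant cores already gives $q > \binom{n}{\theta}$; as written, that union bound yields a degree on the order of $\theta \cdot 2\binom{n}{2\epsilon}$, which is generally much larger. The paper gets $q > \binom{n}{\theta}$ by instead taking $\widehat{\mathcal{C}}$ to be a $\theta$-dimensional MRSC of $\widetilde{\mathcal{C}}$ (via Lemma \ref{lem:MRSC_subcodes} with trivial interior subcode, or equivalently Lemma \ref{lem:mrc_exist}); the maximal-recoverability property then automatically gives $\rho(\widehat{H}|_S) = \theta$ on every set $S$ with $\rho(\widetilde{H}|_S) \geq \theta$, which in particular covers every $2\epsilon$-core of $\mathcal{C}_A^\perp$ or $\mathcal{C}_B^\perp$, and the field-size requirement is the one for constructing an MRSC of the much smaller-dimensional ambient code $\widetilde{\mathcal{C}}$. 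Your route could be repaired by either invoking the MRSC existence lemma as the paper does, or by stating a weaker field-size requirement tied to your direct union bound; as posed it asserts the correct bound without a justification that produces it.
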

\begin{proof}
Let us prove the converse first, i.e., we show that the communication cost of an encoder $H$ associated with  any valid scheme is lower bounded by $\ell = \rho(H) \geq 4\epsilon - \theta$. Towards this, consider the codes $\mathcal{C}_H$, $\mathcal{C}_A$, $\mathcal{C}_B$, and define the codes $\mathcal{C}_{H_A}$, $\mathcal{C}_{H_B}$ and $\widehat{\mathcal{C}}$ as follows:
\begin{eqnarray}
\mathcal{C}_{H_A} & = & \mathcal{C}_H \cap \mathcal{C}_A, \\
\mathcal{C}_{H_B} & = & \mathcal{C}_H \cap \mathcal{C}_B, \ \text{and} \\
\widehat{\mathcal{C}} & = & \mathcal{C}_{H_A} \cap \mathcal{C}_{H_B}. 
\end{eqnarray}
Note that $\widehat{\mathcal{C}} \subseteq \widetilde{\mathcal{C}} =  \mathcal{C}_A \cap \mathcal{C}_B$. Also assume that $H_A$, $H_B$ and $\widehat{H}$ denote generator matrices for  the codes $\mathcal{C}_{H_A}$, $\mathcal{C}_{H_B}$ and $\widehat{\mathcal{C}}$, respectively. From Theorem \ref{thm:necessary}, we know that if $Y^n$ is any $2\epsilon$-sparse vector such that $AY^n \neq {\bf 0}$, then $H_AY^n \neq {\bf 0}$. From this it follows that  
\begin{eqnarray} \label{eq:temp_proof_1}
\rho(A|_S) = 2\epsilon \ \implies \rho(H_A|_S) = 2\epsilon, \ S \subseteq [n], |S| = 2\epsilon. 
\end{eqnarray}
Next, consider the definition of $\theta_A$ in \eqref{eq:alphaA}, and let $S^* \subseteq [n], |S^*| = 2\epsilon$ denote a $2\epsilon$-core of $\mathcal{C}_A^{\perp}$ such that 
\begin{eqnarray} \label{eq:temp_proof_2}
\rho(\widetilde{H}|_{S^*}) & = & \theta_A.
\end{eqnarray}
Now, consider the following chain of inequalities: 
\begin{eqnarray}
2\epsilon & \stackrel{(a)}{=} & \rho(A|_{S^*}) \\
& \stackrel{(b)}{=} & \rho(H_A|_{S^*}) \\
& \stackrel{(c)}{\leq} & \rho(\widehat{H}|_{S^*}) + \left(\rho(H_A) - \rho(\widehat{H}) \right) \\
& \stackrel{(d)}{\leq} & \rho(\widetilde{H}|_{S^*}) + \left(\rho(H_A) - \rho(\widehat{H}) \right) \\
& \stackrel{(e)}{=} & \theta_A + \left(\rho(H_A) - \rho(\widehat{H}) \right).
\end{eqnarray}
Here, $(a)$ follows because $S^*$ is a $2\epsilon$-core of $\mathcal{C}_A^{\perp}$, \newline
$(b)$ follows from \eqref{eq:temp_proof_1}, \newline
$(c)$ follows by observing  since $\widehat{\mathcal{C}}$ is a subcode of $\mathcal{C}_A$, we have $\rho(H_A) - \rho(\widehat{H}) \geq \rho(H_A|_{S}) - \rho(\widehat{H}|_{S})$, for any set $S \subset [n]$.  In particular, it is true that $\rho(H_A) - \rho(\widehat{H}) \geq \rho(H_A|_{S^*}) - \rho(\widehat{H}|_{S^*})$,  \newline
$(d)$ follows since $\widehat{\mathcal{C}}$ is a subcode of $\widetilde{\mathcal{C}}$, and finally \newline
$(e)$ follows from \eqref{eq:temp_proof_2}. \newline
In other words, we get that 
\begin{eqnarray}
\rho(\widehat{H}) & \leq & \theta_A + \rho(H_A) - 2\epsilon. \label{eq:rankH0_A}
\end{eqnarray}
In a similar fashion, one can also show that 
\begin{eqnarray}
\rho(\widehat{H}) & \leq & \theta_B + \rho(H_B) - 2\epsilon. \label{eq:rankH0_B}
\end{eqnarray}
The communication cost $\ell$ associated with the encoder $H$ can now be lower bounded as follows:
\begin{eqnarray}
\ell & = & \rho(H)\\
& \geq & \rho(H_A) + \rho(H_B) - \rho(\widehat{H}) \\
& \stackrel{(a)}{\geq} & \rho(H_A) + \rho(H_B) - \nonumber \\
&& \min(\theta_A + \rho(H_A) - 2\epsilon, \theta_B + \rho(H_B) - 2\epsilon ) \\
& \geq & \rho(H_A) + \rho(H_B) - \min(\theta_A , \theta_B)  - \nonumber \\
&& \max(\rho(H_A), \rho(H_B)) + 2\epsilon  \\
& \stackrel{(b)}{\geq} & 4\epsilon - \min(\theta_A , \theta_B)  = 4\epsilon - \theta.
\end{eqnarray}
Here $(a)$ follows from \eqref{eq:rankH0_A} and \eqref{eq:rankH0_B}, and $(b)$ follows from our assumption that the ranks of both $H_A$ and $H_B$ are greater than or equal to $2\epsilon$. This completes the proof of the lower bound on the communication cost.

\textit{Proof of Achievability:} Let us now show that it is indeed possible to construct a valid scheme having communication cost $\ell = 4\epsilon - \theta$, under the assumption of a sufficiently large field size. Towards this, consider the code $\widetilde{\mathcal{C}}$ and let $\widehat{\mathcal{C}}$ denote an $\theta$-dimensional MRSC of $\widetilde{\mathcal{C}}$. We know from Lemma \ref{lem:MRSC_subcodes} that the code $\widehat{\mathcal{C}}$ always exists\footnote{Note that this follows from Lemma \ref{lem:mrc_exist} as well; our use of Lemma \ref{lem:MRSC_subcodes} (with $\widehat{\mathcal{C}} = \{ {\bf 0} \}$ in Lemma \ref{lem:MRSC_subcodes}) must be considered as a matter of choice.} whenever the field size $q > {n \choose \theta}$. Also, let $\widehat{H}$ denote a generator matrix for the code $\widehat{\mathcal{C}}$. Now, observe that if $S$ is a $2\epsilon$-core of either $\mathcal{C}_A^{\perp}$ or $\mathcal{C}_B^{\perp}$, we know from the definition of $\theta$ in \eqref{eq:alpha} that $\rho(\widetilde{H}|_S) \geq \theta$. Noting that $\theta \leq 2\epsilon$ and using the fact that $\widehat{\mathcal{C}}$ is an $\theta$-dimensional MRSC of $\widetilde{\mathcal{C}}$, we get that $\rho(\widehat{H}|_S) = \theta$. In this case, we know from Lemma \ref{lem:MRSC_subcodes} that it is possible to identify a $2\epsilon$-dimensional MRSC $\mathcal{C}_{H_A}$ of $\mathcal{C}_{A}$ such that $\widehat{\mathcal{C}} \subseteq \mathcal{C}_{H_A}$, whenever the field size $q > {n \choose 2\epsilon}$. Similarly, we can also identify a $2\epsilon$-dimensional MRSC $\mathcal{C}_{H_B}$ of $\mathcal{C}_{B}$ such that $\widehat{\mathcal{C}} \subseteq \mathcal{C}_{H_B}$, whenever the field size $q > {n \choose 2\epsilon}$. The overall field size requirement, when we take into the account the minimum $q$ needed for the existence of $\widehat{\mathcal{C}}$ is given by $q > \max\left( {n \choose 2\epsilon}, {n \choose \theta} \right)$. The candidate code for the encoder is now given by $\mathcal{C}_H = \mathcal{C}_{H_A} + \mathcal{C}_{H_B}$.   
The communication cost of the encoder $H$ is then given by $\ell = \text{rank}(H_A) + \text{rank}(H_B) - \text{dim}(\mathcal{C}_{H_A} \cap \mathcal{C}_{H_B}) \leq \text{rank}(H_A) + \text{rank}(H_B) - \text{rank}(\widehat{H}) = 4\epsilon - \theta$. Also, we know from the achievability result of the point-to-point setting in Section \ref{sec:achievability} that decoders $\mathcal{D}_1$ and $\mathcal{D}_2$ in Fig. \ref{fig:sys_model_diff_function} can be constructed based on the matrices $H_A$ and $H_B$ respectively. This completes the proof of the achievability part of the theorem.
\end{proof}

\vspace{0.1in}

\begin{note}
In the above proof of achievability, we noted that $\widehat{\mathcal{C}} \subseteq \mathcal{C}_{H_A} \cap \mathcal{C}_{H_B}$.  In fact, it is straightforward to see that $\widehat{\mathcal{C}} = \mathcal{C}_{H_A} \cap \mathcal{C}_{H_B}$; else we would contradict the minimality of either $\theta_A$ or $\theta_B$ or both. 
\end{note}

\vspace{0.1in}

The following example illustrates the case under consideration.

\vspace{0.1in}

\textit{Example \ref{ex:bcast_use} revisited:} We now revisit Example \ref{ex:bcast_use} where we  considered striping and encoding of a data file by a $[N = 5, K = 3, D = 4] (\alpha = 4, \beta = 1)$ MBR code over $\mathbb{F}_q$ that encodes $9$ symbols into $20$ symbols and stores across $N = 5$ nodes such that each node holds $\alpha = 4$ symbols.
Recall that the contents of any $K = 3$ nodes are sufficient to reconstruct the $9$ uncoded symbols, and that the contents of any one of the $N = 5$ node can be recovered by connecting to any set of $D = 4$ other nodes. Also, recall that the encoding is described by a product-matrx construction as follows:
\begin{eqnarray}
C_{N \times \alpha} & = & \Psi_{N \times D} M_{D \times \alpha} \\
& = & \left[\begin{array}{c} \underline{\psi}_1 \\  \underline{\psi}_2 \\ \underline{\psi}_3 \\ \underline{\psi}_4 \\ \underline{\psi}_5 \end{array} \right] \left[ \begin{array}{cccc} m_1 & m_2 & m_3 & m_7 \\ m_2 & m_4 & m_5 & m_8 \\ m_3 & m_5 & m_6 & m_9 \\ m_7 & m_8 & m_9 & 0 \end{array} \right], \label{eq:MBR_prod_mtx_1}
\end{eqnarray}
where the encoding matrix $\Psi$ can be chosen as a Vandermonde matrix given as follows:
\begin{eqnarray}
\Psi & = & \left[ \begin{array}{cccc} 1 & \gamma & \gamma^2 & \gamma^3 \\  1 & \gamma^2 & \gamma^4 & \gamma^6 \\ 1 & \gamma^3 & \gamma^6 & \gamma^9 \\ 1 & \gamma^4 & \gamma^8 & \gamma^{12} \\ 1 & \gamma^5 & \gamma^{10} & \gamma^{15} \end{array} \right],
\end{eqnarray}
Here we pick $\gamma$ as a primitive element in $\mathbb{F}_q$. The $N \times \alpha$ matrix $C$ is the codeword matrix, with the $i^{\text{th}}$ row representing the contents that get stored in the $i^{\text{th}}$ node, $1 \leq i \leq 5$. It is known that any two nodes of an $[N = 5, K = 3, D = 4] (\alpha = 4, \beta = 1)$ MBR code share exactly one non-trivial linear combination of the respective symbols. For example, in the case of the product-matrix MBR code considered here, the first two nodes share the linear combination
\begin{eqnarray} 
\underline{\psi}_1 M \underline{\psi}_2^t & = & \underline{\psi}_2 M \underline{\psi}_1^t. \label{eq:prodmatrix_common}
\end{eqnarray}
The left-hand and the right-hand sides of \eqref{eq:prodmatrix_common} are linear combinations of the contents of the first and second nodes respectively, and the equality follows from the fact that the matrix $M$ is symmetric. 

Let us further recall the structure of the $4m \times 9m$ coding matrices $A$ and $B$ for the first two nodes, which are given by 
{\small 
\begin{eqnarray}
A = \left[ \begin{array}{cccc} A_1 & & & \\ & A_1 & & \\ & & \ddots & \\ & & & A_1 \end{array} \right] & , & B \ = \ \left[ \begin{array}{cccc} B_1 & & & \\ & B_1 & & \\ & & \ddots & \\ & & & B_1 \end{array} \right], \nonumber
\end{eqnarray}
}
where 

\begin{eqnarray}
A_1 \ = \ \left[ \begin{array}{ccccccccc} 1 & \gamma & \gamma^2 & & & & \gamma^3 & &  \\ 
& 1 & & \gamma & \gamma^2 & & &\gamma^3 &   \\ 
& & 1& & \gamma & \gamma^2 & & & \gamma^3 \\ 
& & & & & & 1 & \gamma & \gamma^2 \end{array}  \right] 
\end{eqnarray}
and 
\begin{eqnarray}
B_1 \ = \ \left[ \begin{array}{ccccccccc} 1 & \gamma^2 & \gamma^4 & & & & \gamma^6 & &  \\ 
& 1 & & \gamma^2 & \gamma^4 & & &\gamma^6 &   \\ 
& & 1& & \gamma^2 & \gamma^4 & & & \gamma^6 \\ 
& & & & & & 1 & \gamma^2 & \gamma^4 \end{array}  \right].
\end{eqnarray}

{\em Communication cost for updating two storage nodes via broadcasting :}
Let $\mathcal{C}_{A_1}$ and $\mathcal{C}_{B_1}$ denote the codes generated by the rows of $A_1$ and $B_1$. From \eqref{eq:prodmatrix_common}, we know that $\text{dim}\left(\mathcal{C}_{A_1} \cap  \mathcal{C}_{B_1}\right) = 1$. It is straightforward to see that $\mathcal{C}_{A_1} \cap  \mathcal{C}_{B_1}$ is generated by the vector 
\begin{eqnarray}
{\bf c } & = & [ 1, \ \gamma + \gamma^2, \ \gamma^2 + \gamma^4, \ \gamma^3, \ \gamma^4 + \gamma^5, \ \gamma^6, \nonumber \\
& & \ \gamma^3 + \gamma^6, \ \gamma^5 + \gamma^7, \ \gamma^7 + \gamma^8].
\end{eqnarray}
Note that if we satisfy the conditions 
\begin{eqnarray}
1 + \gamma & \neq & 0 \nonumber \\
1 + \gamma^2 & \neq & 0  \nonumber \\
1 + \gamma^3 & \neq & 0 \label{eq:notzero}
\end{eqnarray}
then, all the entires of the vector ${\bf c}$ are non-zeros. The conditions in \eqref{eq:notzero} are trivially satisfied for the field sizes that are needed for the broadcast model (e.g., even $q \geq 8$ is sufficient, since $\gamma$ is primitive.). Next, consider the codes $\mathcal{C}_{A}$ and $\mathcal{C}_{B}$  generated by the rows of $A$ and $B$, and observe that a generator matrix $\widehat{G}$ for the intersection 
$\mathcal{C}_{A} \cap \mathcal{C}_{B}$ is given by
\begin{eqnarray}
\widehat{G}_{m \times 9m} & = &  \left[ \begin{array}{cccc} {\bf c} & & & \\ & {\bf c} & & \\ & & \ddots & \\ & & & {\bf c} \end{array} \right].
\end{eqnarray}
We are now ready to calculate the quantities $\theta_A, \theta_B$ and $\theta$ in \eqref{eq:alphaA}-\eqref{eq:alpha}, and further apply Theorem \ref{thm:cost_broadcast} to calculate the communication cost for broadcasting. We assume that $m > 2\epsilon$. It is straightforward to see that the quantities $\theta_A, \theta_B$ and $\theta$ are given as follows:
\begin{eqnarray}
\theta_A & = & \min_{\substack{S \subseteq [n], |S| = 2\epsilon \\ S \text{ is a } 2\epsilon\text{-core of }\mathcal{C}_A^{\perp}}} 
\rho(\widetilde{H}|_S) \ = \ \left\lceil \frac{2\epsilon}{4}\right\rceil, \\
\theta_B & = & \min_{\substack{S \subseteq [n], |S| = 2\epsilon \\ S \text{ is a } 2\epsilon\text{-core of }\mathcal{C}_B^{\perp}}}
\rho(\widetilde{H}|_S) \ = \ \left\lceil \frac{2\epsilon}{4}\right\rceil,  \\
\theta & = & \min(\theta_A, \theta_B) \ = \ \left\lceil \frac{2\epsilon}{4}\right\rceil. 
\end{eqnarray}
The communication cost for broadcasting is then given by $\ell = 4\epsilon - \theta \approx 3.5\epsilon$. Note that if we individually update the two nodes, the overall communication will be $4\epsilon$. This completes our example.

\subsection{Extension to More than Two Destinations} \label{sec:3dest}

In this section, we briefly discuss possibility of extending the theory developed for $2$ destinations above, to more than $2$ destinations. In general when one employs an $[N, K]$ code such that any $K$ nodes can be used to recover the original data, one is interested in a theory that can extend to up to $K$ destinations. For updating more than $K$ destinations simultaneously, it is sufficient to use the encoding for updating the first $K$ destinations, since any $K$ nodes have all data.

It is straightforward to prove an extension of Theorem \ref{thm:bcast_nouse}, corresponding to the case when the $K$ codes, say $\mathcal{C}_{A_1}, \ldots, \mathcal{C}_{A_K}$, corresponding to the $K$ nodes do not have any pairwise intersection. In this case the overall communication cost is simply given by $\ell = \sum_{i = 1}^{K} \min(\rho(A_i), 2\epsilon)$. The case when there is non-trivial pair-wise intersection, but no triple intersection (i.e., $\mathcal{C}_{A_1} \cap \mathcal{C}_{A_2} \cap \mathcal{C}_{A_3} = \{ {\bf 0} \}$, etc.) can be analyzed along the lines of analysis done above for $2$ destinations in Theorem \ref{thm:cost_broadcast}. A lower bound on the communication cost quickly follows by using the lower bound in Theorem \ref{thm:cost_broadcast}, and is given by $\ell \geq 2K\epsilon - \sum_{i, j \in [K], i \neq j}\theta_{i, j}$, where $\theta_{i, j} = \min(\theta_{A_i}, \theta_{A_j})$, and where $\theta_{A_i}, 1 \leq i \leq K$ is defined along the lines of \eqref{eq:alphaA}. The achievability part, to show the optimality of the above lower bound seems non-trivial and requires work along two directions. The first direction involves a minor generalization of the notion of sandwiched MRSC, and this is rather straightforward. Recall that in Lemma \ref{lem:MRSC_subcodes}, we showed the existence of MRSCs containing a given subcode that satisfies a certain necessary condition, given by \eqref{eq:necessary_MRC_subcode}. We now need to show the existence of MRSCs that contain multiple non-intersecting subcodes; to be precise $K-1$ non-intersecting subcodes. One can take the direct sum of these $K-1$ subcodes, and it is sufficient to find an MRSC that contains the sum subcode. This can be done once again via Lemma \ref{lem:MRSC_subcodes}. The sum subcode must satisfy the necessary condition given by \eqref{eq:necessary_MRC_subcode}. The second direction is the hard part, where one needs to find the overall encoder along the lines of the proof of achievability in Theorem \ref{thm:cost_broadcast}. Recall that in the achievability proof, we first find a $\theta$-dimensional MRSC of the intersection, and then use the sandwiched MRSC result to arrive at the overall encoder. For the case of $K$ destinations, one possible approach is to first find individually $\theta_{i, j}$-dimensional MRSC of the respective intersection. We then find sum subcode of these individual MRSCs. The tricky part is to ensure that the sum subcode satisfies the necessary condition given by \eqref{eq:necessary_MRC_subcode} so that one can then apply Lemma \ref{lem:MRSC_subcodes} to get the overall encoder. While  $\theta_{i, j}$-dimensional MRSCs individually satisfy the necessary condition, it is not at all obvious as to how to ensure the same for the sum code. We leave this as a problem for future exploration; it is hoped that the above discussion will prove useful for a future research along this direction.

It is worth noting that the setting of Example \ref{ex:bcast_use} is an instance of the case described above, where there is pair-wise intersection, but no triple intersection. In fact, one can replace the specific MBR code in Example \ref{ex:bcast_use} with any $[N, K, D] (\alpha, \beta)$ MBR code with $K \geq 3$, and the resultant setting will still be instance of the same case with no triple intersection. This is because in any MBR code, any pair-wise intersection is of dimension $\beta$, and any three nodes must store $\alpha + (\alpha - \beta)  + (\alpha - 2\beta) = 3(\alpha - \beta)$ independent symbols. The last statement follows from the rank accumulation profile of MBR codes~\cite{non_achie_interior}. It is straightforward to check that any triple intersection would contradict the above observation. 

Finally, we chose not to address the case of multiple destination with non-trivial triple intersection, partly because of the difficulties mentioned above even without triple intersection, and also partly because of lack of knowledge of practical storage codes (with $K \geq 3$) that contain triple intersection.

\section{Conclusions} \label{sec:conc}

To conclude, in this paper, we considered a zero-error function update problem motivated by practical distributed storage systems, in which coded elements need to be updated by users whose data is out of sync with the data stored in the system. We studied the problem in its generality, by considering any possible linear update function. Optimal schemes for minimizing communication cost were obtained for point-to-point as well as broadcast settings. Examples based on well-known distributed storage codes were used to illustrate the potential applicability of the theory that we developed for both the point-to-point and broadcast cases. The most interesting contribution of this paper is perhaps the connection of the optimal solutions for the function update problem to the notion of maximally recoverable codes, which were conventionally studied as codes suitable for data storage applications. The current work provides an alternate view on this important class of codes, for which low field size constructions are still perused actively by many researchers. We introduced the class of sandwiched maximally recoverable codes, and showed its importance in constructing optimal schemes for the broadcast setting. 

The work opens up several interesting practically motivated questions: $1)$ Given the fact the low field size constructions of maximally recoverable codes are still largely unknown, can we obtain schemes for the function update problem which have sub-optimal communication costs, but have improved field sizes? In this regard, it will be interesting to explore connections with approximate variations of MRCs like sector-disk codes~\cite{blaum_mds}, \cite{pmds_twosectors_blaum}, \cite{shum_pmds} or partial maximally recoverable codes~\cite{balaji_pvk}. $2)$ Can we enhance the system model to one which does not have exact knowledge of the the sparsity parameter $\epsilon$? Recall that $\epsilon$ indicates the amount of update the source message undergoes. For instance, the source could encode assuming an approximate value of $\epsilon$, and also send out a low-cost hash function which could be used to detect failure of decoding at the receiver. Assuming a feedback link back to the source, we can perform further rounds of communication that incrementally update the destination function until decoding succeeds. $3)$ Can we use the results in this work to build secure incremental digital signature schemes, perhaps based on the McEliece crypto system~\cite{courtois2001achieve}? We note that in Fig. \ref{fig:sys_model}, if we assume that the source message vector and update vector are arbitrary, the encoder's output does not seem to reveal much about either of these vectors.

\bibliographystyle{IEEEtran}
\bibliography{sparse_updates}

\appendices

\section{Proof of Lemma \ref{lem:MRC_alt}} \label{app:MRC_eq}

Note that Part $1$ is the definition of MRSC given in Definition \ref{defn:mrsc}. To see the equivalence of Parts $1$ and $2$, we simply note that any set $S \subset [n], |S| = k$ is a $k$-core of $\mathcal{C}_0^{\perp}$ if and only if $\rho(G_0|_S) = k$. Next, to see why Part $2$ implies Part $3$, let $S$ denote any $k$-core of $\mathcal{C}_0^{\perp}$, and consider the matrix $H|_{[n]\backslash S}$. If $\rho\left(H|_{[n]\backslash S}\right) < n-k$, then since $\rho(H) = n - k$, this means $\rho\left(H|_{S}\right) > 0$, which contradicts the assumption in Part $2$ that $S$ is a $k$-core of $\mathcal{C}^{\perp}$ as well. Hence we conclude that Part $2$ implies $\rho\left(H|_{[n]\backslash S}\right) = n-k$. The fact that Part $3$ implies Part $2$ also follows similarly. Next we will prove the equivalence of Parts $1$ and $4$. Clearly, Part $4$ implies Part $1$, since in Part $4$ we consider all sets of cardinality less than or equal to $k$. To see why Part $1$ implies Part $4$, consider any set $S \subset [n], |S| = k - x$, such that $\rho(G_0|_S) = k - x$, for some $x, 1 \leq x \leq k - 1$. Since $\rho(G_0) = t > k$, there exists a set $S' \subset [n], |S'| = k - x$ such that $|S \cap S'| = 0$ and $\rho(G_0|_{S \cup S'}) = k$. Now, from Part $1$, we know that $\rho(G|_{S \cup S'}) = k$, which then implies that $\rho(G|_S) = k - x$. This completes the proof of Lemma \ref{lem:MRC_alt}.

\section{Proof of Theorem \ref{thm:mrc_linearized}} \label{app:mrc_linearized}

The following fact will be used in the proof of Theorem \ref{thm:mrc_linearized}.

\vspace{0.1in}

\begin{fact}[Problem $3.33$, \cite{roth}] \label{fact:roth}
Consider the finite field $\mathbb{F}_Q = \mathbb{F}_{q^t}$ as a vector space over $\mathbb{F}_q$, and let  $\{\alpha_i \in \mathbb{F}_Q, 1 \leq i \leq k\}$ denote a set of $k, k \leq t$ linearly independent elements over $\mathbb{F}_q$. Then the $k \times k$ matrix 
\begin{eqnarray}
\left[  \begin{array}{cccc} \alpha_1 & \alpha_2 & \cdots & \alpha_k   \\
\alpha_1^q & \alpha_2^q & \cdots & \alpha_k^q \\
& & \vdots &  \\
\alpha_1^{q^{k-1}} & \alpha_2^{q^{k-1}} & \cdots & \alpha_k^{q^{k-1}} \end{array} \right].
\end{eqnarray}
is invertible over $\mathbb{F}_Q$.
\end{fact}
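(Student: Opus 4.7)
The plan is to prove Fact \ref{fact:roth} by the standard ``linearized polynomial'' argument, which contradicts singularity of the Moore matrix by counting roots.

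First I would assume, for contradiction, that the matrix is singular over $\mathbb{F}_Q$. Then there exist coefficients $c_0, c_1, \ldots, c_{k-1} \in \mathbb{F}_Q$, not all zero, that witness a linear dependence among the rows; equivalently, when evaluated column by column,
\[
\sum_{i=0}^{k-1} c_i \, \alpha_j^{q^i} \;=\; 0 \qquad \text{for every } j = 1, 2, \ldots, k.
\]
The key object is the associated $q$-polynomial (a.k.a.\ linearized polynomial)
\[
L(x) \;=\; \sum_{i=0}^{k-1} c_i \, x^{q^i} \;\in\; \mathbb{F}_Q[x],
\]
which is a nonzero polynomial of degree at most $q^{k-1}$ (and exactly $q^d$, where $d$ is the largest index with $c_d \neq 0$).

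The heart of the argument is that $L$, viewed as a function $\mathbb{F}_Q \to \mathbb{F}_Q$, is $\mathbb{F}_q$-\emph{linear}. This uses two facts about the Frobenius in characteristic $p$: $(u+v)^{q^i} = u^{q^i} + v^{q^i}$, and $\lambda^{q^i} = \lambda$ for every $\lambda \in \mathbb{F}_q$. Granting $\mathbb{F}_q$-linearity, the hypothesis that $L(\alpha_j) = 0$ for $j = 1, \ldots, k$ combined with the linear independence of $\alpha_1, \ldots, \alpha_k$ over $\mathbb{F}_q$ forces $L$ to vanish on the entire $\mathbb{F}_q$-span $V = \mathbb{F}_q \alpha_1 \oplus \cdots \oplus \mathbb{F}_q \alpha_k \subseteq \mathbb{F}_Q$. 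This span contains exactly $q^k$ distinct elements, so $L$ has at least $q^k$ roots in the field $\mathbb{F}_Q$.

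Finally, I would invoke the standard fact that a nonzero polynomial over a field has at most as many roots as its degree: since $\deg L \leq q^{k-1} < q^k$, this contradicts the lower bound of $q^k$ roots found above. Therefore no such nontrivial dependence exists, and the matrix is invertible. The only step requiring any care is the verification of $\mathbb{F}_q$-linearity of $L$ (essentially the Frobenius identity), and the bookkeeping that $L$ is genuinely nonzero as a polynomial whenever the $c_i$ are not all zero so that the ``number of roots $\leq$ degree'' principle can be applied; neither presents a real obstacle.
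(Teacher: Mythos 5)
Your proof is correct and complete. The paper itself does not prove this Fact---it is stated as a citation to Problem 3.33 of Roth's \emph{Introduction to Coding Theory}---so there is no ``paper proof'' to compare against; your argument (a nontrivial row dependence yields a nonzero $q$-linearized polynomial $L(x)=\sum_{i=0}^{k-1}c_i x^{q^i}$ of degree at most $q^{k-1}$; $\mathbb{F}_q$-linearity of $L$ promotes the $k$ vanishing conditions $L(\alpha_j)=0$ to vanishing on the entire $q^k$-element $\mathbb{F}_q$-span of the $\alpha_j$, contradicting the bound on the number of roots of a nonzero polynomial) is precisely the standard one that the cited source has in mind, and every step you flag as needing care (Frobenius additivity, $\lambda^{q}=\lambda$ for $\lambda\in\mathbb{F}_q$, and nonvanishing of $L$ as a polynomial) holds for exactly the reasons you state.
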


\vspace{0.1in}

The above statement is equivalent to saying that a degree $q^{k-1}$- linearized polynomial $f(x) \in \mathbb{F}_Q[x]$ given by 
\begin{eqnarray}
f(x) & = & \sum_{i=0}^{k-1} a_ix^{q^{i}},  \ a_i \in \mathbb{F}_{q^t}
\end{eqnarray}
can be uniquely identified given its evaluations at $k$ points $\{\alpha_1, \cdots, \alpha_k\}$ that are linearly independent over $\mathbb{F}_q$.

\vspace{0.1in}

{\em Proof of Theorem \ref{thm:mrc_linearized}: }
First of all, we note that the code $\mathcal{C}$ obtained in Construction \ref{constr:mrc_linearized} is indeed a subcode of $\mathcal{C}_0^{(Q)}$. This follows from the fact that every row of the generator matrix $G$ of $\mathcal{C}$ can be written as an $\mathbb{F}_Q$-linear combination of the rows of $G_0$ as follows:
\begin{eqnarray}
[\beta_1^{q^{i-1}} \  \beta_2^{q^{i-1}}  \ \cdots \ \beta_n^{q^{i-1}} ] & = & [\alpha_1^{q^{i-1}} \ \alpha_2^{q^{i-1}} \ \cdots \ \alpha_t^{q^{i-1}} ]G_0, \nonumber \\
& & 1 \leq i \leq k.
\end{eqnarray}
The above equation follows by combining \eqref{eq:revise_1} and the fact that $G_0$ is a matrix over $\mathbb{F}_q$.

Next, without loss of generality, suppose that $\rho\left(G_0|_{[k]}\right) = k$. It then follows that the  elements $\{\beta_1, \cdots, \beta_k\}$ given by
\begin{eqnarray}
[\beta_1 \ \cdots \ \beta_k] & = & [\alpha_1 \  \cdots \ \alpha_t]G_0|_{[k]}, 
\end{eqnarray}
are linearly independent over $\mathbb{F}_q$. Combining the above statement along with Fact \ref{fact:roth}, we  get that the matrix $G|_{[k]}$ given by 
\begin{eqnarray}
G|_{[k]} & = & \left[  \begin{array}{cccc} \beta_1 & \beta_2 & \cdots & \beta_k   \\
\beta_1^q & \beta_2^q & \cdots & \beta_k^q \\
& & \vdots &  \\
\beta_1^{q^{k-1}} & \beta_2^{q^{k-1}} & \cdots & \beta_k^{q^{k-1}} \end{array} \right].
\end{eqnarray}
is invertible. This proves that $\text{dim}(\mathcal{C}) = k$, and also proves that $\mathcal{C}$ is an $[n, k]$ MRSC (see Definition~\ref{defn:mrsc}) of $\mathcal{C}_0^{(Q)}$.

\section{Proof of Converse Part of Lemma \ref{lem:MRSC_via_shortening}} \label{app:converse_MRC_extension}

Consider the $[n, t]$ code $\mathcal{C}_0$, and let $\mathcal{C}$ denote an $[n, k]$ MRSC of $\mathcal{C}_0$, where $k = t - \Delta$ for some $\Delta < t$. We will show that there exists an $[n + \Delta, t]$ extension $\mathcal{C}_0^{(e)}$ of $\mathcal{C}_0$ such that for any $S \subseteq [n], |S| = t - \Delta$, we have 
\begin{eqnarray} \label{eq:extend_property_1}
\rho\left(G_0^{(e)}|_{S \cup \{n+1, \ldots, n+\Delta\}}\right) & = & \rho\left(G_0|_{S}\right) + \Delta.
\end{eqnarray}
Here, $G_0$ and $G_0^{(e)}$ denote generator matrices of the codes $\mathcal{C}_0$ and $\mathcal{C}_0^{(e)}$, respectively. Towards this, consider the $(n-k) \times n$ parity check matrix $H$ of $\mathcal{C}$ given by 
\begin{eqnarray}
H & = & \left[ \begin{array}{c} H_0  \\ H_e \end{array} \right],
\end{eqnarray}
where $H_0$ is a parity check matrix of $\mathcal{C}$, and $H_e$ is some $(t - k) \times n$ matrix over $\mathbb{F}_q$ such that $\rho(H_e) = t - k$. We define the candidate code $\mathcal{C}_0^{(e)}$ for the extension as one whose parity check matrix $H_0^{(e)}$ is given by 
\begin{eqnarray}
H_0^{(e)} & = & \left[ \begin{array}{cc} H_0 & 0_{n-t \times \Delta}  \\ H_e & I_{\Delta} \end{array} \right].
\end{eqnarray}
First of all, to see that $\mathcal{C}_0^{(e)}$ is indeed an extension of $\mathcal{C}_0$, consider the code $\mathcal{C}'$ generated by $G_0^{(e)}|_{[n]}$, and  observe that 
\begin{eqnarray} \label{eq:temp_appB_1}
\mathcal{C}'^{\perp} & = & \mathcal{C}_0^{\perp}.
\end{eqnarray}
The above statement follows from the fact that the dual of the punctured code $\mathcal{C}'$ is the shortened code of the dual code. Clearly, \eqref{eq:temp_appB_1} implies that $\mathcal{C}'$ = $\mathcal{C}$, and hence $\mathcal{C}_0^{(e)}$ is an extension of $\mathcal{C}_0$. Next, we will show that $\mathcal{C}_0^{(e)}$ satisfies \eqref{eq:extend_property_1}. Towards this, consider any set $\mathcal{S} \subset [n], |S| = k' \leq k$ such that $\rho\left(G_0|_{S}\right) = k'$. We will prove that 
\begin{eqnarray} \label{eq:temp_appB_2}
\rho\left(G_0|_{S \cup \{n+1, \ldots, n+\Delta\}}\right) = k' + \Delta.
\end{eqnarray}
Note that proving \eqref{eq:temp_appB_2} is sufficient to prove \eqref{eq:extend_property_1}. We will prove \eqref{eq:temp_appB_2} by the  method of contradiction by supposing that $\rho\left(G_0|_{S \cup \{n+1, \ldots, n+\Delta\}}\right) < k' + \Delta$. In this case, we know that there exists a non-zero vector $\bf{c}^{(e)} \in \left(\mathcal{C}_0^{(e)}\right)^{\perp}$ having a support $T$ such that 
\begin{itemize}
	\item $T \subset S \cup \{n + 1, \ldots, n + \Delta\}$, 
	\item $T \cap \{n + 1, \ldots, n + \Delta\}$ is not empty, and
	\item $T \cap S$ is also not empty (this follows since $H$ has full row rank). 
\end{itemize}
Now, if we consider the vector $\bf{c} = \bf{c}^{(e)}|_{S}$, this would mean that the non-zero vector $\bf{c} \in \mathcal{C}^{\perp}$ (but clearly $\bf{c} \notin \mathcal{C}_0^{\perp}$). This then means that $\rho\left(G|_S\right) < k'$. Using Part $4.$ of Lemma \ref{lem:MRC_alt}, we see that this contradicts  our assumption that $\mathcal{C}$ is an $[n, k]$ MRSC of $\mathcal{C}_0$. Hence we conclude that $\eqref{eq:temp_appB_2}$ is indeed true, and this completes the proof of the converse.

\section{Proof of Lemma \ref{lem:MRSC_subcodes}} \label{app:proof_MRSC_subcodes}

Suppose that we are given an $[n, t]$ code $\mathcal{C}_0$ over $\mathbb{F}_q$, and an $[n, s]$ subcode $\widehat{\mathcal{C}}$ of $\mathcal{C}_0$ such that the following condition is satisfied: 
\begin{eqnarray}
\rho(G_0|_S) = k & \implies & \rho\left(\widehat{G}|_S\right) = s, \ S \subseteq [n], |S| = k, \label{eq:necessary_MRC_subcode_app}
\end{eqnarray}
where $k$ is such that $s < k < t$. We need to show that there exists an $[n, k]$ code $\mathcal{C}$ such that 
\begin{itemize}
	\item[(P$1$)] $\widehat{\mathcal{C}} \subset \mathcal{C}$ and 
	\item[(P$2$)] $\mathcal{C}$ is an MRSC of $\mathcal{C}_0$,
\end{itemize}
whenever $q > ((t - k)s + 1){n-1 \choose k}$. We use the equivalent definition of MRSCs given in Part $3$. of Lemma \ref{lem:MRC_alt} for proving the above statement. Let $H_0$ denote a  parity check matrix of the code $\mathcal{C}_0$. A parity check matrix $H$ for the code $\mathcal{C}$ can then be given by $H = \left[ \begin{array}{c} H_0 \\ \hline \\ \Delta H \end{array}   \right]$, where $\Delta H$ is a $(t-k) \times n$ matrix over $\mathbb{F}_q$. We will show that it is possible to identify $\Delta H$ under sufficiently large $q$ such that both (P$1$) and (P$2$) are satisfied.
Towards this, assume that the entries of $\Delta H$ are filled with the indeterminates $x_{i,j}, 1 \leq i \leq t-k, 1 \leq j \leq n$. Also, without loss of generality, assume that the generator matrix of the code $\widehat{\mathcal{C}}$ is given by $\widehat{G}  =  \left[ I_{s} \ | \ P_{s \times (n-s)} \right]$. Now, note that in order to satisfy (P$1$), it is sufficient if the $\{x_{i,j}\}$s satisfy the following relations:
\begin{eqnarray} \label{eq:inderminates}
\left[ \begin{array}{c} x_{i, 1} \\ x_{i, 2} \\ \vdots \\ x_{i, s} \end{array} \right] & = & -P_{s \times (n-s)} \left[ \begin{array}{c} x_{i, s+1} \\ x_{i, s+2} \\ \vdots \\ x_{i, n} \end{array} \right], \nonumber \\
&&  \forall \  i \in \{1, \ \cdots \ , \ t-k\}.
\end{eqnarray}
The above equation simply says that each of the $(t-k)$ $n$-length vectors $[x_{i, 1},  \ \ldots, \ x_{i,n}]^T, 1  \leq i \leq t-k$ must be picked from within the code $\widehat{\mathcal{C}}^{\perp}$.
Thus, let us assume that the $\{x_{i,j}\}$s are related as given by \eqref{eq:inderminates}. Next, in order to satisfy (P$2$), we know from Part $3$ of Lemma \ref{lem:MRC_alt} that it is sufficient if the following property is satisfied:
\begin{enumerate}[(P3)]
\item  for any $S \subseteq [n], |S| = k$ which is a $k$-core of $\mathcal{C}_0^{\perp}$, we have $\text{rank}(H|_{[n]\backslash S}) = n-k$.
\end{enumerate} 
Let us now see why it is always possible to pick the $\{x_{i,j}\}$s simultaneously satisfying \eqref{eq:inderminates} and (P$3$). Towards this, let $S$ denote a $k$-core of $\mathcal{C}_0^{\perp}$, and consider the $(n-k) \times (n-k)$ matrix  $H|_{[n]\backslash S}$. For satisfying (P$3$), we need to pick the $\{x_{i,j}\}$s such that $H|_{[n]\backslash S}$ is invertible. Of course, this can be done only if the matrix $H|_{[n]\backslash S}$ is not trivially rank deficient even before we pick the $\{x_{i,j}\}$s. We claim that there is no such trivial rank deficiency either due to the rows of the matrix $H_0$ or due to the relations in \eqref{eq:inderminates}. To see why this is the case, observe the following two points:
\begin{itemize}
\item Since $S$ is a $k$-core of $\mathcal{C}_0^{\perp}$, we have $\rho\left(H_0|_{[n]\backslash S}\right) = n-t$.
\item Secondly, if we restrict our attention to $\Delta H|_{[n]\backslash S}$, then the condition in \eqref{eq:necessary_MRC_subcode_app} ensures that there is no linear dependency among the columns of $\Delta H|_{[n]\backslash S}$, even when the $\{x_{i,j}\}s$ are picked such that \eqref{eq:inderminates} is satisfied. To see why this is true, suppose that the relations in \eqref{eq:inderminates} imply a linear dependency among the columns of $\Delta H|_{[n]\backslash S}$.  This is possible only if there exists a non-zero vector $\widehat{\bf{c}} \in \widehat{\mathcal{C}}$ such that $\text{supp}(\widehat{\bf{c}}) \subseteq [n]\backslash S$, which then implies that $\rho\left(\widehat{G}|_S\right) < s$. However this contradicts our assumption in \eqref{eq:necessary_MRC_subcode_app} that $\rho\left(\widehat{G}|_S\right) = s$, and we thus conclude that the relations in \eqref{eq:inderminates} do not introduce any linear dependency among the columns of $\Delta H|_{[n]\backslash S}$.
\end{itemize}
For the rest of the proof, let us define the multivariate polynomial $f_S(\{x_{i,j}\})$ as $f_S(\{x_{i,j}\}) = \text{det}(H|_{[n]\backslash S})$, and also let $f(\{x_{i,j}\}) = \Pi_{S \in \mathcal{T}} f_S(\{x_{i,j}\})$, where $\mathcal{T}$ denotes the collection of all $k$-cores of $\mathcal{C}_0^{\perp}$. We substitute for the $\{x_{i, j}, 1 \leq i \leq t - k, 1 \leq j \leq s \}$s in the polynomial $f(\{x_{i,j}\})$ using \eqref{eq:inderminates}, and then note that the degree of any $x_{i, j}, 1 \leq i \leq t - k, s+1 \leq j \leq n$ in the resultant polynomial is upper bounded by ${n \choose k}$. To see why this is true, we make the following two observations:
\begin{itemize}
	\item The degree of any $x_{i, j}, 1 \leq i \leq t - k, s+1 \leq j \leq n$ in any of the individual polynomials $f_S(\{x_{i,j}\})$ is at most $1$. This is true even after we substitute for the $\{x_{i, j}, 1 \leq i \leq t - k, 1 \leq j \leq s \}$s in the polynomial $f_S(\{x_{i,j}\})$ using \eqref{eq:inderminates}. The reason why the relations in \eqref{eq:inderminates} do not increase the degree of any individual $x_{i, j}$ is because of the fact that these relations only relate the elements of the same row in the matrix $\Delta H$, whereas any monomial in the determinant polynomial $f_S(\{x_{i,j}\})$ consists of $n-k$ terms that are picked from $n-k$ distinct rows. 
	\item The cardinality of the set $\mathcal{T}$ is upper bounded by ${n \choose k}$.
\end{itemize} 
The proof now follows by an application of Combinatorial Nullstellensatz~\cite{Alo}) which  says that there exists a non-zero evaluation of $f(\{x_{i,j}\})$ in $\mathbb{F}_q$, if $q > {n \choose k}$.

\section{Proof of Theorem \ref{thm:mrc_linearized_sandwiched}} \label{app:mrc_linearized_sandwiched}
First of all, it is clear that $\widehat{\mathcal{C}}^{(Q)} \subseteq  \mathcal{C}$, where $\mathcal{C}$ is as obtained in Construction \ref{constr:mrc_linearized}. Secondly, to see why the code $\mathcal{C}$ is a subcode of $\mathcal{C}_0^{(Q)}$, we note that the last $(k-s)$ rows of the generator matrix $G$ of $\mathcal{C}$ can be written as an $\mathbb{F}_Q$-linear combination of the rows of the matrix $B$, as follows:
\begin{eqnarray}
[\beta_1^{q^{i-1}} \  \beta_2^{q^{i-1}}  \ \cdots \ \beta_n^{q^{i-1}} ] & = & [\alpha_1^{q^{i-1}} \ \alpha_2^{q^{i-1}} \ \cdots \ \alpha_t^{q^{i-1}} ]B, \nonumber \\
& & 1 \leq i \leq k-s.
\end{eqnarray}
The above equation follows by combining \eqref{eq:beta_alpha_sw} and the fact that $B$ is a matrix over $\mathbb{F}_q$. Consequently, every row of $G$ is an $\mathbb{F}_Q$-linear combination of the rows of the matrix $G_0$, and hence $\mathcal{C}$ is indeed a subcode of $\mathcal{C}_0^{(Q)}$.

Next, without loss of generality, suppose that $\rho\left(G_0|_{S}\right) = k$, where $S = \{1, \ldots, k\}$.
We know from \eqref{eq:necessary_MRC_subcode_rep} that $\rho\left(\widehat{G}|_{S}\right) = s$. Once again, without loss of generality, assume that the matrix $\widehat{G}|_{S}$ is given by
\begin{eqnarray} \label{eq:Ghat}
\widehat{G}|_{S} & = & \left[I_{s} \ | \ \widehat{P}  \right],
\end{eqnarray}
for some $s \times (k - s)$ matrix $\widehat{P}$ over $\mathbb{F}_q$. Consider the  matrix $B$ introduced in \eqref{eq:matrixB}, and assume that $B|_S = [ B_1 \ \ B_2]$, where $B_1$ and $B_2$ are matrices over $\mathbb{F}_q$ of sizes $(t-s) \times s$ and $ (t-s) \times (k-s)$, respectively. The matrix $G_0|_S$ can now be written in terms of these sub matrices as follows:
\begin{eqnarray} \label{eq:G0S}
G_0|_{S} & = & \left[ \begin{array}{c|c}  I_{s} & \widehat{P} \\ \hline \\ B_1 &  B_2 \end{array} \right],
\end{eqnarray}
which can be row-reduced and written in the following form:
\begin{eqnarray} \label{eq:G0S_rr}
\text{row-reduced}\left(G_0|_{S}\right) & = & \left[ \begin{array}{c|c}  I_{s} & \widehat{P} \\ \hline \\ 0_{(t-s) \times s} &  B_2 - B_1\widehat{P} \end{array} \right].
\end{eqnarray}
We now make the following observations given the row-reduced form of $G_0|_S$ in \eqref{eq:G0S_rr}.
\begin{itemize}
\item Since $\rho\left(G_0|_{S}\right) = k$, it must be true that 
\begin{eqnarray} \label{eq:B2minusB1}
\rho\left( B_2 - B_1\widehat{P} \right) = k - s.
\end{eqnarray}
\item  Using the fact that $[\alpha_1, \cdots, \alpha_{t-s}]B = [\beta_1, \cdots, \beta_n]$ from \eqref{eq:beta_alpha_sw}, we get that 
\begin{eqnarray}
[\alpha_1, \cdots, \alpha_{t-s}]\left( B_2 - B_1\widehat{P} \right) & = & [\beta_{s+1}, \cdots, \beta_k]  - \nonumber \\ 
& & [\beta_{1}, \cdots, \beta_s]\widehat{P} \nonumber \\
& \triangleq & [ \gamma_1, \cdots, \gamma_{k-s}], \label{eq:gamma_in_terms_of_alpha}
\end{eqnarray} 
where we define the parameter $\gamma_i$ as follows:
\begin{eqnarray}
\gamma_i & = & \beta_{s+i} - \sum_{j = 1}^s \beta_j \widehat{P}(j, i), \ \ 1 \leq i \leq k - s.
\end{eqnarray}
Here $\widehat{P}(j, i)$ denotes the $(j, i)^{\text{th}}$ entry in the matrix $\widehat{P}$. Since $\widehat{P}$ is a matrix over  $\mathbb{F}_q$, we further note that 
\begin{eqnarray} \label{eq:gamma_pow_q}
\gamma_i^q & = & \beta_{s+i}^q - \sum_{j = 1}^s \beta_j^q \widehat{P}(j, i), \ \ 1 \leq i \leq k - s.
\end{eqnarray}
\end{itemize}
Let us next consider the matrix $G|_S$ and show that $\rho\left( G|_S \right) = k$. Towards this, first of all note from \eqref{eq:matrixG} and \eqref{eq:Ghat} that $G|_S$ can be written as follows:
\begin{eqnarray} \label{eq:G|S}
G|_S &  = & \left[ \begin{array}{c|c}  I_{s} & \widehat{P} \\ \hline \\ \Lambda_1 &  \Lambda_2 \end{array} \right],
\end{eqnarray}
where 
\begin{eqnarray}
\Lambda_1 \ = \ \left[  \begin{array}{cccc} \beta_1 & \beta_2 & \cdots & \beta_{s}   \\
\beta_1^q & \beta_2^q & \cdots & \beta_{s}^q \\
& & \vdots &  \\
\beta_1^{q^{k-s-1}} & \beta_2^{q^{k-s-1}} & \cdots & \beta_{s}^{q^{k-s-1}} \end{array} \right] 
\end{eqnarray}
and 
\begin{eqnarray}
\Lambda_2 \ = \ \left[  \begin{array}{cccc} \beta_{s+1} & \beta_{s+2} & \cdots & \beta_k   \\
\beta_{s+1}^q & \beta_{s+2}^q & \cdots & \beta_k^q \\
& & \vdots &  \\
\beta_{s+1}^{q^{k-s-1}} & \beta_{s+2}^{q^{k-s-1}} & \cdots & \beta_k^{q^{k-s-1}} \end{array} \right].
\end{eqnarray}
The matrix $G|_S$ in \eqref{eq:G|S} can be row-reduced and written as follows:
\begin{eqnarray} \label{eq:G|Srr}
\text{row-reduced}\left(G|_S\right) &  = & \left[ \begin{array}{c|c}  I_{s} & \widehat{P} \\ \hline \\ 0_{(k-s) \times s} &  \Lambda_2 - \Lambda_1 \widehat{P} \end{array} \right].
\end{eqnarray}
In order to prove that $\rho\left( G|_S \right) = k$, note that it is enough if we show that $\rho\left( \Lambda_2 - \Lambda_1 \widehat{P} \right) = k-s$. Towards this, observe that $\Lambda_2 - \Lambda_1 \widehat{P}$ can be written as in \eqref{eq:lamdadiff}. Equation 
\begin{figure*}
\hrule
\begin{eqnarray} \label{eq:lamdadiff}
\Lambda_2 - \Lambda_1 \widehat{P} & = &  \left[ \begin{array}{c} \left[\beta_{s+1}, \cdots,  \beta_{k}\right] -  \left[\beta_{1}, \cdots,  \beta_{s}\right]\widehat{P}   \\
\left[\beta_{s+1}^q, \cdots,  \beta_{k}^q\right] -  \left[\beta_{1}^q, \cdots,  \beta_{s}^q\right]\widehat{P} \\
\vdots \\
\left[\beta_{s+1}^{q^{k-s-1}}, \cdots,  \beta_{k}^{q^{k-s-1}}\right] -  \left[\beta_{1}^{q^{k-s-1}}, \cdots,  \beta_{s}^{q^{k-s-1}}\right]\widehat{P}
\end{array}   \right] \nonumber \\
& = & \ \left[  \begin{array}{cccc} \gamma_1 & \gamma_2 & \cdots & \gamma_{k-s}   \\
\gamma_1^q & \gamma_2^q & \cdots & \gamma_{k-s}^q \\
& & \vdots &  \\
\gamma_1^{q^{k-s-1}} & \gamma_2^{q^{k-s-1}} & \cdots & \gamma_{k-s}^{q^{k-s-1}} \end{array} \right] \label{eq:temp_ss_proof_1},
\end{eqnarray}
\end{figure*}
where \eqref{eq:temp_ss_proof_1} follows from \eqref{eq:gamma_pow_q}.  Recall that the $\{\gamma\}$s are related to the  $\{\alpha\}s$ as given in \eqref{eq:gamma_in_terms_of_alpha}. From \eqref{eq:B2minusB1} we know that $\rho\left( B_2 - B_1\widehat{P} \right) = k - s$, from which it follows that the $\{\gamma\}$s are linearly independent over $\mathbb{F}_q$. Now, we can apply Fact \ref{fact:roth} to see that the matrix $\Lambda_2 - \Lambda_1 \widehat{P}$ in \eqref{eq:temp_ss_proof_1} in indeed invertible, i.e., $\rho\left( \Lambda_2 - \Lambda_1 \widehat{P} \right) = k-s$. From this it follows that $\rho\left(G|_S\right) = k$, and this completes our proof.

\begin{IEEEbiographynophoto}{N. Prakash} received his Bachelors (Electronics and Communication Engineering), Masters (Electrical Engineering) and PhD degrees (Electrical Communication Engineering) from College of Engineering, Thiruvananthapuram, Indian Institute of Technology (IIT) Madras, and Indian Institute of Science (IISc), Bangalore in 2004, 2006 and 2015, respectively. He received the 2006 Siemens prize for the best academic performance, Masters Electrical Engineering, IIT-Madras. His industry experience includes role as an algorithm design engineer, Beceem Communications (2004-2006), and multiple internships with the Advanced Technology Group of NetApp, Bangalore.  He is currently a postdoctoral associate at the Department of Electrical Engineering and Computer Science, Massachusetts Institute of Technology (MIT), Cambridge, USA. His current research is broadly focused on the design and implementation of algorithms for  networked-storage and distributed computing platforms. Specifically, he is interested in the design of algorithms for consistent object storage in asynchronous platforms, erasure coding for fault-tolerant storage, data compression and security. 
\end{IEEEbiographynophoto}

\begin{IEEEbiographynophoto}{Muriel M{\'{e}}dard} is the Cecil H. Green Professor in the Electrical Engineering and Computer Science (EECS) Department at MIT and leads the Network Coding and Reliable Communications Group at the Research Laboratory for Electronics at MIT. She has co-founded three companies to commercialize network coding, CodeOn, Steinwurf and Chocolate Cloud. She has served as editor for many publications of the Institute of Electrical and Electronics Engineers (IEEE), of which she was elected Fellow, and she has served as Editor in Chief of the IEEE Journal on Selected Areas in Communications. She was President of the IEEE Information Theory Society in 2012, and served on its board of governors for eleven years. She has served as technical program committee co-chair of many of the major conferences in information theory, communications and networking. She received the 2009 IEEE Communication Society and Information Theory Society Joint Paper Award, the 2009 William R. Bennett Prize in the Field of Communications Networking, the 2002 IEEE Leon K. Kirchmayer Prize Paper Award and several conference paper awards. She was co-winner of the MIT 2004 Harold E. Edgerton Faculty Achievement Award, received the 2013 EECS Graduate Student Association Mentor Award and served as Housemaster for seven years. In 2007 she was named a Gilbreth Lecturer by the U.S. National Academy of Engineering. She received the 2016 IEEE Vehicular Technology James Evans Avant Garde Award, the 2017 Aaron Wyner Distinguished Service Award from the IEEE Information Theory Society and the 2017 IEEE Communications Society Edwin Howard Armstrong Achievement Award.
\end{IEEEbiographynophoto}

\end{document}